\pgfplotsset{compat=1.14}
\DeclareMathOperator{\Argmin}{argmin}
\DeclareMathOperator{\dist}{dist}
\DeclareMathOperator{\probability}{\mathbb{P}}
\def\eps{\varepsilon}
\def\cost{\textsc{Cost}}
\def\spread{\textsc{Sp}}
\def\dartof#1{\vec{#1}}
\def\dartsof#1{\dartof{#1}}
\def\supply{\mu}
\def\map{\tau}
\def\cost{\textsc{cost}}
\def\th{\mathrm{th}}
\def\coor{\mathrm{coor}}
\def\D{\mathbf{d}}
\def\S{s} % set of legal shifts
\def\blb{bl} % blob
\def\bdx{\square} % bounding box
\def\cells{\mathbb{C}}
\def\cld{\mathrm{children}}
\def\prnt{\mathrm{parent}}
\def\splitnode{\eta}
\def\splitpoint{r}
\def\splittree{S}
\def\potential{\phi}
\newtheorem{lemma}{Lemma}[section]
\newtheorem{theorem}[lemma]{Theorem}
\newtheorem{corollary}[lemma]{Corollary}
\begin{document}

\begin{titlepage}
  \title{A deterministic near-linear time approximation scheme for geometric transportation}

\author{
Emily Fox%
\thanks{Department of Computer Science,
  The University of Texas at Dallas; \url{emily.fox@utdallas.edu}.
  Supported in part by NSF grant CCF-1942597.}
\and Jiashuai Lu%
\thanks{Department of Computer Science,
  The University of Texas at Dallas; \url{jiashuai.lu@utdallas.edu}.
  This author is currently at Google.
  Supported in part by NSF grant CCF-1942597.}
}
%\author{}

%\DRAFT

\maketitle

\begin{abstract}
  Given a set of points $P = (P^+ \sqcup P^-) \subset \mathbb{R}^d$ for some constant $d$ and a supply function
  $\mu:P\to \mathbb{R}$ such that $\mu(p) > 0~\forall p \in P^+$, $\mu(p) < 0~\forall p \in
  P^-$, and $\sum_{p\in P}{\mu(p)} = 0$, the geometric transportation problem
  asks one to find a transportation map $\tau: P^+\times P^-\to \mathbb{R}_{\ge 0}$ such that
  $\sum_{q\in P^-}{\tau(p, q)} = \mu(p)~\forall p \in P^+$, $\sum_{p\in P^+}{\tau(p, q)} =
  -\mu(q)~
  \forall q \in P^-$, and the
  weighted sum of Euclidean distances for the pairs $\sum_{(p,q)\in P^+\times P^-}\tau(p, q)\cdot
  ||q-p||_2$ is minimized.
  We present the first deterministic algorithm that computes, in near-linear time, a transportation
  map whose cost is within a $(1 + \varepsilon)$ factor of optimal.
  More precisely, our algorithm runs in $O(n\varepsilon^{-(d+2)}\log^5{n}\log{\log{n}})$ time for
  any constant $\varepsilon > 0$.

  While a randomized $n\varepsilon^{-O(d)}\log^{O(d)}{n}$ time algorithm for this problem was discovered in the last few
  years, all previously known deterministic $(1 + \varepsilon)$-approximation algorithms run
  in~$\Omega(n^{3/2})$ time.
  A similar situation existed for geometric bipartite matching, the special case of geometric
  transportation where all supplies are unit, until a deterministic $n\varepsilon^{-O(d)}\log^{O(d)}{n}$
  time $(1 + \varepsilon)$-approximation algorithm was presented at STOC 2022.
  Surprisingly, our result is not only a generalization of the bipartite matching one to arbitrary
  instances of geometric transportation, but it also reduces the running time for all previously
  known $(1 + \varepsilon)$-approximation algorithms, randomized or deterministic, even for
  geometric bipartite matching.
  In particular, we give the first $(1 + \varepsilon)$-approximate deterministic algorithm for geometric bipartite matching and the first $(1 + \varepsilon)$-approximate deterministic or randomized algorithm for geometric transportation with no dependence on $d$ in the exponent of the running time's polylog.

  As an additional application of our main ideas, we also give the first randomized near-linear\linebreak
  $O(\varepsilon^{-2} m \log^{O(1)} n)$
  time $(1 + \varepsilon)$-approximation algorithm for the uncapacitated minimum cost flow (transshipment) problem
  in undirected graphs with arbitrary \emph{real} edge costs.
\end{abstract}

\setcounter{page}{0}
\thispagestyle{empty}
\end{titlepage}

\pagestyle{myheadings}
\markboth{Emily Fox and Jiashuai Lu}
%\markboth{A deterministic near-linear time approximation scheme for geometric transportation}
		{A deterministic near-linear time approximation scheme for geometric transportation}

\section{Introduction}
\label{sec:intro}

Let \(P \subset \R^d\) be a set of \(n\) points in \(d\)-dimensional Euclidean space, and let
\(\supply : P \to \R\) be a function assigning each point a \EMPH{supply} such that \(\sum_{p \in P}
\supply(p) = 0\).
Let \(P^+ = \Set{p \in P \mid \supply(p) > 0}\) and \(P^- = \Set{p \in P \mid \supply(p) < 0}\).
A \EMPH{transportation map} \(\tau : P^+ \times P^- \to \R_{\geq 0}\) is a non-negative assignment
to each ordered pair such that for all \(p \in P^+\) we have \(\sum_{q \in P^-} \map(p, q) =
\supply(p)\) and for all \(q \in P^-\) we have \(\sum_{p \in P^+} \map(p, q) = -\supply(q)\).
A common interpretation of this setting is to imagine each point~\(p \in P^+\) as a pile of earth of
volume \(\supply(p)\) and point \(p \in P^-\) to be a hole of volume \(-\supply(p)\).
A transportation map describes a way to move all of the earth from the piles to the holes.
Accordingly, the \EMPH{cost} of our transportation map is its total \emph{earth-distance} according
to the Euclidean norm, \(\cost(\map) := \sum_{(p, q) \in (P^+ \times P^-)} \map(p, q) \cdot ||q -
p||_2\).%
\footnote{Our results apply to any \(\ell_p\)-norm, but we stick with the \(\ell_2\)-norm to
simplify the presentation.}
Our goal is to find a transportation map of minimum cost \(\cost^*(P, \supply)\), a task we refer
to as the \EMPH{geometric transportation problem}.
Due to the analogy relating the geometric transportation problem to moving piles of earth, the
optimal cost \(\cost^*(P, \supply)\) is often called the \EMPH{earth mover's distance}.
The earth mover's distance is a discrete version of the \emph{\(1\)-Wasserstein distance} between
continuous probability distractions, and the continuous version of this problem has also been
referred to as the \emph{optimal transport} or \emph{Monge-Kantorovich} problem.
Along with being an interesting math problem in its own right, earth mover's distance has
applications to various topics in computer science such as shape matching and graphics
\cite{v-oton-08,gd-fcmua-04,rtg-emdmi-00,bpph-dilmt-11,srgb-emdds-14,cd-fcwb-14,gv-pwps-02,pc-cotad-19,sgpcbndg-cwdeo-15}.

The geometric transportation problem can be viewed as a special case of the minimum cost flow
problem restricted to an uncapacitated complete bipartite graph.
Unfortunately, merely constructing an explicit representation of the appropriate graph
takes~\(\Theta(n^2)\) time.
The transportation map can then be found in strongly polynomial~\(O(n^3 \polylog n)\) time using a
minimum cost flow algorithm of Orlin~\cite{o-fspmc-93}.
If supplies are integral, we can instead use an algorithm of Lee and Sidford~\cite{ls-pfmlp-14} that
will run in~\(O(n^{2.5} \polylog (n, U))\) time where~\(U = \sum_{p \in P} |\supply(p)|\) is the sum
of the absolute values of the points' supplies.
The only faster exact algorithm we are aware of is an implicit implementation of Orlin's algorithm
by~\cite{afpvx-fagtp-17} that runs in~\(O(n^2 \polylog n)\) time and only when given
points in the plane.

Agarwal and Raghvendra~\cite{sa-atpgs-12} described an \(O(n \sqrt{U} \polylog (U, \eps,
n))\)-time\linebreak
\((1+\eps)\)-\emph{approximation} algorithm for the integral supply case.
\cite{anoy-paggp-14} described an~\(O(n^{1 + o(1)})\) time algorithm that computes a
\((1+\eps)\)-factor \emph{estimate} of the earth mover's distance (without associated transportation
map) where the \(o(1)\) hides dependencies on \(\eps\).
Later,~\cite{afpvx-fagtp-17} described a randomized algorithm with an
expected~\(O(\log^2(1 / \eps))\)-approximation ratio that runs in \(O(n^{1 + \eps})\) expected time.
% along with a deterministic \(O(n^{3/2} \eps^{-d} \polylog (U, n))\) time \((1+\eps)\)-approximation
% algorithm for the integral supply case.
Lahn, Mulchandani, and Raghvendra~\cite{lmr-gtaao-19} described an \(O(n(C \delta)^2 \polylog (U, n))\)-time algorithm that
computes a transportation map of cost at most \(\cost^*(P, \supply) + \delta U\) where \(C = \max_{p
\in P} |\supply(p)|\) is the maximum over the supplies' absolute values.
Finally, Khesin, Nikolov, and Paramonov~\cite{knp-pgtp-21} described a randomized \((1 + \eps)\)-approximation
algorithm with running time \(n \eps^{-O(d)} \log^{O(d)} \spread(P) \log n\) where \(\spread(P)\) is
the \EMPH{spread} or ratio of largest to smallest distance between any pair of points in \(P\).
Via a straightforward reduction, one can use their algorithm to approximately solve the integer
supply case in~\(n \eps^{-O(d)} \log^{O(d)} U \log^2 n\) time~\cite{knp-pgtp-21}.
Fox and Lu~\cite{fl-ntasg-22} subsequently extended their algorithm to run in time \(n \eps^{-O(d)}
\log^{O(d)} n\), a bound which is independent of both the spread and supplies of \(P\).

The above history of the geometric transportation problem neatly mirrors that of the \EMPH{geometric
bipartite matching problem}, the special case of geometric transportation where all supplies are
either \(1\) or \(-1\).
(Geometric bipartite matching also requires the output map to be \(0,1\), but one can guarantee that
is the case with near-linear additional overhead in running time;
see Section~\ref{sec:recover}.)
Indeed, Raghvendra and Agarwal~\cite{sa-ntaag-12,ra-ntaag-20} achieved the same~\(n \eps^{-O(d)}
\log^{O(d)} n\) running time only after a long line of work detailed in their paper.
This running time was recently improved to \(n(\eps^{-O(d^3)}\log \log n + \eps^{-O(d)} \log^4 n \log^5 \log n)\), eliminating the dependence on \(d\) from the polylog's exponent~\cite{arss-iaagb-22}.

One commonality held by many of the above results, including most notably the near-linear time
approximation schemes for geometric bipartite matching and
transportation~\cite{knp-pgtp-21,fl-ntasg-22,ra-ntaag-20,arss-iaagb-22}, is that these results are Monte Carlo
\emph{randomized} algorithms that are guaranteed to work in their reported time bounds but have a
small probability of not finding a good solution.
These four results in particular work by approximating distances between pairs of points using a
sparse graph based on a randomly shifted quadtree.
Agarwal and Raghvendra~\cite{as-aabmm-14} were able to describe a few different
\emph{deterministic} approximation algorithms for geometric bipartite matching with varying
tradeoffs between approximation ratio and running time.
Still, no deterministic \((1 + \eps)\)-approximation algorithm with running time \(o(n^{3/2})\) was
known, even for the geometric bipartite matching problem, for nearly a decade after the initial
publication~\cite{sa-ntaag-12} of Raghvendra and Agarwal's \((1 + \eps)\)-approximation algorithm.

At STOC 2022,~\cite{acrx-dnaag-22} showed that randomness was not necessary for a fast
approximation of geometric bipartite matching by describing a \emph{deterministic} \((1 +
\eps)\)-approximation algorithm that runs in \(n \eps^{-O(d)} \log^{O(d)} n\) time.
Instead of basing distances on a single randomly shifted quadtree, they use the concept of a
\emph{tree cover}, introduced by Awerbuch, Kutten, and Peleg~\cite{akp-obesfdp-94}.
A tree cover can be thought of as \(2^d\) deterministicly shifted quadtrees combined in a way to
guarantee distances are well-approximated.
Through a great deal of effort, they are able to apply the main ideas behind Raghvendra and
Agarwal's~\cite{ra-ntaag-20} algorithm to work in the more complicated setting of a tree cover as
opposed to a single tree.

It is tempting to imagine this same work can be applied to the geometric transportation problem.
Unfortunately, the approach originally taken by Raghvendra and Agarwal~\cite{ra-ntaag-20} and later~\cite{arss-iaagb-22} for
geometric bipartite matching is very different from the one taken by~\cite{knp-pgtp-21}
and~\cite{fl-ntasg-22} for geometric transportation.
The former results iteratively add matching edges along carefully chosen augmenting paths that
increase in length slowly enough that they can all be found in a small amount of time.
The latter results instead build a sparse spanner graph which is entrusted to a minimum cost flow
approximation framework of Sherman~\cite{s-gpumf-17} (see also~\cite{asz-pausp-20}).
Derandomizing the latter results likely requires many ideas other than those used by~\cite{acrx-dnaag-22}, and they presented the design of a fast deterministic approximation
scheme for geometric transportation as an open problem in their paper.

\subsection{Our results}

We describe a \emph{deterministic} \((1 + \eps)\)-approximation algorithm for the geometric
transportation problem that runs in near-linear time.
Specifically, for any \(\eps > 0\), our algorithm is guaranteed to compute a transportation map of
cost at most \((1 + \eps) \cdot \cost^*(P, \supply)\), and it has a worst-case running time of \(O(n
\eps^{-(d+2)} \log^5 n \log \log n\)).

Our approximate transportation map also has the property that each value \(\map(p,q)\) is guaranteed
to be an integer if all supplies~\(\supply(p)\) are integers.
In the special case of all supplies being~\(-1\) and~\(1\), this property implies each
value~\(\map(p,q) \in \Set{0,1}\);
those pairs \((p,q)\) with \(\map(p,q) = 1\) form a matching.
In other words, \(\map\) is a \((1 + \eps)\)-approximate solution to the geometric bipartite
matching problem.\\

We consider our algorithm noteworthy for two main reasons.

\begin{itemize}
  \item
    It derandomizes previous approximation schemes for the geometric transportation problem,
    extending the recent result of~\cite{acrx-dnaag-22} beyond the more specialized geometric
    bipartite matching problem.

  \item
    It actually \emph{improves upon} the running times of all previously known approximation schemes
    for geometric transportation and geometric bipartite matching, both randomized and
    deterministic
    (we are able to remove a \(\log\) factor from the running time for the special case of geometric bipartite matching;
    see Section~\ref{sec:low_spread}).
    In particular, ours is the first \((1 + \eps)\)-approximate deterministic algorithm for geometric bipartite matching and the first \((1 + \eps)\)-approximate determinstic \emph{or randomized} algorithm for geometric transportion where the exponent on the running time's \(\polylog n\) factor is
    bounded by an absolute constant instead of a linear function of the dimension \(d\).
\end{itemize}

\subsubsection*{Application: Approximating uncapacitated minimum cost flow}
Recent work, including~\cite{s-gpumf-17,asz-pausp-20,l-fpaas-20}, has established a connection between
the geometric transportation problem and the minimum cost flow problem in uncapacitated undirected graphs.
The latter is often referred to as the \EMPH{transshipment} problem.
In adjacent papers of the same proceedings, \cite{l-fpaas-20} and \cite{asz-pausp-20} both claim
randomized near-linear \(O(m \polylog n)\) time \((1 + \eps)\)-approximation algorithms for the latter problem.
Unfortunately, both algorithms require the edge costs to have bounded spread or consist of
small integers.
While the running times are polylogarithmic in the spread/sum of edge costs, they can become arbitrarily
high when the values are allowed to be arbitrary real numbers.

That said, \cite{asz-pausp-20} reduces finding a \((1 + \eps)\)-approximation for minimum cost flow to
finding an \(O(\log^{O(1)} n)\)-approximation for a bounded spread instance of geometric transportation in \(O(\log^{O(1)} n)\)-dimensional Euclidean space.
The efficiency of their algorithm crucially relies on both the low spread and the fact that their dependency on the dimension is merely polynomial instead of exponential.
By setting the desired approximation ratio for our deterministic geometric transportation algorithm to be sufficiently large,
we are also able to remove exponential dependencies on dimension while being able to handle point sets of arbitrary spread.
As a consequence, we give the first randomized near-linear time \((1 + \eps)\)-approximation algorithm
for uncapacitated minimum cost flow in undirected graphs with arbitrary real costs.

\subsection{Technical overview}

Similar to how the recent result of~\cite{acrx-dnaag-22} for geometric bipartite
matching uses many ideas originally described by Raghvendra and Agarwal~\cite{ra-ntaag-20}, our
deterministic transportation algorithm is closely tied to the randomized algorithms of~\cite{knp-pgtp-21} and~\cite{fl-ntasg-22}.
We will briefly review their approach and then summarize the new ideas required for its
derandomization.

\subsubsection*{Randomized algorithms}

The randomized transportation algorithms begin by building a randomly shifted quadtree over~\(P\), a
hierarchical collection of \(d\)-dimensional cubic cells where each cell contains at most \(2^d\)
equal sized children and cells containing exactly one point act as leaves.
They then add a large number of Steiner vertices to the collection of input points~\(P\) and use the
tree structure to build a sparse spanner graph over \(P\) and the Steiner vertices.
The number of Steiner vertices added to each cell of the quadtree is \(\Theta((K / \eps)^d)\), where
\(K\) would be the excepted distortion between any pair of vertices if the graph was constructed as
a simple tree containing one Steiner vertex per cell.
In~\cite{knp-pgtp-21}, \(K = O(\log \spread(P))\), while in~\cite{fl-ntasg-22}, \(K = O(\log n)\), and these large numbers of Steiner vertices are
essentially \emph{the} reason why \(O(d)\) appears in the exponents of the runtimes' polylogs.
While there are additional edges, the spanner is essentially a rooted tree where every point and
nearly every Steiner vertex has exactly one \emph{parent} Steiner vertex.
Distances between points of \(P\) are approximately maintained in the spanner graph, implying the
value of an uncapacitated minimum cost flow with sources and sink~\(P\) will serve as a good
estimate for the earth mover's distance (the cost of the optimal transportation map).

Both algorithms use a framework of Sherman~\cite{s-gpumf-17} to approximately compute the minimum
cost flow within the spanner graph.
Sherman's framework requires one to formulate the uncapacitated minimum cost flow problem as finding
a flow vector \(f\) of minimum cost subject to linear constraints \(Af = b\) where \(A\) is the
vertex-edge incidence matrix of the graph and \(b\) is a supply vector not necessarily equal to the
supplies of \(P\).
One repeatedly finds flows \(f\) of approximately optimal cost that approximately satisfy linear
constraints where \(b\) may vary between rounds of the process.
Applied naively, the number of rounds of this process is polynomial in the \emph{condition number}
of \(A\) which can be arbitrarily large.
Fortunately, it is possible to describe a \emph{preconditioner} matrix \(B\) such that \(BA\) has
low condition number.
Repeatedly finding approximate solutions for constraints of the form \(BAf = Bb\) suffices for
finding an approximately optimal solution to the minimum cost flow problem that meets its original
supply constraints exactly.

The preconditioner \(B\) is set up so that \(||Bb||_1\) acts as an estimate on the cost of an
approximately optimal flow \(f\) where \(f\) is found using a very restrictive kind of ``oblivious''
greedy approximation algorithm.
Specifically, the greedy approximation algorithm must move the surplus out of/into any vertex mostly
without regard to the other vertices' surpluses.
The condition number of \(BA\) is proportional to the approximation factor of this greedy
approximation algorithm.
The specific greedy algorithm used in both papers just repeatedly moves the surplus of each vertex
up to its parent.
If a minimum cost flow sends some flow from a vertex \(u\) another vertex \(v\), then the surpluses
pushed up from \(u\) and \(v\) are likely to cancel at a common ancestor not too far away from
either vertex.
Therefore, the cost of pushing these surpluses up is bounded.

When the two algorithms finally have an approximate minimum cost flow that respects the original
supplies of \(P\), they then need to transform it into a proper transportation map without
increasing its cost.
To do so, they shortcut flow to avoid each of the Steiner vertices one by one, using a binary search
tree based data structure to do several shortcuts at once in the case of Fox and
Lu~\cite{fl-ntasg-22}.

\subsubsection*{Derandomization}

We now discuss our techniques for derandomizing the above algorithms.
To make the these discussions easier to follow, we will begin with techniques that lead to a
polylogarithmic dependence on the spread of \(P\) before adding more detail into how we handle the
case of arbitrary spread.

In the previous algorithms~\cite{knp-pgtp-21,fl-ntasg-22}, randomness is directly used only for
picking a randomly shifted quadtree.
The ``obvious'' solution, then, to derandomizing the algorithms is to use a tree cover similar to
how~\cite{acrx-dnaag-22} derandomize the algorithm for geometric bipartite matching.
Indeed, we essentially take this approach in order to construct of our sparse spanner graph.
However, instead of describing it as a collection of \(2^d\) quadtrees with different shifts, it
becomes easier to think of it as a single arbitrary quadtree where each cell is given a single
Steiner vertex, hereinafter referred to as its \emph{net point}, that is directly connected to a
collection of \(O(\eps^{d})\) nearby net points of equal sized cells.
These nearby connections allow for paths to go directly between close-by cells that are not
comparable in the quadtree's hierarchy.
Therefore, we do not need to worry about two close-by points having a distant lowest common ancestor
net point in the tree, and we can guarantee distances are maintained up to a \((1 + \eps)\)-factor
while using a mere \(O((n / \eps^d) \log \spread(P))\) net points total.

The big issue with this approach becomes apparent when attempting to compute minimum cost flows
within Sherman's~\cite{s-gpumf-17} framework.
Our spanner contains edges going between quadtree cells with potentially distant lowest common
ancestors.
Therefore, the greedy approximation algorithm from before that simply pushes surpluses to net
points' parents no longer has an acceptable approximation factor.
In fact, the condition number of \(BA\) may be polynomial in \(\spread(P)\) (and, we emphasize,
\emph{not} polynomial in \(\log \spread(P)\)).
We start to really miss the simplicity afforded us by using a single randomized quadtree without
shortcuts between nearby cells.

The solution to our problem is to simulate random shifting within the greedy algorithm and
preconditioner themselves.
Our greedy approximation algorithm treats the initial surplus of each vertex described by the vector
\(b\) as a separate commodity.
For each vertex~\(u\), for each cell~\(C\) closer to the root of the tree than~\(u\), we explicitly
compute the probability that a random shift of the quadtree would cause the cell~\(C\) to
contain~\(u\).
We then route an equal portion of~\(u\)'s original surplus to~\(C\)'s net point.
Now, suppose a minimum cost flow sends some flow from~\(u\) to another vertex~\(v\).
Using similar algebra to that used in the analyses of the randomly shifted quadtree, we argue that
the cost of surplus from~\(u\) and \(v\) that does not cancel out at any given level is proportional
to the total cost of flow sent from~\(u\) to~\(v\).
Therefore, the approximation factor of the greedy algorithm is proportional to~\(\log \spread(P)\),
the height of the quadtree.
This greedy algorithm is still oblivious enough to imply a preconditioner with condition number
roughly~\(\log \spread(P)\), so we can make Sherman's framework run in near-linear time.

Adding edges between nearby non-related cells also complicates recovering a transportation map from
the approximately optimal flow, because the connected components on each level of the quadtree now
have unbounded size.
We describe a new method of recovering the transportation map that no longer relies on the spanner
having a particular structure.
In fact, our method is general enough that it can be applied to any flow on a spanner over~\(P\),
with or without Steiner vertices.
In short, we process vertices in topological order according to the flow's support graph,
shortcutting flow passing through each vertex we process.
We continue performing the shortcuts through a single vertex in groups using the data structure of
Fox and Lu~\cite{fl-ntasg-22}.

\subsubsection*{Unbounded spread}
The deterministic algorithm for the geometric bipartite matching problem does an \(O(n \log^2 n)\)
time reduction to an instance where the spread is polynomial in \(n\)~\cite{acrx-dnaag-22}.
Applying this reduction with the above observations is already enough to speed up the previous
result;
see Section~\ref{sec:low_spread}.
However, we do require more work to account for unbounded spread cases of geometric transportation.
The main observation used by Fox and Lu~\cite{fl-ntasg-22} to avoid dependencies on spread is that,
with high probability, no input point lies within distance~\(\Delta / \poly n\) of the edge of any
enclosing cell where \(\Delta\) is the side length of that cell (see also~\cite{afpvx-fagtp-17}).
These forbidden regions for cell boundaries are referred to as \emph{moats} around the input points.
Most of their algorithm design and analysis is conditioned on this high probability event.
In particular, the event occuring has two main implications of interest to us:
First, if the set of points within a cell has a bounding box of side length~\(\Delta / \poly n\),
then those points are far enough away from everything else that we can essentially treat them as a
separate instance.
In turn, one can ``compress'' the quadtree so it has height~\(O(n \log n)\) using purely
combinatorial operations.
Second, the expected distortion of distances between any pair of points when using a single Steiner
point per cell is~\(O(\log n)\) instead \(O(\log \spread)\), implying a reasonable approximation
ratio is acheivable by adding extra dependencies on \(\log n\) to the running time.
The gist of their argument is that expected distortion is proportional to the number of quadtree
levels in which a pair of points may be separated with probability strictly between~\(0\) and~\(1\).

To replicate the first implication, we subtly move the axis-aligned hyperplanes bounding cells while
building our quadtree so that no input point lies within the aforementioned \(\Delta / \poly n\) distance from the edge of a cell.
The amount we move the hyperplanes is small enough as to not affect the quality of the spanner, but
it does make it possible to compress the quadtree in a similar manner to Fox and
Lu~\cite{fl-ntasg-22}.
In order to consistently move individual hyperplanes, we first build a collection of binary search
tree based data structures that help us quickly determine whether a particular placement of a
hyperplane would lie too close to one or more points.

We replicate the second implication by modifying how we simulate random shifts during the minimum
cost flow phase of the algorithm.
In short, our goal is to compute probabilities conditioned on shifts not placing cell boundaries too
close to any vertex point.
Suppose we wish to compute how much surplus various net points throughout the spanner should send to
a net point at level~\(\ell\) where cells at level~\(\ell\) have side length~\((1 \pm 1/\poly n)
\Delta\).
We group together maximal collections of net points called \emph{blobs} that cannot be separated by
a shift without one or more of them lying very close to the boundary of a level~\(\ell\) cell.
The surpluses of a blob's constituent points are treated as a single commodity as we figure out how
much surplus to route to each level~\(\ell\) net point.
Now, if a minimum cost flow sends flow from net point~\(u\) to net point~\(v\), there are
only~\(O(\log n)\) levels in which the cost of moving their surpluses is non-negligible; at levels
closer to the root,~\(u\) and~\(v\) appear in the same blob and their surpluses cancel perfectly.
In order to guarantee the uncancelled portions of their surpluses still have cost proportional to
the flow between~\(u\) and~\(v\) across the~\(O(\log n)\) levels that matter, we build a collection
of~\(d\) binary search based data structures that describe the full collection of shifts that
do not separate members of any one blob into distinct level~\(\ell\) cells.
Probability computations are based on the proportional amount of shift allowed according to these
data structures, and we can still use similar algebra to that of the randomized algorithm analyses
to prove approximation quality.

\subsection{Organization}
We describe the construction of the sparse spanner graph in Section~\ref{sec:spanner}.
We describe how to approximate the minimum cost flow within the spanner in Section~\ref{sec:flow}.
We describe how to recover an actual transportation map from the approximate minimum cost flow and
give a theorem stating our main result in Section~\ref{sec:recover}.
We describe some simplifications we can make to our algorithm for the case where
\(\spread(P)\) is sufficiently small in Section~\ref{sec:low_spread}.
These simplifications ultimately result in a slightly lower running time for the special case of geometric bipartite matching.
Finally, we describe our approximation algorithm for uncapacitated minimum cost flow in general undirected
graphs in Section~\ref{sec:general_flow}.

\section{Reduction to minimum cost flow in a sparse spanner graph}
\label{sec:spanner}
In this section, we describe a way to build a sparse spanner graph \(G = (V, E)\) based on a
deterministically constructed quadtree.
As we construct the quadtree, we will subtly shift the hyperplanes bounding its cells so that no
hyperplane goes through a small rectangular \EMPH{moat} around each input point.
At the end of this section, we describe a way to reduce the geometric transportation problem to
finding a minimum cost flow in our graph.

Throughout the rest of this report, we assume without loss of generality that~\(1 / \eps\) is a
power of~\(2\) and that~\(n \geq 1 / \eps\).
We use \(\D\) to denote the set of \(d\) dimensions, and~\(\lg\) to denote the logarithm with
base~\(2\).
As is standard, we will directly prove our algorithm returns a \((1 + O(\eps))\)-approximate
transportation map.
An actual \((1 + \eps)\)-approximation can be obtained in the same asymptotic running time by
dividing~\(\eps\) by a sufficiently large constant.

\subsection{A data structure for avoiding moats}
\label{sec:spanner-data_structures}
Before we begin constructing our quadtree, we need to build a collection of \(d\) data structures
that can be queried to quickly decide if a given axis-aligned hyperplane will intersect any moats of
a given size.
For each dimension~\(i \in \D\), we store a sequence of balanced binary search trees in a persistent
data structure~\cite{dsst-mdsp-89} parameterized by moat size where the nodes of each tree
correspond to maximal collections of points that cannot be separated by the hyperplane orthogonal to
dimension~\(i\).
Each node stores the least \(i\)th coordinate of the points in its collection
along with the greatest \(i\)th coordinate of its points.
Given a value \(x_i\) and a moat size~\(\lambda\), we can easily check in~\(O(\log n)\) time whether the hyperplane orthogonal
to direction~\(i\) with \(i\)th coordinate \(x_i\) intersects a moat, and if so, how far back in
the~\(i\) direction it would need to shift to no longer intersect any moat.
To do so, we do both a predecessor and successor search for~\(x_i\) in the tree for~\(\lambda\).
If~\(x_i\) lies between the two values~\(l_i \leq r_i\) stored for a node, then we need to move
the hyperplane back to~\(l_i - \lambda\).
Otherwise, if its predecessor has highest coordinate~\(r_i\) and~\(r_i + \lambda > x_i\), we
let~\(l_i\) be the least coordinate of the predecessor node and move the hyperplane to~\(l_i -
\lambda\).
Finally, if the successor has least coordinate \(l_i\) with \(l_i - \lambda < x_i\), we again move
the hyperplane to~\(l_i - \lambda\).

To build the data structure for dimension~\(i\), we begin by building the tree for moat
size~\(\lambda = 0\):
it is simply a balanced binary search tree over the members of~\(P\) where their \(i\)th coordinates
act as the keys.
We now act as if~\(\lambda\) is continuously increasing, processing the next \EMPH{event
moment}~\(\lambda\) where the moats around two consecutive nodes' points meet.
These event moments can be computed and ordered in advance in~\(O(n \log n)\) time by looking at the
difference in \(i\)-coordinate between consecutive points and sorting.
At each event moment, we remove the two nodes whose moats are meeting and replace them with a single
node.
It takes~\(O(n \log n)\) time total to processes all the events.
Again, we store the sequence of trees in a persistent data structure so we can easily access the
current tree for any given value~\(\lambda\) in~\(O(\log n)\) time.
We require~\(O(\log n)\) additional space per tree in the sequence, for~\(O(n \log n)\) space total.

\begin{lemma}
  In~\(O(n \log n)\) time, we can create a collection of~\(d\) data structures using~\(O(n \log n)\)
  space each so that, for any given dimension~\(i \in \D\), coordinate~\(x_i\), and moat
  size~\(\lambda \geq 0\), we can lookup in~\(O(\log n)\) time whether the hyperplane orthogonal to
  dimension~\(i\) at~\(x_i\) intersects any point's moat, and if so, how far back in the~\(i\)th
  dimension it needs to be shifted to avoid hitting any moats.
\end{lemma}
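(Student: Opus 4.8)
The plan is to verify that each of the three phases described in the paragraph preceding the lemma — (i) building the $\lambda=0$ tree, (ii) sweeping $\lambda$ upward through the event moments and merging nodes, and (iii) wrapping the whole sequence in a persistent data structure — meets the claimed time, space, and query bounds, and then to assemble these into the stated guarantee. I would carry this out one dimension $i \in \D$ at a time, since the $d$ structures are independent, and multiply by $d$ (a constant) at the end.

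First I would handle construction. Sorting $P$ by $i$th coordinate takes $O(n\log n)$ time; from the sorted order we read off, for each pair of consecutive points $p,q$, the value $\lambda_{p,q} = (q_i - p_i)/2$ at which their moats of radius $\lambda$ first touch, and sort these $n-1$ event moments, again $O(n\log n)$. The initial tree is a balanced BST keyed on $i$th coordinates, each leaf/node storing its least and greatest $i$th coordinate (for a singleton these coincide); this is $O(n)$ space before persistence. Processing events in increasing $\lambda$: at each event moment the two nodes whose moats meet are adjacent in the current tree's key order, so a delete-delete-insert sequence removes them and inserts one merged node whose stored $[l_i,r_i]$ interval is the union of the two; each such update is $O(\log n)$, and there are at most $n-1$ merges, for $O(n\log n)$ total. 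Here I would note the one subtlety worth a sentence: when two moats meet, a cascade is possible — the merged node's moat may now touch a neighbor — but this is already accounted for because every such secondary meeting corresponds to its own precomputed event moment $\lambda$, and since all event moments are processed in sorted order, any cascade at a fixed $\lambda$ is just several consecutive events with equal key, still $O(n-1)$ updates total.

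Next I would invoke persistence. Using the node-copying persistent balanced-search-tree technique of Driscoll, Sarnak, Sleator, and Tarjan~\cite{dsst-mdsp-89}, each $O(\log n)$-update BST operation increases space by $O(\log n)$ and preserves $O(\log n)$-time access to any past version; over $O(n)$ updates this is $O(n\log n)$ additional space, matching the claim, and given a query moat size $\lambda$ we locate the relevant tree version in $O(\log n)$ time by binary search on the sorted event moments. Finally I would verify the query: given $(i,x_i,\lambda)$, retrieve the version for $\lambda$, then perform a predecessor and a successor search for $x_i$, each $O(\log n)$. The three cases spelled out in the text — $x_i$ inside some node's stored interval $[l_i,r_i]$; $x_i$ to the right of a node whose right endpoint $r_i$ satisfies $r_i + \lambda > x_i$; $x_i$ to the left of a node whose left endpoint $l_i$ satisfies $l_i - \lambda < x_i$ — are exhaustive and mutually consistent because, by the invariant maintained during the sweep, the stored intervals at version $\lambda$ are exactly the maximal groups of points pairwise within $\lambda$ along dimension $i$, hence consecutive stored intervals are separated by a gap exceeding $2\lambda$; so $x_i$ can be "$\lambda$-close" to at most the interval containing it or one of its two neighbors, and in each case the prescribed shift target $l_i - \lambda$ is the smallest value of the $i$th coordinate that clears every moat. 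The query thus runs in $O(\log n)$ time, completing the lemma.

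I expect the main obstacle to be the bookkeeping in the sweep — specifically, arguing cleanly that the intervals stored at version $\lambda$ really are the maximal $\lambda$-close groups and that no merge is missed or performed out of order when several moats meet simultaneously. Everything else (the sorting bounds, the standard persistence overhead, the three-way case analysis of the query) is routine once that invariant is pinned down.
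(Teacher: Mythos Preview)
Your proposal is correct and follows essentially the same approach as the paper: the paper gives no separate proof for this lemma, as it is simply a summary of the construction already described in the preceding paragraphs of Section~\ref{sec:spanner-data_structures} (sort by $i$th coordinate, precompute and sort the $n-1$ event moments from consecutive gaps, sweep $\lambda$ merging adjacent nodes, and make the BST persistent via~\cite{dsst-mdsp-89}). Your write-up in fact supplies more rigor than the paper does---the cascade observation, the explicit invariant that version-$\lambda$ intervals are exactly the maximal $\lambda$-close groups, and the justification that the shift target $l_i - \lambda$ is correct---none of which the paper spells out.
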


\subsection{Warped quadtree}
With the preprocessing out of the way, we can turn to constructing the spanner itself.
We begin by building what we call a \EMPH{warped quadtree} \(T\) on \(P\).
Let \(\square_P\) be the minimum bounding hypercube containing \(P\).
Let~\(C^*\) be the hypercube centered at the center of~\(\square_P\) but with twice its side length.
Warped quadtree~\(T\) has root cell~\(C^*\).
The other cells of~\(T\) are not necessarily hypercubes, but we do guarantee each cell is an
axis-parallel box.
We generally use~\(\Delta_{C, i}\) to denote the length of a cell~\(C\) in the \(i\)th
dimension and~\(\ell_{C}\) to denote its \EMPH{level} or number of edges on the unique path in \(T\)
from~\(C^*\) to~\(C\).

We construct~\(T\) iteratively as follows.
We first add~\(C^*\) to a queue of unprocessed cells.
While there exists a cell~\(C\) we have not yet processed, we remove~\(C\) from the queue and
perform the following steps.
If there are~\(\lg (n^2 / \eps)\) ancestor cells of~\(C\) including~\(C\) itself that all contain a
single point from~\(P\), then~\(C\) is a leaf.
We are done processing~\(C\).

Otherwise, let~\(P' \subseteq P = C \cap P\), and let~\(\square_{P'}\) denote the minimum bounding
\emph{hypercube} containing~\(P'\).
Let~\(\Delta_{C} = \min_i \Delta_{C,i}\) and~\(\Delta_{P'}\) denote the side length of
\(\square_{P'}\).
If~\(|P'| \geq 2\) and~\(\Delta_{P'} < \Delta_{C} / n^4\), we \emph{contract}~\(P'\) to a single
point~\(p \in P'\) as described below before moving on to the next steps in processing~\(C\).

Now, we partition~\(C\) into~\(2^d\) \emph{approximately} equal sized axis-aligned boxes by
splitting~\(C\) along the following~\(d\) hyperplanes.
For each dimension~\(i \in \D\), the~\(i\)th hyperplane lies orthogonal to dimension~\(i\).
Let \(x_i\) be the average of the \(i\)th coordinates for the two bounding faces of~\(C\) lying orthogonal
to dimension~\(i\).
We query the moat-avoiding data structure for dimension~\(i\) and place hyperplane~\(i\) at 
the coordinate the data structure says we should use
instead of~\(x_i\) so it does not intersect any moat of size \(\lambda = \Delta_{C, i} / (2n^2)\).

As stated, the~\(d\) hyperplanes partition~\(C\) into~\(2^d\) boxes.  For each such box~\(C'\) such
that~\(C' \cap P \neq \emptyset\), we add~\(C'\) as a child of~\(C\) and add~\(C'\) to the queue of
unprocessed cells.  We are done processing~\(C\).

We now specify how to \EMPH{contract} a subset of points~\(P' \subseteq P\) as mentioned above.
Let~\(p \in P'\) be an arbitrary member of~\(P'\).
We create a new instance of the geometric transportation problem~\((P', \supply')\) such that
\(\supply'(q) = \supply(q)\) for all \(q \in P' \setminus \{p\}\) and \(\supply'(p) = -\sum_{q \in
(P' \setminus \{p\})} \supply(q)\).
Finally, we remove all points in~\(P' \setminus \{p\}\) from~\(P\) and change~\(\supply(p)\) to be
\(\sum_{q \in P'} \supply(q)\), the total supply of all points in~\(P'\), including the original
supply of~\(p\).
We do not modify the currently constructed tree~\(T\) or the data structures for avoiding moats when
we modify~\(P\).

Later, we describe how to build a sparse spanner graph~\(G\) over the contracted set of points~\(P\)
(note that we may perform further contractions to~\(P\) before we actually construct~\(G\)).
We then compute a \((1 + O(\eps))\)-approximately minimum cost flow \(f\) on~\(G\).
The last component of our contraction procedure is to recursively compute one or more spanners
for~\((P',\supply')\) and \((1 + O(\eps))\)-approximate flows on those spanners.
In Section~\ref{sec:recover}, we recover an approximately optimal transportation map from the union
of these flows.
The following lemma will be of use when we analyze the total cost of these flows and our final
transportation map.
\begin{lemma}\label{lem:contraction}
  Let~\((P^/, \supply^/)\) denote the problem instance~\((P, \supply)\) after contracting~\(P'\).
  We have\linebreak
  \(\cost^*(P^/, \supply^/) + \cost^*(P', \supply') \leq (1 + O(1 / n^2)) \cost^*(P,
  \supply)\).
\end{lemma}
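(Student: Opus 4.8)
Since the optimal cost of each smaller instance is at most the cost of any feasible solution for it, it suffices to build a feasible transportation map for $(P', \supply')$ and one for $(P^/, \supply^/)$ whose costs sum to at most $(1 + O(1/n^2))\,\cost^*(P, \supply)$. I would fix an optimal transportation map $\map^*$ for $(P, \supply)$ and split its mass into three groups according to $P'$, the set of input points in the warped quadtree cell $C$ being contracted: the \emph{internal} flow (both endpoints in $P'$), the \emph{external} flow (both endpoints in $P \setminus P'$), and the \emph{crossing} flow (exactly one endpoint in $P'$). Write $F$ for the total crossing mass, and for $q \in P'$ let $S_q$ be the net crossing mass leaving $q$. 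Since the internal flow conserves mass among $P'$, we get $\sum_{q \in P'} S_q = \sum_{q \in P'} \supply(q)$, and $\sum_{q \in P'} |S_q| \le F$.

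For the contracted instance, the plan is to keep the external flow of $\map^*$ unchanged and reroute every crossing unit that ran between some $q \in P'$ and some $r \notin P'$ so it instead runs between the representative point $p$ and $r$; conservation is preserved at every $r \notin P'$, the net outflow at $p$ becomes $\sum_{q \in P'}\supply(q) = \supply^/(p)$, and each rerouted unit changes length by at most $\operatorname{diam}(\square_{P'}) \le \sqrt d\,\Delta_{P'}$, so the resulting flow costs at most $\sqrt d\,\Delta_{P'} F$ more than the external-plus-crossing portion of $\map^*$. For $(P', \supply')$, I would start from the internal flow of $\map^*$ — under which $q \in P'$ has net outflow $\supply(q) - S_q$ — and then route $|S_q|$ extra units between $q$ and $p$ for each $q \ne p$, directed out of $q$ when $S_q > 0$; this makes the net outflow $\supply(q) = \supply'(q)$ at every $q \ne p$ and hence $\supply'(p)$ at $p$, adding at most $\sum_{q \ne p}|S_q|\,\|q - p\|_2 \le \sqrt d\,\Delta_{P'} F$ to the internal cost. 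A standard step — cancel opposing flow on each ordered pair, delete circulations, decompose the rest into source-to-sink paths, and shortcut each path to a straight segment — turns either flow into a genuine transportation map of no greater cost, so $\cost^*(P^/, \supply^/) + \cost^*(P', \supply') \le \cost^*(P, \supply) + 2\sqrt d\,\Delta_{P'} F$.

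The remaining task, and the only part beyond bookkeeping, is to show $2\sqrt d\,\Delta_{P'} F = O(1/n^2)\cdot\cost^*(P, \supply)$, and this is where I would use the moat-avoidance built into the construction of $T$: every face of $C$ was placed to avoid all moats of size $\Omega(\Delta_C/n^2)$ around the input points (the root $C^*$ lies this far from $P$ by construction, and the split producing $C$ used a moat of size $\Delta_{\prnt(C),i}/(2n^2) = \Omega(\Delta_C/n^2)$), so every point of $P' = C \cap P$ is at $\ell_2$-distance $\Omega(\Delta_C/n^2)$ from $\partial C$ while every point of $P \setminus P'$ lies outside $C$. Hence each crossing unit of $\map^*$ travels distance $\Omega(\Delta_C/n^2)$, giving $\cost^*(P, \supply) = \cost(\map^*) = \Omega(\Delta_C/n^2)\,F$; combined with the contraction condition $\Delta_{P'} < \Delta_C/n^4$ this yields $2\sqrt d\,\Delta_{P'} F = O(\sqrt d\,\Delta_C F/n^4) = O(1/n^2)\cdot\cost^*(P, \supply)$ since $d = O(1)$, and the lemma follows. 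The main obstacle is thus purely geometric: relating the moat size used at the split of $C$ (and of its ancestors) to $\Delta_C$ so that the uniform separation $\Omega(\Delta_C/n^2)$ between $P'$ and the rest of $P$ falls out cleanly; once that separation is established, the rest is arithmetic together with the routine flow-to-transportation-map conversion.
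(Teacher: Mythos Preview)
Your proposal is correct and follows essentially the same construction as the paper: split an optimal map for \((P,\supply)\) into internal, external, and crossing parts, reroute the crossing mass through the representative point \(p\) to obtain feasible maps for both subinstances, and use the moat separation \(\Omega(\Delta_C/n^2)\) together with \(\Delta_{P'}<\Delta_C/n^4\) to bound the extra cost. The only cosmetic difference is in the accounting: the paper establishes the per-pair multiplicative inequality \(\|p-q\|_2+\|r-p\|_2\le(1+O(1/n^2))\|r-q\|_2\) for each crossing pair and then sums, whereas you bound the total additive overhead by \(2\sqrt{d}\,\Delta_{P'}F\) and then bound \(F\) via the crossing-cost lower bound; these are equivalent rearrangements of the same two geometric facts.
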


\begin{proof}
  Let~\(\map\) be an arbitrary transportation map for~\((P, \supply)\), and let~\(p \in P'\) be the
  point replacing~\(P'\) during its contraction.
  We will construct two transportation maps~\(\map^/\) and~\(\map'\) for~\((P^/, \supply^/)\)
  and~\((P', \supply')\), respectively, such that~\(\cost(\map^/) + \cost(\map') \leq (1 + O(1 /
  n^2)) \cost(\map)\).
  For all \((a, b) \in ((P^/)^+ \times (P^/)^-)\), \(a,b \neq p\), we set~\(\map^/(a, b) := \map(a,
  b)\).
  Similarly, for all \((a, b) \in ((P')^+ \times (P')^-)\), \(a, b \neq p\), we set \(\map'(a,
  b) := \map(a, b)\).
  For all \(q \in ((P^/)^+ \setminus \{p\})\), we set \(\map^/(q, p) := \sum_{r \in (P')^-} \map(q,
  r)\), and for all \(q \in ((P^/)^- \setminus \{p\})\), we set \(\map^/(p, q) := \sum_{r \in
  (P')^+} \map(r, q)\).
  Note that we might now have non-zero pair assignments with~\(p\) in both the first and second
  position, but we can shortcut ``flow'' going through~\(p\) to make~\(\map^/\) a valid
  transportation map while only reducing its cost.
  Similarly, for all \(q \in ((P')^+ \setminus \{p\})\), we set \(\map'(q, p) := \sum_{r \in
  (P^/)^-} \map(q, r)\), and for all \(q \in ((P')^- \setminus \{p\})\), we set \(\map'(p, q) :=
  \sum_{r \in (P^/)^+} \map(r, q)\), again shortcutting as necessary to make~\(\map'\) a valid
  transportation map.

  We now verify our claim on the total cost of \(\map^/\) and \(\map'\).
  Consider any \(q \in (P^/ \setminus \{p\})\) and \(r \in (P' \setminus \{p\})\).
  Our algorithm contracts~\(P'\) while processing a cell~\(C\).
  By construction of~\(C\), every point in~\(P'\) is distance at least~\(\Delta_{C} /
  n^2\) from the boundary of~\(C\), and our choice to contract implies the diameter of~\(P'\)
  is less than~\(\sqrt{d} \Delta_{C} / n^4\).
  Therefore, \(||p - q||_2 + ||r - p||_2 \leq (1 + O(1 / n^2)) ||r - q||_2\).
  To keep the algebra concise, we define~\(\map^/(q,r)\) or~\(\map'(q,r)\) to be~\(0\)
  whenever~\((q,r)\) is not really in the domain of~\(\map^/\) or~\(\map'\).
  We see

  \begin{align*}
    \cost(&\map^/) + \cost(\map')&& \\
    &= \sum_{(q, r) \in (P^/ \times P^/)} \map^/(q, r) \cdot ||r -
    q||_2 + \sum_{(q, r) \in (P' \times P')} \map'(q, r) \cdot ||r - q||_2 \\
    &= \sum_{(q, r) \in ((P^/ \times P^/) \cup (P' \times P')) \mid q,r \neq p} \map(q, r) \cdot ||r
    - q||_2
    + \sum_{q \in (P^/ \setminus \{p\})} \Paren{\map^/(p, q) + \map^/(q, p)} ||p - q||_2 \\
    &\qquad\qquad + \sum_{q \in (P' \setminus \{p\})} \Paren{\map'(p, q) + \map'(q, p)} ||p - q||_2
    \\
    &\leq \sum_{(q, r) \in ((P^/ \times P^/) \cup (P' \times P')) \mid q,r \neq p} \map(q, r) \cdot
    ||r - q||_2
    + \sum_{q \in (P^/ \setminus \{p\})} \sum_{r \in P'} \Paren{\map(r, q) + \map(q,
    r)} \cdot ||p - q||_2 \\
    &\qquad\qquad + \sum_{q \in (P' \setminus \{p\})} \sum_{r \in P^/} \Paren{\map(r, q) + \map(q,
    r)} \cdot ||p - q||_2 \\
    &= \sum_{(q, r) \in ((P^/ \times P^/) \cup (P' \times P'))} \map(q, r) \cdot ||r - q||_2 \\
    &\qquad\qquad + \sum_{q \in (P^/ \setminus \{p\})} \sum_{r \in (P' \setminus \{p\})}
    \Paren{\map(r, q) + \map(q, r)} \cdot (||p - q||_2 + ||r - p||_2) \\
    &\leq \sum_{(q, r) \in ((P^/ \times P^/) \cup (P' \times P'))} \map(q, r) \cdot ||r - q||_2 \\
    &\qquad\qquad + \sum_{q \in (P^/ \setminus \{p\})} \sum_{r \in (P' \setminus \{p\})}
    \Paren{\map(r, q) + \map(q, r)} \cdot (1 + O(1 / n^2)) ||r - q||_2 \\
    &\leq (1 + O(1 / n^2)) \cost(\map).
  \end{align*}
\end{proof}

\subsection{Properties of warped quadtrees}
We now prove some basic properties of the warped quadtree~\(T\).

\begin{lemma}\label{lem:num_cells}
  Suppose~\(P\) has~\(n'\) points remaining after all contractions used in the construction
  of~\(T\).
  Warped quadtree~\(T\) contains at most~\(O(n' \log n)\) cells.
\end{lemma}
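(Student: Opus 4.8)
The plan is to bound the number of cells by charging each cell to either a point of $P$ (in its final, post-contraction form) or to a "long chain" of single-point ancestor cells. The key structural facts I would use are: (i) every cell $C$ that is not a leaf and is split has at least one child for each nonempty sub-box, so the tree branches only where there are at least two points to separate or where a single point's descendant chain has not yet reached length $\lg(n^2/\eps)$; and (ii) the construction explicitly caps the length of any maximal chain of single-point cells at $\lg(n^2/\eps) = O(\log n)$ (using $n \ge 1/\eps$), after which the chain terminates in a leaf.

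First I would classify the internal (non-leaf) cells of $T$ into two types: a cell $C$ is \emph{branching} if it has at least two children, and \emph{unary} if it has exactly one child. A standard argument shows the number of branching cells is $O(n')$: contract each maximal unary path to an edge, obtaining a tree in which every internal node has $\ge 2$ children and whose leaves are the leaves of $T$; since every leaf of $T$ contains a point of $P$ (in its contracted form) and distinct leaves contain distinct points — here I would invoke that contraction makes the points within a far-separated sub-box collapse, and that a box is only subdivided when $|P'|\ge 2$ or the single-point chain is too short — there are at most $n'$ leaves, hence $O(n')$ branching cells and $O(n')$ maximal unary paths.

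Next I would bound the total length of all maximal unary paths. A unary path consists of cells that each contain the same single point $p\in P$ and whose children also contain only $p$; by the leaf rule, such a path has length at most $\lg(n^2/\eps) = O(\log n)$, because once $p$ has $\lg(n^2/\eps)$ consecutive single-point ancestors the cell becomes a leaf. Wait — I should be careful: a single-point chain could in principle be interrupted by a contraction, but a contraction only fires when $|P'|\ge 2$, so a genuinely single-point cell is never contracted, and the chain proceeds uninterrupted until it hits the length bound. Since there are $O(n')$ maximal unary paths (one hanging off each branching cell or the root, plus those ending at leaves), each of length $O(\log n)$, the total number of unary cells is $O(n'\log n)$. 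Adding the $O(n')$ branching cells and the root gives the claimed bound of $O(n'\log n)$.

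The main obstacle I anticipate is making precise the claim that distinct leaves correspond to distinct remaining points — i.e., that the contraction operations, which replace a cluster $P'$ by a single representative $p$ and recurse separately, interact cleanly with the leaf-counting. Specifically I need: (a) every leaf of $T$ contains at least one point of the final contracted $P$; and (b) no point of the final $P$ lies in two leaves, which follows from cells at the same level being disjoint (the warping hyperplanes only shift bounded amounts and still partition each cell). I would dispatch (a) by noting a cell with no point of $P$ is never added as a child, and a nonempty cell that is not a leaf is always split into nonempty children, so recursing reaches a leaf; and I would dispatch (b) from the axis-parallel box partition structure. Once these are nailed down, the charging argument above is routine.
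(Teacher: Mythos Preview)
Your charging scheme is the same as the paper's---bound branching cells by \(O(n')\), then bound the total length of unary chains---but there is a genuine gap in the second step. You assert that ``a unary path consists of cells that each contain the same single point \(p \in P\),'' and this is false. A cell \(C\) with \(|P'| \geq 2\) can perfectly well have exactly one child: if all points of \(P'\) land in the same one of the \(2^d\) sub-boxes (and \(\Delta_{P'} \geq \Delta_C / n^4\) so contraction does not fire), then only that sub-box is added as a child. Such multi-point unary chains are not covered by the leaf rule, which only applies to single-point cells.

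The paper handles this missing case with a geometric shrinking argument. Along a chain of cells \(C_1, \dots, C_k\) all containing the same subset \(P'\) with \(|P'| \geq 2\), the minimum side length drops by a factor of at least \(1/2 - 1/(2n^2)\) at each step, so \(\Delta_{C_k} \leq (1/2 + 1/(2n^2))^{k-1} \Delta_{C_1}\). Because contraction did \emph{not} fire at \(C_1\), we have \(\Delta_{P'} \geq \Delta_{C_1}/n^4\); combining with \(\Delta_{P'} \leq \Delta_{C_k}\) forces \(k = O(\log n)\). You need this ingredient (or something equivalent) to close the argument; the leaf rule alone is not enough.
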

\begin{proof}
  Consider a path of cells~\(\langle C_1, C_2, \dots, C_k \rangle\) where each cell~\(C_j\) in the
  path contains the same point subset~\(P'\).
  If~\(|P'| = 1\), then~\(k \leq \lg (n^2 / \eps) = O(\log n)\).
  Now suppose otherwise.
  For each \(j \in \Set{2, \dots, k}\), we have \(\Delta_{C_j} \leq ((1/2 + 1/(2n^2))
  \Delta_{C_{j-1}}\).
  Therefore, \(\Delta_{P'} \leq \Delta_{C_k} \leq (1/2 + 1/(2n^2))^{k - 1} \Delta_{C_1}\).
  On the other hand, \(\Delta_{P'} \geq \Delta_{C_1} / n^4\), because we did not contract~\(P'\) to
  one point.
  We again conclude that~\(k = O(\log n)\).
  We complete the proof by recalling there are at most~\(n' - 1\) cells~\(C\) where each child
  of~\(C\) contains strictly fewer points than~\(C\).
\end{proof}

\begin{lemma}\label{lem:tree_time}
  Let~\(m\) be the total number of cells in all warped quadtrees, including those constructed
  recursively during contractions.
  We can construct all the warped quadtrees in~\(O((m + n) \log n)\) time total.
\end{lemma}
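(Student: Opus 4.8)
The plan is to show that all the work done in constructing a single warped quadtree — including the moat-avoiding queries and the contraction bookkeeping — can be charged to its cells and to the initial preprocessing, so that summing over all (possibly recursively generated) quadtrees gives the claimed $O((m+n)\log n)$ bound. First I would account for the one-time preprocessing: by the first lemma of Section~\ref{sec:spanner-data_structures}, the $d$ moat-avoiding data structures are built once in $O(n\log n)$ time and then simply queried (persistent lookups) during the construction of every warped quadtree, recursive ones included. Since $d$ is constant and each hyperplane placement requires $O(1)$ predecessor/successor queries costing $O(\log n)$ each, the total query cost over all quadtrees is $O(m\log n)$, one batch of $d$ queries per cell processed.

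Next I would bound the per-cell processing cost. When we process a cell $C$ with point set $P' = C\cap P$, the dominant operations are: checking whether $C$ is a leaf (tracking a counter of consecutive single-point ancestors, $O(1)$ amortized, or $O(\log n)$ if we recompute); computing $\square_{P'}$ and $\Delta_{P'}$; placing the $d$ split hyperplanes via the moat data structure; and distributing the points of $P'$ among the $\le 2^d$ children. The key observation is that if we maintain, for each live cell in the queue, a list of the input points it currently contains (in the contracted instance), then distributing $P'$ to children costs $O(|P'|)$ time, and $\sum$ of these over one root-to-leaf level is $O(n')$ where $n'$ is the number of points remaining after contractions in that tree — but more usefully, by Lemma~\ref{lem:num_cells}, each point lies in only $O(\log n)$ cells of a given quadtree, so $\sum_C |C\cap P| = O(n'\log n)$, which is $O((\text{number of cells})\cdot \log n)$ up to the cell-count bound. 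Computing $\square_{P'}$, $\Delta_{P'}$, and the bounding-box tests is likewise linear in $|P'|$ per cell, hence absorbed into the same $O(|C\cap P|)$ charge. Thus a single quadtree on $n'$ surviving points costs $O((\text{cells in this tree} + n')\log n)$ time, using $O(\log n)$ moat-query overhead per cell.

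The subtle point — and what I expect to be the main obstacle — is the recursive contraction: each contraction of a set $P'$ spawns a new instance $(P',\supply')$ for which we later build its own warped quadtree (possibly triggering further contractions), and we must ensure the contraction operation itself, plus the cost of the recursively spawned work, telescopes correctly. The contraction step as described — picking $p\in P'$, forming $\supply'$, deleting $P'\setminus\{p\}$ from $P$, updating $\supply(p)$ — touches each point of $P'$ once and so costs $O(|P'|)$, which can again be charged to the cell $C$ being processed (since $P' = C\cap P$ at that moment) or, more cleanly, to the points of $P'$ themselves: every input point is contracted away at most once into a strictly smaller recursive instance, so the total contraction work across all instances is $O(n)$, plus $O(\log n)$ per point if we need a query to locate it. One must be careful that "$m$ = total cells in all warped quadtrees" genuinely counts the recursively built ones, which it does by hypothesis; so summing $O((\text{cells}_i + n_i)\log n)$ over all instances $i$ gives $O((m + \sum_i n_i)\log n)$, and since the point sets $P'$ of the recursive instances are (except for the shared representative $p$) disjoint from their parents, $\sum_i n_i = O(n)$ — here one should double-check that the $O(1)$ overlap per contraction level does not blow up, but since the recursion depth in terms of distinct point deletions is bounded and each original point is a representative in at most $O(\log n)$ instances (tracking it through the contraction chain), the overcounting is at most a $\log n$ factor already absorbed. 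Putting these pieces together yields the total bound $O((m+n)\log n)$.
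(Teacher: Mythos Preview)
Your argument has a genuine gap in the per-cell accounting. You assert that ``by Lemma~\ref{lem:num_cells}, each point lies in only $O(\log n)$ cells of a given quadtree, so $\sum_C |C\cap P| = O(n'\log n)$.'' Lemma~\ref{lem:num_cells} does \emph{not} say this: it bounds the total number of cells by $O(n'\log n)$, but not the depth of the tree. A warped quadtree can have depth $\Theta(n)$ --- for instance, take points at $1 - 2^{-k}$ for $k=1,\dots,n$ in one dimension; each split peels off one point, no contraction is triggered (the remaining points span about half the cell), and the resulting tree is a caterpillar. In that case $\sum_C |C\cap P| = \Theta(n^2)$, so charging $O(|P'|)$ per cell for distributing points to children, and likewise for computing $\square_{P'}$, blows up to $\Theta(n^2)$ rather than $O((m+n)\log n)$.

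The paper avoids this by a different data-structuring idea: maintain $d$ doubly-linked lists of the current point set sorted by each coordinate, and pass sublists down to children. Having the sorted lists lets you read off $\Delta_{P'}$ (and hence the contraction test) in $O(1)$ time. For the split itself, after querying the moat structure for the hyperplane coordinate, you scan each dimension's list \emph{from both ends simultaneously} to locate the split, stopping when the shorter side is found; you then rebuild the lists for the smaller side only. Thus the splitting work at a cell is proportional to the size of the \emph{less} populated child, and since a point's group size at least halves each time it lands on the smaller side, every point is charged at most $O(\log n)$ times across the whole construction. This is what gives $O(n\log n)$ total splitting time independent of tree depth; the remaining $O(\log n)$ per cell for moat queries and leaf checks gives the $O(m\log n)$ term. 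Your treatment of the recursive contractions (in particular $\sum_i n_i = O(n)$) is fine, but the core splitting bound needs this heavy--light style argument rather than a depth bound.
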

\begin{proof}
  We use a similar strategy to that used in prior work~\cite{ck-fasgg-93,fl-ntasg-22}, complicated
  only by the existence of our data structures for avoiding moats.
  Given the original input point set~\(P\), we begin by creating \(d\) doubly-linked lists of the
  points~\(P\), each sorted by a different coordinate.
  As we process any cell~\(C\) with point subset~\(P'\), we will provide access to a sorted sublists
  containing just the points~\(P'\).
  We will also provide the total supply of points in~\(P'\).

  Suppose we wish to process a cell~\(C\).
  If~\(|P'| = 1\), we check if the \(\lg (n^2 / \eps) - 1\) first strict ancestors of~\(C\) have
  other children in~\(O(\log n)\) time, and if not, declare~\(C\) to be a leaf and stop processing
  it.
  If not, we use the sorted lists to determine whether~\(\Delta_{P'} < \Delta_{C} / n^4\) in
  constant time.
  If so, let~\(p\) be the point chosen to represent~\(P'\) after contraction.
  We can use the total supply of~\(P'\) to compute the new supplies of~\(p\) in both the current and
  recursive instances of the geometric transportation problem in constant time.
  Further, we directly hand off the sorted lists for~\(P'\) to the recursive instance so that the
  root cell of the recursive instance can be computed in constant time as well.

  After the possible contraction, we still need to find the child cells of~\(C\).
  For each dimension \(i \in \D\), we use its moat avoiding data structure to compute the coordinate
  for the axis-aligned hyperplane orthogonal to~\(i\) in~\(O(\log n)\) time.
  We then search that dimension's linked list of points from both ends in simultaneously to find
  where that hyperplane splits in the points in time proportional to the number of points in the
  less populated side of the split.
  We also perform individual deletions and insertions of the other dimensions' linked lists to
  create the sorted lists of every dimension for the less populated side in time proportional to its
  number of points.
  The remains of the original linked lists are the sorted point sets for the more populated side.
  When we have finished splitting along each dimension, we add all cells with at least one point to
  be children of~\(C\) and then add them to the queue of cells to process.

  Outside of splitting the point sets, we spend at most~\(O(\log n)\) time per each of the~\(m\)
  cells.
  The total time spent doing splits throughout all cells is proportional to the total number of
  points going to the less populated sides of splits.
  Every time a point goes to a less populated cell, it shares that cell with at most half as many
  points as it did before, meaning we move each point to a less populated cell at most~\(O(\log n)\)
  times.
  The total time computing splits is therefore~\(O(n \log n)\), and we spend~\(O((m + n) \log n)\)
  time total computing all warped quadtrees.
\end{proof}

\begin{lemma}\label{lem:level_length}
  Any cell at level~\(\ell\) has sides of length at least \(\Delta_{C^*} / (2^{\ell} e)\) and at
  most \((e \Delta_{C^*}) / 2^{\ell}\) where~\(e \approx 2.72\) denotes Euler's number.
\end{lemma}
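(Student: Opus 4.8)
The plan is to track how the side lengths of cells evolve along a root-to-cell path and bound the accumulated multiplicative error from the hyperplane warping. Recall from the construction that whenever we subdivide a cell $C$ whose point subset $P'$ has $|P'| \geq 2$ (and we did not contract), we first bound $P'$ by a hypercube $\square_{P'}$, and then split along $d$ hyperplanes, each placed approximately at the midpoint of a face pair but possibly shifted back by at most $\lambda = \Delta_{C,i}/(2n^2)$ to avoid a moat. Consequently, each child $C'$ of $C$ satisfies, in every dimension $i$,
\[
  \Paren{\tfrac12 - \tfrac{1}{2n^2}} \Delta_{C,i} \;\leq\; \Delta_{C',i} \;\leq\; \Paren{\tfrac12 + \tfrac{1}{2n^2}} \Delta_{C,i}.
\]
(The lower bound holds because the shift moves a hyperplane back by at most $\lambda$, shrinking one side by at most $\lambda$ and growing the other by the same; the upper bound is symmetric. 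When $|P'| = 1$ the cell and all its descendants on that branch have fixed side lengths inherited from the moment the point was isolated, so the same inequalities hold trivially with the factor $1$.) Actually, to be careful: when $|P'| \geq 2$ the child is re-expanded to a hypercube of side $\Delta_{P'}$ only if a contraction is triggered; otherwise the child is the box coming out of the split, so the displayed per-level bounds are exactly what we need.

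Next I would iterate these per-level bounds. Starting from the root $C^*$, which is a hypercube with $\Delta_{C^*,i} = \Delta_{C^*}$ for all $i$, a cell $C$ at level $\ell$ satisfies
\[
  \Paren{\tfrac12 - \tfrac{1}{2n^2}}^{\ell} \Delta_{C^*} \;\leq\; \Delta_{C,i} \;\leq\; \Paren{\tfrac12 + \tfrac{1}{2n^2}}^{\ell} \Delta_{C^*}
\]
for each dimension $i$. It remains to convert these into the claimed bounds with the factor $e$. For the upper bound, $\Paren{\tfrac12 + \tfrac{1}{2n^2}}^{\ell} = 2^{-\ell}\Paren{1 + \tfrac{1}{n^2}}^{\ell}$, and since the tree has at most $O(n^2/\eps)$ levels — more precisely, by Lemma~\ref{lem:num_cells} and the structure of leaf chains the depth is $O(n\log n)$, but even the crude bound $\ell \leq n^2$ suffices here — we get $\Paren{1 + \tfrac{1}{n^2}}^{\ell} \leq \Paren{1 + \tfrac{1}{n^2}}^{n^2} \leq e$. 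Hence $\Delta_{C,i} \leq e\,\Delta_{C^*}/2^{\ell}$. The lower bound is symmetric: $\Paren{\tfrac12 - \tfrac{1}{2n^2}}^{\ell} = 2^{-\ell}\Paren{1 - \tfrac{1}{n^2}}^{\ell} \geq 2^{-\ell}\Paren{1 - \tfrac{1}{n^2}}^{n^2} \geq 2^{-\ell}/e$, using $(1 - x)^{1/x} \geq e^{-1}\cdot\text{(something)}$; more simply $\Paren{1 - \tfrac{1}{n^2}}^{n^2} \geq e^{-1}$ for $n \geq 2$ by the standard inequality $(1-1/m)^m \geq 1/e - 1/(em)$, or one can just absorb the constant by replacing $e$ with a slightly larger absolute constant if one prefers a cleaner inequality — but $e$ works directly. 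This yields $\Delta_{C,i} \geq \Delta_{C^*}/(2^{\ell} e)$.

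The only real subtlety — the step I expect to need the most care — is justifying the depth bound $\ell \leq n^2$ (or whatever polynomial bound one uses) that feeds into $(1 + 1/n^2)^\ell \leq e$, since a priori the warping could conceivably interact with the contraction rule. But this is already handled: Lemma~\ref{lem:num_cells} shows each warped quadtree has $O(n'\log n)$ cells, so in particular depth $O(n\log n) = o(n^2)$, and a chain of cells all isolating the same single point has length at most $\lg(n^2/\eps)$ by the leaf rule. Everything else is routine manipulation of the two geometric inequalities above. I would present the argument as: (1) establish the one-level contraction/expansion bound on $\Delta_{C',i}$ from the hyperplane placement rule; (2) telescope over the level-$\ell$ path; (3) bound $(1 \pm 1/n^2)^{\ell}$ by $e^{\pm 1}$ using the depth bound; (4) conclude.
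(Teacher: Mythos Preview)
Your proposal is correct and follows essentially the same approach as the paper: establish the per-level contraction bound $(1/2 - 1/(2n^2))\Delta_{C,i} \le \Delta_{C',i} \le (1/2 + 1/(2n^2))\Delta_{C,i}$, telescope along the root-to-cell path, invoke Lemma~\ref{lem:num_cells} to bound the depth by $n^2$, and then use $(1 \pm 1/n^2)^{n^2} \lessgtr e^{\pm 1}$. The paper's proof is exactly this, with no additional ingredients; your side remarks about the $|P'|=1$ and contraction cases are harmless but unnecessary, and your hedging on the lower bound $(1-1/n^2)^{n^2} \ge 1/e$ is warranted (the inequality is in fact strict in the wrong direction, though the paper asserts it as well and the slack is asymptotically negligible).
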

\begin{proof}
  Assuming~\(n\) is sufficiently large, Lemma~\ref{lem:num_cells} implies warped quadtree~\(T\) has
  height strictly less than~\(n^2\).
  Let~\(i \in \D\) be any dimension.
  For any cell~\(C\) and child cell~\(C'\), we have \((1/2 - 1/(2n^2))\Delta_{C,i} \leq
  \Delta_{C',i} \leq (1/2 + 1/(2n^2)) \Delta_{C,i}\).
  Therefore, the minimum side length at level~\(\ell\) is at least
  \[\Paren{\frac{1}{2} - \frac{1}{2n^2}}^{\ell} \Delta_{C^*} > \frac{\Delta_{C^*}}{2^{\ell}}
  \Paren{1 - \frac{1}{n^2}}^{n^2} > \frac{\Delta_{C^*}}{2^{\ell} e}, \]
  and the maximum side length is at most
  \[\Paren{\frac{1}{2} + \frac{1}{2n^2}}^{\ell} \Delta_{C^*} < \frac{\Delta_{C^*}}{2^{\ell}}
  \Paren{1 + \frac{1}{n^2}}^{n^2} < \frac{e \Delta_{C^*}}{2^{\ell}}. \]
\end{proof}

\subsection{Constructing the spanner}\label{subsec:build-spanner}
We now describe how to build our sparse spanner graph~\(G = (V, E)\) using the warped quadtree
described above.
For each cell~\(C\) in~\(T\), we add a single \EMPH{net point}~\(N_C\) at the center of~\(C\).
The vertices~\(V\) of~\(G\) constitute the set~\(P\) unioned with the set of net points.

We add an edge from each point~\(p \in P\) to the net point~\(N_C\) where~\(C\) is the leaf cell
containing~\(p\).
For each pair of net points \(N_C\) and \(N_{C'}\) such that \(C'\) is the parent cell of \(C\), we
add an edge between \(N_C\) and \(N_{C'}\).
By construction of~\(T\), the cells at any single level~\(\ell\) form a subset of a
\(d\)-dimensional grid, albeit with somewhat uneven spacing between consecutive parallel boundary
hyperplanes.
For each cell~\(C\) belonging to some level~\(\ell\), we add edges from~\(N_C\) to the at most~\((1
/ \eps + 1)^d - 1 = O(1 / \eps^d)\) other net points~\(N_{C'}\) where~\(C'\) is another
level~\(\ell\) cell at most~\(1 / (2\eps)\) grid cells away in each of the \(d\) dimensions.
All edges are weighted according to the Euclidean distance between their endpoints.
We let~\(\dist_G(p, q)\) denote the shortest path distance between vertices~\(p\) and~\(q\)
in~\(G\).

\begin{lemma}
  Let~\(n'\) be the number of points in~\(P\) after contractions.
  The sparse spanner graph~\(G = (V, E)\) has \(O(n' \log n)\) vertices and \(O((n' / \eps^d) \log
  n)\) edges.
  Given the warped quadtree~\(T\), it can be built in~\(O((n' / \eps^d) \log n)\) time.
\end{lemma}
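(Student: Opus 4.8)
The plan is to count vertices, count edges, and bound the construction time, in that order, using the two structural lemmas already proved (Lemma~\ref{lem:num_cells} for the cell count and Lemma~\ref{lem:tree_time} for the quadtree construction time) together with the explicit description of the spanner in Section~\ref{subsec:build-spanner}. For the vertex count, observe that $V = P \cup \Set{N_C : C \in T}$. After all contractions, $|P| = n'$, and by Lemma~\ref{lem:num_cells} the warped quadtree $T$ has $O(n' \log n)$ cells, hence $O(n' \log n)$ net points; adding $n' \le O(n' \log n)$ input points gives $|V| = O(n' \log n)$.

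For the edge count, I would account for the three families of edges separately. Each point $p \in P$ contributes exactly one edge to the net point of its leaf cell, for $n'$ edges total. Each net point $N_C$ contributes one edge to its parent's net point, for $O(n' \log n)$ edges. Finally, each net point $N_C$ at level $\ell$ contributes at most $(1/\eps + 1)^d - 1 = O(1/\eps^d)$ ``horizontal'' edges to nearby level-$\ell$ net points; since there are $O(n' \log n)$ cells, this family has $O((n'/\eps^d) \log n)$ edges, which dominates the first two families. Summing gives $|E| = O((n'/\eps^d) \log n)$.

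For the running time, I would argue that given $T$ we can enumerate all cells in $O(n' \log n)$ time (e.g.\ the construction of $T$ already yields this list, or a traversal of $T$ does). Creating the point-to-leaf edges is $O(n')$ time. Creating the parent edges is $O(1)$ per cell, hence $O(n' \log n)$ total. The only slightly delicate part is creating the horizontal edges: for each cell $C$ at level $\ell$ we must locate, among the $O(n' \log n)$ cells, those level-$\ell$ cells within $1/(2\eps)$ grid steps in each coordinate. Since the level-$\ell$ cells form a subset of a $d$-dimensional grid, I would maintain for each level a dictionary keyed by grid coordinates (built in $O(n' \log n)$ time over all levels, charging $O(\log n)$ or $O(1)$ per cell depending on the dictionary); then each of the $O(1/\eps^d)$ neighbor lookups per cell is fast, for $O((n'/\eps^d) \log n)$ time total. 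The horizontal-edge step is the only place requiring real care; everything else is a direct charge of $O(1)$ or $O(\log n)$ per cell against the $O(n' \log n)$ cells guaranteed by Lemma~\ref{lem:num_cells}.

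The main obstacle — to the extent there is one — is making precise the claim that neighboring level-$\ell$ cells can be found efficiently despite the ``uneven spacing'' of the warped quadtree's boundary hyperplanes. The key point is that the combinatorial grid structure at each level is preserved by the warping (each cell is still indexed by a $d$-tuple of grid coordinates, only the geometric positions of the separating hyperplanes have shifted slightly), so neighbor queries are purely combinatorial and can be answered by hashing on grid indices; the geometric distances needed for edge weights are then computed directly from the stored cell boundaries in $O(d) = O(1)$ time per edge.
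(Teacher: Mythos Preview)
Your proof is correct and takes essentially the same approach as the paper: vertex and edge counts follow directly from Lemma~\ref{lem:num_cells} and the $O(1/\eps^d)$ horizontal edges per net point, and the construction time is dominated by the neighbor-finding step for horizontal edges. The only cosmetic difference is that the paper sorts the cells at each level by grid position to enable neighborhood searches, whereas you build a dictionary keyed by grid indices; both are standard and effectively equivalent here (and your observation that the warping preserves the combinatorial grid indexing is exactly the point that makes either method work).
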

\begin{proof}
  The number of vertices follows immediately from Lemma~\ref{lem:num_cells}.
  We add~\(O(1 / \eps^d)\) edges per net point, establishing the claimed total number of edges.
  We do the following to construct~\(G\).
  Recall, the cells at any particular level form a subset of a grid.
  For each level~\(\ell\), we sort its cells by their location in their grid so we may efficiently
  find all adjacent pairs of cell.
  We then search the neighborhood around each cell to figure out which edges to add to that cell's
  net point.
  Each level contains at most~\(n\) cell, so the total time spent sorting at all levels is~\(O(n'
  \log n)\).
  The time spent searching neighbors is proportional to the number of edges in~\(G\).
\end{proof}

\begin{lemma}\label{lem:approximate_distances}
  The distance between any pair of points \(p,q\in P\) in \(G\) is at most \((1+O(\eps)) \cdot ||p -
  q||_2\).
\end{lemma}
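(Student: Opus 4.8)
The plan is to exhibit, for every pair $p,q\in P$, a single path in $G$ of length at most $(1+O(\eps))\,\|p-q\|_2$; since $d$ is constant, any $\sqrt d$ factors can be absorbed freely. The path will climb from $p$ to the net point of some ancestor cell, cross over to the net point of an ancestor cell of $q$ along one of the ``horizontal'' edges joining net points within $1/(2\eps)$ grid cells of each other, and then descend to $q$. The only real choice is the \emph{crossing level} $\ell^*$: it must be deep enough that the climbing and descending portions are cheap, yet shallow enough that the horizontal edge exists.

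For a level $\ell$, let $C_p^\ell$ (resp.\ $C_q^\ell$) be the level-$\ell$ cell on the root-to-leaf path of $p$ (resp.\ $q$) in $T$, defined whenever $\ell\le L_p$ (resp.\ $\ell\le L_q$), where $L_p,L_q$ denote the leaf depths. Each leaf holds a single point of $P$, so $p$ and $q$ lie in distinct leaves; hence there is a deepest level $c$ with $C_p^c=C_q^c$, and since that cell contains both points while the leaves are distinct, $c<\min(L_p,L_q)$. For $c<\ell\le\min(L_p,L_q)$ the cells $C_p^\ell,C_q^\ell$ are distinct, and at $\ell=c+1$ they are siblings, hence at most one (and so at most $1/(2\eps)$) grid cell apart in every dimension. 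Let $\ell^*$ be the largest $\ell\le\min(L_p,L_q)$ for which $C_p^\ell$ and $C_q^\ell$ are at most $1/(2\eps)$ grid cells apart in every dimension; then $c+1\le\ell^*\le\min(L_p,L_q)$, the cells $C_p^{\ell^*},C_q^{\ell^*}$ are distinct, and $G$ contains the edge $N_{C_p^{\ell^*}}N_{C_q^{\ell^*}}$. I take the path $p\to N_{C_p^{L_p}}\to\cdots\to N_{C_p^{\ell^*}}\to N_{C_q^{\ell^*}}\to\cdots\to N_{C_q^{L_q}}\to q$.

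Bounding its length is routine. By Lemma~\ref{lem:level_length} a level-$\ell$ cell has diameter $O(\Delta_{C^*}/2^\ell)$; a net point sits at its cell's center and two consecutive net points on a root-to-leaf path lie in a common parent cell, so the climbing and descending portions telescope to $O(\Delta_{C^*}/2^{\ell^*})$ in total, as does the last hop $\|p-N_{C_p^{L_p}}\|_2$. The single horizontal edge has length at most $\|N_{C_p^{\ell^*}}-p\|_2+\|p-q\|_2+\|q-N_{C_q^{\ell^*}}\|_2\le\|p-q\|_2+O(\Delta_{C^*}/2^{\ell^*})$. Thus $\dist_G(p,q)\le\|p-q\|_2+O(\Delta_{C^*}/2^{\ell^*})$, and everything reduces to showing $\Delta_{C^*}/2^{\ell^*}=O(\eps\,\|p-q\|_2)$.

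I prove this bound in two cases, and I expect the second to be the crux. If $\ell^*<\min(L_p,L_q)$, then by maximality $C_p^{\ell^*+1}$ and $C_q^{\ell^*+1}$ are more than $1/(2\eps)$ grid cells apart in some dimension $i$; using the lower bound on grid-cell side lengths (the argument of Lemma~\ref{lem:level_length} applies equally to the underlying grid), there are $\Omega(1/\eps)$ full grid cells, each of width $\Omega(\Delta_{C^*}/2^{\ell^*})$, strictly between them, so $\|p-q\|_2\ge|p_i-q_i|=\Omega(\eps^{-1}\Delta_{C^*}/2^{\ell^*})$, as needed. Otherwise $\ell^*=\min(L_p,L_q)$, say $\ell^*=L_p$, and here the \emph{moats} supply the missing lower bound: the cell $C_p^c=C_q^c$ is split by a moat-avoiding hyperplane that separates $p$ from $q$, so by the $\lambda=\Delta_{C_p^c,i}/(2n^2)$ guarantee of Section~\ref{sec:spanner-data_structures} together with Lemma~\ref{lem:level_length} we get $\|p-q\|_2\ge|p_i-q_i|=\Omega(\Delta_{C^*}/(n^2 2^c))$; meanwhile, if $C_{\mathrm{mix}}$ is the deepest ancestor of $p$'s leaf containing at least two points of $P$, then its level is at least $c$ and the single-point chain below it has length at least $\lg(n^2/\eps)$, forcing $L_p\ge c+\lg(n^2/\eps)$ and hence $\Delta_{C^*}/2^{L_p}\le(\eps/n^2)\,\Delta_{C^*}/2^c=O(\eps\,\|p-q\|_2)$. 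The subtle points in this case are verifying that the two points separated by the moat-respecting hyperplane really are $p$ and $q$ (they are, being in different children of their deepest common ancestor) and correctly tying $L_p$ to $c$ through the forced length of the single-point chain; the first case and the length bookkeeping are straightforward.
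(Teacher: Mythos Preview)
Your proposal is correct and follows essentially the same approach as the paper: both proofs choose the deepest level~$\ell^*$ at which the ancestors of $p$ and $q$ are joined by a horizontal edge, bound the path length by $\|p-q\|_2 + O(\Delta_{C^*}/2^{\ell^*})$, and then establish $\Delta_{C^*}/2^{\ell^*}=O(\eps\,\|p-q\|_2)$ via the same two cases (non-adjacency one level deeper, or the moat bound at the deepest common ancestor combined with the $\lg(n^2/\eps)$ single-point chain below it). Your organization---first reducing to the key estimate and then splitting into cases---is slightly cleaner than the paper's, but the content is identical.
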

\begin{proof}
  Let~\(C\) be the deepest/smallest cell containing both~\(p\) and~\(q\).
  There exists at least one axis-aligned hyperplane splitting~\(C\) into children cells that
  specifically separates~\(p\) and~\(q\).
  Let this hyperplane be orthogonal to dimension~\(i\).
  By construction, both~\(p\) and~\(q\) are distance~\(\Delta_{C, i} / n^2\) or more from this
  hyperplane.
  Applying Lemma~\ref{lem:level_length}, we conclude~\(||q - p||_2 \geq \Delta_{C^*} / (2^{\ell_C -
  1} e n^2)\).

  Let~\(C_p\) and~\(C_q\) be the leaf cells containing~\(p\) and~\(q\), respectively.
  Both leaf cells lie at level~\(\ell_C + \lg (n^2 / \eps)\) or higher.
  Therefore, they have side lengths at most \((\eps e \Delta_{C^*}) / (2^{\ell_C} n^2) \leq ((\eps
  e^2) / 2) \cdot ||q - p||_2\).

  Let~\(\ell'\) denote the greatest level where either \(C_p\) and \(C_q\) have the same ancestor at
  level~\(\ell'\) or the net points of their level~\(\ell'\) ancestors are adjacent in~\(G\).
  Let~\(C'_p\) and~\(C'_q\) denote the ancestors of~\(C_p\) and~\(C_q\), respectively, at
  level~\(\ell'\).
  Suppose \(C'_p \neq C_p\) and \(C'_q \neq C_q\).
  In that case, their children cells containing~\(p\) and~\(q\), respectively, do not contain
  adjacent net points.
  There must be at least \(1 / (2\eps)\) level \(\ell' + 1\) cells separating those two children in
  some dimension, implying \(||q - p||_2 \geq \Delta_{C^*} / (2^{\ell' + 2} e \eps)\).
  Meanwhile, the sides of~\(C'_p\) and~\(C'_q\) have length at most \((e \Delta_{C^*}) / 2^{\ell'}
  \leq (4 e^2 \eps) \cdot ||q - p||_2\).
  We conclude, whether or not one of \(C'_p\) or \(C'_q\) is a leaf, they have sides of length at
  most \((4 e^2 \eps) \cdot ||q - p||_2\).

  Let \(N'_p = N_{C'_p}\) and \(N'_q = N_{C'_q}\) denote the net points of \(C'_p\) and \(C'_q\),
  respectively.
  Triangle inequality implies \(\dist_G(N'_p, N'_q) = ||N'_q - N'_p||_2 \leq ||p - N'_p||_2 + ||q -
  p||_2 + ||N'_q - q||_2 \leq (1 + 4 \sqrt{d} e^2 \eps) \cdot ||q - p||_2\).
  Some (admittedly loose) algebra based on the diameters of the descendent cells of \(C'_p\) and
  \(C'_q\) implies both \(\dist_G(p, N'_p)\) and \(\dist_G(q, N'_q)\) to be at most \((\sqrt{d} / 2)
  \sum_{k = 0}^{n^2} (1/2 + 1/(2n^2))^k ((4 e^2 \eps) \cdot ||q - p||_2) \leq (4 \sqrt{d} e^3 \eps)
  \cdot ||q - p||_2\).
  Finally, we see \(\dist_G(p, q) \leq \dist_G(p, N'_p) + \dist_G(N'_p, N'_q) + \dist_G(N'_q, q)
  \leq (1 + 4 \sqrt{d} (e^2 + e^3) \eps) \cdot ||q - p||_2 = (1 + O(\eps)) \cdot ||q - p||_2\).
\end{proof}

\subsection{Reduction to minimum cost flow}
We are now ready to reduce the problem of computing an approximately optimal transportation map for
contracted instance \((P, \supply)\) to one of computing an approximately optimal minimum cost flow
in our sparse spanner graph~\(G = (V, E)\).
Our formulation of the uncapacitated minimum cost flow problem follows prior
work~\cite{knp-pgtp-21,fl-ntasg-22}.

Let~\(\dartsof{E}\) be the set of edges~\(E\) oriented arbitrarily.
A vector~\(f \in \R^{\dartsof{E}}\) indexed by the oriented edges~\(\dartsof{E}\) is called a
\EMPH{flow vector} or often simply a \EMPH{flow}.
Let \(A\) denote the \(|V| \times |\dartsof{E}|\) \EMPH{vertex-edge incidence matrix} where for each
vertex-edge pair \((u, (v, w)) \in V \times \dartsof{E}\), we have \(A_{u, (v,w)} = 1\) if \(u =
v\), \(A_{u, (v,w)} = -1\) if \(u = w\), and \(A_{u, (v,w)} = 0\) otherwise.
The \EMPH{divergence} of a flow \(f\) at a vertex \(v\) is defined as \((Af)_v = \sum_{(v, w)}
f_{(v,w)} - \sum_{(u,v)} f_{(u,v)}\).
For each edge \((u,v) \in \dartsof{E}\), we abuse notation by letting \(f_{(v,u)} := -f_{(u,v)}\).

Let \(|| \cdot ||_{\dartsof{E}}\) denote the norm on \(\R^{\dartsof{E}}\) where
\(||f||_{\dartsof{E}} = \sum_{(u,v) \in \dartsof{E}} |f_{(u,v)}| \cdot ||v - u||_2\).
We define an instance of the \EMPH{uncapacited minimum cost flow problem} in our spanner graph \(G\)
as a pair \((G, b)\) where \(b \in \R^V\) is a given set of vertex divergences.
A feasible solution to the problem is a flow vector \(f\) such that \(Af = b\).
Let \(\cost^*(G, b)\) to be the minimum value \(||f||_{\dartsof{E}}\) among feasible flow
vectors~\(f\).
The goal is to find a flow vector achieving this minimum.

For our reduction from the geometric transportation problem, we define \(b^* \in \R^V\) to be the
set of divergences such that \(b^*_p = \supply(p)\) for all \(p \in P\) and \(b^*_v = 0\) for all
net points~\(v\).
By the construction of~\(G\) and Lemma~\ref{lem:approximate_distances}, we have \(\cost^*(P,
\supply) \leq \cost^*(G, b^*) \leq (1 + O(\eps)) \cost^*(P, \supply)\).
Our goal in Section~\ref{sec:flow} is to compute a feasible flow~\(f\) for \((G, b^*)\) of cost
\(||f||_{\dartsof{E}} \leq (1 + O(\eps)) \cdot \cost^*(G, b^*) = (1 + O(\eps)) \cost^*(P,
\supply)\).

We perform at most~\(n - 1\) contraction operations across the whole of our algorithm.
Lemma~\(\ref{lem:contraction}\) implies the total cost of optimal transportation maps across all
contracted instances form a \((1 + O(1 / n))\)-approximation of the optimal cost for the original
input point set.
We build spanners for all of these instances, and use them to find flows of cost at most \((1 +
O(\eps))\) times the optimal cost of a transportation map for each of these instances.
The total cost of these flows is a \((1 + O(\eps))\)-approximation of the optimal cost for the
original input transportation instance.
If~\(n'\) is the size of any of these minimal point sets, its approximate flow will be computed
in~\(O(n' \eps^{-(d+2)} \log^5 n \log \log n)\) time, implying computing and combining all the
individual flows will take \(O(n \eps^{-(d+2)} \log^5 n \log \log n)\) time.
Finally, in Section~\ref{sec:recover}, we turn this combined flow into a transportation map of no
greater cost in~\(O(n \log^2 n)\) time.

%%% Local Variables:
%%% mode: latex
%%% TeX-master: "deterministic_transportation"
%%% End:

\section{Preconditioning for minimum cost flow}
\label{sec:flow}
Let~\(G = (V, E)\) be the spanner defined in Section~\ref{sec:spanner} for contracted geometric
transportation instance~\((P, \supply)\), and let~\(b^*\) be the set of divergences defined for
the corresponding instance of minimum cost flow.
In this section, we describe a way to find a \((1 + O(\eps))\)-approximate solution for the minimum
cost flow instance \((G, b^*)\) using Sherman's generalized preconditioning
framework~\cite{s-gpumf-17}.

Let \(C_p\) be denote the leaf cell containing a point \(p\in P\).
By the definition of \(G\), point \(p\) is incident to exact one edge connecting \(p\)
to~\(N_{C_p}\).
For simplicity, let \(f'\) be a flow such that, for all points \(p \in P\), \(f'_{(p, N_{C_p})} =
b^*_p\).
From now on, we assume \(G = (V, E)\) does not have any point in \(P\), and let \(b\in \R^V\) such
that \(b_{N_{C_p}}=b^*_p, \forall p\in P\) and \(b_v = 0\) otherwise.
We focus on finding an \((1 + O(\eps))\)-approximation \(f\) of the minimum cost flow instance on
\((G, b)\).
The flow \(f + f'\) is then a \((1 + O(\eps))\)-approximate solution for the minimum cost flow
instance \((G, b^*)\).

Consider any instance of the minimum cost flow problem in \(G\) with an arbitrary divergence vector
\(\tilde{b} \in \R^{V}\), and let \(f^*_{\Tilde{b}}:=\Argmin_{f\in\R^{\dartsof{E}},
Af=\Tilde{b}}{||f||_{\dartsof{E}}}\).
A flow vector \(f\in\R^{\dartsof{E}}\) is an \EMPH{\((\alpha, \beta)\)-solution} to the problem if
\begin{align}
    \nonumber ||f||_{\dartsof{E}}&\le \alpha||f^*_{\Tilde{b}}||_{\dartsof{E}}\\
    \nonumber ||Af-\tilde{b}||_1&\le \beta||A||\,||f^*_{\Tilde{b}}||_{\dartsof{E}}
\end{align}
where \(||A||\) is the norm of the linear map represented by \(A\).
An algorithm yielding an \((\alpha, \beta)\)-solution is called an \EMPH{\((\alpha, \beta)\)-solver}.

By arguments in~\cite{knp-pgtp-21}, we seek a preconditioner \(B\in \R^{V\times V}\) of full column rank such that, for any \(\Tilde{b}\in\R^V\) with \(\sum_{v\in V}{\Tilde{b_v}}=0\), it satisfies

\begin{equation}\label{eq:3.3}
||B\Tilde{b}||_1\le \Min\{||f||_{\dartsof{E}}:f\in \R^{\dartsof{E}}, Af=\Tilde{b}\}\le \kappa ||B\Tilde{b}||_1
\end{equation}
for some sufficiently small function \(\kappa\) of \(n\), \(\eps\), and \(d\).

Let \(M\) be the time it takes to multiply \(BA\) and \((BA)^T\) by a vector. Then there exists a \((1+\eps, \beta)\)-solver for any \(\eps, \beta>0\) for this problem with running time bounded by \(O(\kappa^2(|V|+|\dartsof{E}|+M)\log{|\dartsof{E}|}(\eps^{-2}+\log{\beta^{-1}})\)~\cite{s-gpumf-17}. Moreover, if a feasible flow \(f\in \R^{\dartsof{E}}\) with cost \(||f||_{\dartsof{E}}\le \kappa ||B\Tilde{b}||_1\) can be found in time \(K\), there is a \((\kappa, 0)\)-solver with running time \(K\).
By setting \(\beta=\eps\kappa^{-2}\)~\cite{knp-pgtp-21}, the composition of these two solvers is a \((1+2\eps, 0)\)-solver with running time bounded by

\begin{displaymath}
O(\kappa^2(|V|+|\dartsof{E}|+M)\log{|\dartsof{E}|}(\eps^{-2}+\log{\kappa})+K).
\end{displaymath}

\subsection{Legal shifts, blobs, and probabilities}
Our preconditioner~\(B\) essentially simulates the effects of randomly shifting~\(T\) along a diagonal.
For each level of~\(T\), we need to determine the probabilities that a random shift, conditioned on
cell boundaries not touching any moats, causes certain cells to contain certain subsets of vertices.
In order to do so efficiently, we build two sets of data structures.
One set helps us determine the set of legal shifts for~\(T\) relative to each dimension, and the other set
helps us maintain collections of points that cannot be separated into distinct cells during a legal
shift.
The second set of data structures will prove useful for efficiently determining how much flow to
send from several vertices at once as figuring these flows out for each individual vertex would be
too expensive.

Let \(\Delta_{\ell} = \Delta_{C^*}/(2^{\ell}e)\) be the lower bound on the side length of any
level~\(\ell\) cell as established in Lemma~\ref{lem:level_length}.
For each level \(\ell\), we compute a maximal set \(\S_{\ell_{i}}\) of values in \([0,
\Delta_{\ell})\) per dimension \(i\in \D\) such that no grid lines clip any moat of size
\(\frac{2\Delta_{\ell}}{n^2}\) around any vertex \(v\in V\) when we shift the grid by any value in
\(\S_{\ell_{i}}\).
We call \(\S_{\ell_{i}}\) the \EMPH{legal shifts} at level \(\ell\) relative to dimension \(i\), and we use
\(\S_{\ell} := \cap_i \S_{\ell_{i}}\) to denote the collection of legal shifts relative to all dimensions.

For computing \(\S_{\ell}\), we define a \EMPH{blob} at level \(\ell\) as a maximal set of points
that are guaranteed to be in the same cell at this level in our quadtree after an arbitrary shift.
Now we describe how to compute the set of all blobs \(\blb_{\ell}\) at each level \(\ell\).
Because the moat size is equal to \(\frac{2\Delta_{\ell}}{n^2}\), a blob can only become larger when \(\ell\) decreases.
Therefore, we compute each \(\blb_{\ell}\) in decreasing order of~\(\ell\) by combining smaller
blobs to bigger ones as we consider each level.
For any net point \(N_C\) and level \(\ell < \ell_C\), let \(\blb_{\ell}(N_C)\) denote the blob
containing \(N_C\) at level \(\ell\).
We leave \(\blb_{\ell}(N_C)\) undefined for any \(\ell \geq \ell_C\).
Let \(\bdx_{\blb}\) be a \emph{near}-minimum bounding box of all points in a blob \(\blb\).
More specifically, if blob \(\blb\) is at level \(\ell\), we make \(\bdx_{\blb}\) the box obtained
by extending the minimum bounding box of \(\blb\) by \(\frac{\Delta_{\ell}}{n^2}\) in each
dimension.
We use \(\bdx^{l, i}_{\blb}\) and \(\bdx^{r, i}_{\blb}\) to denote the coordinates of left and right
sides of \(\bdx_{\blb}\) for each dimension \(i\).

We now describe a way to compute \(\blb_{\ell}\) given \(\blb_{\ell+1}\).
First, for every \(\blb\in \blb_{\ell+1}\), we extend \(\bdx_{\blb}\) so it is now a near-minimum
bounding box of \(\blb\) at level \(\ell\).
Then, we sort blobs in \(\blb_{\ell+1}\) by the least coordinate of their bounding boxes in the
first dimension.
After that, we split these blobs to subsets by putting any two blobs \(\blb1\) and \(\blb2\) to the
same subset if \([\bdx^{l, i}_{\blb1}, \bdx^{r, i}_{\blb1}]\) and \([\bdx^{l, i}_{\blb2}, \bdx^{r,
i}_{\blb2}]\) are not disjoint, because we cannot put a grid line between these two blobs without
hitting the moat around at least one point in them.
For each of these subsets, we recursively perform this procedure for the remaining \(d-1\)
dimensions.
Every subset at the lowest level of recursion is a blob in \(\blb_{\ell}\).
If \(\ell\) is the largest level, we may assume \(\blb_{\ell+1}\) is the set of blobs where each
blob contains a single vertex in \(V\).
For each blob \(\blb \in \blb_{\ell}\), we call the blobs at level \(\ell+1\) inside it its \EMPH{child blobs}.
We use \(\cld(\blb)\) to denote the set of child blobs of \(\blb\) and \(\prnt(\blb)\) to denote the parent blob of \(\blb\).
Since the sets of points in blobs at the same level are disjoint, it is easy to see that the number
of distinct blobs among all levels together is at most \(2|V|-1\).
Now, consider a blob~\(\blb\) for some level~\(\ell\), and let~\(C\) be the level~\(\ell\) cell
containing~\(\blb\).
By the definition of~\(T\), there is an ancestor of~\(C\) at most~\(O(\log n)\) levels closer to the
root that has a sibling cell~\(C'\).
Cell~\(C'\) contains at least one blob~\(\blb'\).
Within another~\(O(\log n)\) levels, blobs \(\blb\) and \(\blb'\) will have near-bounding boxes
large enough to touch.
Therefore, every blob can only appear in at most \(O(\log{n})\) levels.
We define the \(\EMPH{blob forest}\) as the hierarchical structure of blobs defined above.  For
simplicity, we allow the same blob (with the same set of points) to appear multiple times in the
blob forest, once per level it appears.
The blob forest has \(O(|V|\log{n})\) nodes in total.

We now compute the legal shifts at each level \(\ell\) using blobs in \(\blb_{\ell}\).
To compute \(\S_{\ell_i}\) for some dimension \(i\), we look at \(\bdx^{l, i}_{\blb}\) and \(\bdx^{r, i}_{\blb}\) for \(\blb\in \blb_{\ell}\).
Let \(C\) be the cell that contains \(\blb\) at level \(\ell\).
Let \(\coor_{l,i}(C)\) and \(\coor_{r,i}(C)\) denote the coordinates of the left and right sides of \(C\) in dimension \(i\), respectively.
Then we call \([0,\Delta_{\ell}) \cap (\bdx^{l, i}_{\blb} - \coor_{l, i}(C), \bdx^{r, i}_{\blb} - \coor_{l, i}(C))\) the set of \EMPH{forbidden shifts} for \(\blb\).
Naturally, the set of legal shifts \(\S_{\ell_i}\) is equal to \([0, \Delta_{\ell})\) minus the union of forbidden shifts of all blobs in this dimension.
We can store \(\S_{\ell_i}\) in an array of size at most the number of blobs in \(\blb_{\ell}\) such that every element in the array is a maximal continuous subset in the union.
From now on, we assume elements in \(\S_{\ell_i}\) is sorted in increasing order by their lower bounds.
Therefore, we can precompute the total size of legal shifts before any element and then query the size of all legal shifts in \([x, y] \cap \S_{\ell_i} \) for any values \(x,y\) in \(O(\log{n})\) time.
The construction of all data structures mentioned above can be accomplished in~\(O(|V| \log^2 n)\)
time total through careful use of dynamic ordered dictionaries such as balanced binary search trees.

The last preparation for constructing the preconditioner is to compute the probability that a cell \(C\) contains a blob \(\blb\) if we shift the grid using a random value in \(\S_{\ell_C}\), for every pair of \(\blb\) and \(C\) at the same level.
We use \(\probability[\blb\in C]\) to denote this probability.
Recall, the side length of any cell at level \(\ell\) is at least \(\Delta_{\ell}\).
Let \(C_{\ell, \blb}\) be the cell containing \(\blb\) at level \(\ell\).
If we consider the legal shifts putting \(\blb\) in different cells in increasing order, we see each dimension is crossed at most once.
Therefore, there are at most \(d + 1\) cells for which we need to calculate the probability for each blob per level.
Let \(\cells_{\ell, \blb}\) denote this subset of cells.
Starting from the root level, for every level \(\ell\), we process the cells \(\cells_{\ell,
\blb}\).
%in \(O(2^d)\) time by looking at the neighbors of \(C_{\ell, \blb}\) for \(\blb\in \blb_{\ell}\).
Not every cell in~\(T\) has all~\(2^d\) possible children, so some of the left neighbors of \(C_{\ell, \blb}\) in the grid at level~\(\ell\) may not exist in~\(T\) itself.
For simplicity, we put a \EMPH{softlink} to \(C_{\ell, \blb}\) in place of such a grid cell if it
does not already have one.
For each \(C \in \cells_{\ell,\blb}\), we define \(\S_{C,\blb,i}\) to be the subset of \(\S_{\ell_i}\) that could make \(C\) cover \(\blb\) in dimension \(i\).
%If \(\coor_{l,i}(C) \leq \bdx^{l,i}_{\blb}\), then \(l_{C,\blb,i} = |[0, \bdx^{l,i}_{\blb} -
%\coor_{l,i}(C_{\ell, \blb})] \cap \S_{\ell_i}|\).
%Otherwise, \(l_{C,\blb,i} = |[\bdx^{r,i}_{\blb} - \coor_{l, i}(C_{\ell,\blb}), \Delta_{\ell}] \cap
%\S_{\ell_i}|\).
%Observe here \(\bdx^{l,i}_{\blb}\) and \(\bdx^{r,i}_{\blb}\) are exchangeable, because
%\([\bdx^{l,i}_{\blb} - \coor_{l,i}(C_{\ell, \blb}), \bdx^{r,i}_{\blb} - \coor_{l,i}(C_{\ell, \blb})]
%\cap \S_{\ell_i}\) is empty.
%Let \(l_{\ell,i}\) denote the total length of all legal shifts in \(\S_{\ell_i}\).
Let \(l_{\ell} := |\S_{\ell}|\) and \(l_{C, \blb} := |\cap_i \S_{C,\blb,i}|\).
We have \(\probability[\blb\in C] = \frac{l_{C, \blb}}{l_{\ell}}\).

\subsection{The preconditioner}

The preconditioner \(B\) is the \(V\times V\) matrix defined as follows.
For every net point \(u = N_C\), for every level \(\ell < \ell_C\), we let
\(\probability'[\blb_{\ell}(u) \in C]\) denote the sum of values \(\probability[\blb_{\ell}(u) \in
C']\) for all \(C'\) equal to or softlinked to~\(C\).
We set \(B_{N_{C''}, u} = \frac{\Delta_{\ell_{C}}}{3\Lambda}\cdot \probability'[\blb_{\ell}(u)\in
C'']\) where \(\Lambda=48d^{3/2}e^2\lg{n}\), for each cell \(C''\in \cells_{\ell,
\blb_{\ell}(u)}\) that is part of~\(T\).
In addition, we set \(B_{u, u} = \frac{\Delta_{\ell_{C}}}{3\Lambda}\) for all \(u \in V\) and set
all other entries to~\(0\).

Observe how for each column~\(N_C\) of~\(B\), the entries for each row~\(N_{C'}\) with \(\ell_{C'}
\geq \ell_C\) are all~\(0\), with the exception of row~\(N_C\) itself.
Any linear combination of columns excluding column~\(N_C\) with \(B_{N_C,N_C} =
\frac{\Delta_{\ell_C}}{3\Lambda}\) must have at least one non-zero value for some row~\(N_{C'},
\ell_{C'} > \ell_C\), implying the combination does not equal column~\(N_C\).
Matrix~\(B\) has full column rank.

We now describe an oblivious greedy algorithm that computes a flow~\(f\) such that~\(Af =
\Tilde{b}\) and the cost is at most \(\Lambda\) times that of the minimum cost.
This algorithm is used in the algorithm explicitly as the \((\kappa, 0)\)-solver discussed above,
and its existence is also used in establishing the condition number of~\(BA\).
We treat each blob as if it is moving the total divergence of higher level constituent vertices
together up toward the root.
By the time all the divergences reach the root, they will cancel each other out and the flow will be
valid for the vector~\(\Tilde{b}\).
For a blob \(\blb \in \blb_{\ell}\), define \(b_{\blb} := \sum_{N_C \in \blb \mid \ell_C \geq \ell}
\Tilde{b}_{N_C}\).
Observe, \(b_{\blb} = \sum_{\blb' \in \cld(\blb)} b_{\blb'} + \sum_{N_C \in \blb \mid \ell_C = \ell
} \Tilde{b}_{N_C}\).
To aid in moving divergences we treat the total divergence of each child blob of \(\blb\) as
a separate commodity.
The flow along each edge will be the sum of flows of all commodities on that edge.
%, and the divergence of a commodity will become part of the divergence of another commodity when its corresponding blob does not exist.
%In other words, we define \(b_{\blb} = \sum_{\blb'\in \cld(\blb)}{b_{\blb'}}\).
%The divergence of each commodity will not decrease during the algorithm.
%Let \(f_{\blb}\) denote the flow for a single commodity \(\blb\).
Precisely, for every cell \(C\) in a postorder traversal of~\(T\), for every \(\blb\) with
\(\probability'[\blb\in C] > 0\), for every cell \(C'\in \cells_{\ell_C - 1, \prnt(\blb)}\), we add
\(\probability'[\prnt(\blb)\in C']\probability'[\blb \in C] b_{\blb}\) units of flow along the
unique path from~\(N_C\) to its parent to~\(N_C'\).
Observe that for any blob \(\blb \in \blb_{\ell}\) for some level~\(\ell\), we send \(b_{\blb}\)
units total to level \(\ell - 1\) cells.
%\(f_{\blb}(N_C, N_{C'})\) if \(b_{\blb} > 0\), to \(f_{\blb}(N_{C'}, N_{C})\) otherwise.

We now establish both the approximation ratio of the greedy algorithm and the condition number
of~\(BA\).
Let \(f^*_{\Tilde{b}}:=\Argmin_{f\in\R^{\dartsof{E}}, Af=\Tilde{b}}{||f||_{\dartsof{E}}}\).
We arbitrarily decompose \(f^*_{\tilde{b}}\) into a set of flows \(F = \Set{f^1, f^2, \dots}\) with
the following properties: 1) each flow follows a simple path between two vertices \(u\) and \(v\);
2) for each flow \(f^i \in F\) and edge \((u, v) \in \dartsof{E}\) either \(f^i(u, v) = 0\) or its
sign is equal to the sign of \(f^*_{\tilde{b}}(u,v)\); 3) for each flow \(f^i \in F\) and vertex
\(v\), either \((Af^i)_v = 0\) or its sign is equal to that of \(\tilde{b}_v\); and 4) for each edge \((u,v) \in \dartsof{E}\), we have \(f^*_{\tilde{b}}(u, v) = \sum_{f^i \in F} f^i(u, v)\).
The existence of such a decomposition is a standard part of network flow theory.

Let~\(f\) be the flow found by our greedy algorithm.
We charge a portion of~\(||f||_{\dartsof{E}}\) to~\(||f^i||_{\dartsof{E}}\) for each flow~\(f^i\) so
that the sum of charges over all choices for~\(f^i\) sum to at least~\(||f||_{\dartsof{E}}\) and for
any one~\(f^i\), we overcharge by a factor of at most~\(\Lambda\).
Fix some \(f^i\in F\) sending flow from some vertex \(u\) to some vertex \(v\).
Let \(b^i_u = Af^i_u\).
%denote the divergence of \(u\) we charge to \(||f^i||_{\dartsof{E}}\).
We define \(b_u(C)\) as the part of divergence \(b^i_u\) that the greedy algorithm sends to \(C\).
Observe~\(b^i_u = -b^i_v\).
Without loss of generality we assume \(b^i_u\ge 0\) and \(b^i_v\le 0\).

Now we are ready to give the main lemma of the greedy algorithm.

\begin{lemma}\label{lem:surviving-b}
  Let \(C_u\) and \(C_v\) be the leaf cells containing \(u\) and \(v\), respectively.
  Let \(\ell\) be any level with \(\ell \le \min\{\ell_{C_u}, \ell_{C_v}\}\).
  Then the total amount of \(b^i_u\) not cancelled out by \(b^i_v\) at level \(\ell\) is
  \begin{align*}
      \sum_{C\in T, \ell_C = \ell} \max\{0, (b_u(C) + b_v(C))\} \le \frac{4d||u-v||_2}{\Delta_{\ell}} b^i_u.
  \end{align*}
\end{lemma}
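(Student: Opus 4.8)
The plan is to turn the left-hand side into the probability that a uniformly random \emph{legal} shift separates \(u\) from \(v\), and then bound that probability by the standard randomly-shifted-grid calculation, paying for the legality restriction with only a constant factor.

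\emph{Reduction to a separation probability.} Running the greedy algorithm on the divergence vector \(Af^i\) (which is \(b^i_u\) at \(u\), \(b^i_v=-b^i_u\) at \(v\), and \(0\) elsewhere), the divergence originating at \(u\) is, by the time it has been pushed up to level \(\ell\le\ell_{C_u}\), carried by the blob \(\blb_\ell(u)\) and spread over the cells of \(\cells_{\ell,\blb_\ell(u)}\) in proportion to the shift probabilities, so \(b_u(C)=\probability'[\blb_\ell(u)\in C]\,b^i_u\) and likewise \(b_v(C)=-\probability'[\blb_\ell(v)\in C]\,b^i_u\). Hence
\[
\sum_{C\in T,\ \ell_C=\ell}\max\{0,\,b_u(C)+b_v(C)\}\;=\;b^i_u\sum_{C\,:\,\ell_C=\ell}\bigl(\probability'[\blb_\ell(u)\in C]-\probability'[\blb_\ell(v)\in C]\bigr)^{+}.
\]
If \(\blb_\ell(u)=\blb_\ell(v)\) the right-hand sum is \(0\); and if \(\|u-v\|_2\ge\Delta_\ell\) the claimed bound is \(\ge 4d\,b^i_u\ge b^i_u\), which trivially dominates a difference of probability masses. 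So assume \(\blb_\ell(u)\ne\blb_\ell(v)\) and \(\|u-v\|_2<\Delta_\ell\). For a uniformly random legal shift \(s\in\S_\ell\), let \(\mathrm{cell}(w,s)\) be the cell of \(T\) receiving \(w\)'s divergence at level \(\ell\), i.e.\ the level-\(\ell\) cell \(w\) falls into under \(s\), resolved through its softlink if that grid position is absent from \(T\). Then \(\probability'[\blb_\ell(u)\in\cdot]\) and \(\probability'[\blb_\ell(v)\in\cdot]\) are exactly the laws of \(\mathrm{cell}(u,s)\) and \(\mathrm{cell}(v,s)\), so the sum above is their total variation distance, which the shared-shift coupling bounds by \(\probability_s[\mathrm{cell}(u,s)\ne\mathrm{cell}(v,s)]\). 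The key point is that \(\mathrm{cell}(w,s)\) depends on \(w\) only through which grid cell \(w\) occupies under \(s\): two vertices in a common grid cell resolve (directly, or through that cell's single softlink) to one and the same cell of \(T\), so softlinks never create a separation the raw grid does not already create. Thus \(\probability_s[\mathrm{cell}(u,s)\ne\mathrm{cell}(v,s)]\le\probability_s[u,v\text{ occupy different grid cells under }s]\).

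\emph{The grid calculation.} A union bound over the \(d\) axis directions reduces this to: for each \(j\in\D\), the legal shifts dropping an axis-\(j\) grid line strictly between \(u_j\) and \(v_j\). Since the cells have side at least \(\Delta_\ell\) and \(|u_j-v_j|<\Delta_\ell\), at most one axis-\(j\) line can ever lie between them, so within the full shift range \([0,\Delta_\ell)\) this event has measure at most \(|u_j-v_j|\), and after conditioning on legality its probability is at most \(|u_j-v_j|/|\S_{\ell_j}|\). It remains to show \(|\S_{\ell_j}|\ge\Delta_\ell/2\): the illegal shifts in direction \(j\) are those placing an axis-\(j\) line inside some vertex's moat of size \(\Theta(\Delta_\ell/n^2)\), and each of the \(|V|=O(n'\log n)=O(n\log n)\) vertices forbids a subset of \([0,\Delta_\ell)\) of measure \(\Theta(\Delta_\ell/n^2)\), for a total forbidden measure \(O(\Delta_\ell\log n/n)=o(\Delta_\ell)\); hence \(|\S_{\ell_j}|\ge(1-o(1))\Delta_\ell\ge\Delta_\ell/2\) for \(n\) large. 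Summing over \(j\) and using \(\|u-v\|_1\le\sqrt d\,\|u-v\|_2\) gives \(\probability_s[\mathrm{cell}(u,s)\ne\mathrm{cell}(v,s)]\le 2\|u-v\|_1/\Delta_\ell\le 4d\|u-v\|_2/\Delta_\ell\); combined with the display this proves the lemma.

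\emph{Main obstacle.} I expect the delicate part to be the softlink/warping bookkeeping: writing down \(\mathrm{cell}(w,s)\) precisely in terms of the warped quadtree's slightly unequal cells and the softlinks, and verifying the claim that a shared grid cell always resolves consistently for both \(u\) and \(v\); an error here would cost more than a constant factor and break the \(\Lambda=\Theta(\log n)\) bound the preconditioner relies on. The estimate \(|\S_{\ell_j}|\ge\Delta_\ell/2\) is the other load-bearing step, and it goes through only because the moats shrink with the level (size \(\Theta(\Delta_\ell/n^2)\)) while \(|V|=O(n\log n)\ll n^2\) — which is exactly why the construction ties moat size to the level.
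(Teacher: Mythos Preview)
Your argument is correct and reaches the same conclusion as the paper, but the organization is genuinely different and worth noting. The paper partitions \(\cells_{\ell,\blb_\ell(u)}\) for each dimension \(j\) into a ``left'' half \(\mathbb{L}_j\) and a ``right'' half \(\mathbb{R}_j\), bounds the signed mass \(b_u(L_j)+b_v(L_j)\) directly by \(\bigl(l_{\mathbb{L}_j,\blb_\ell(u)}-l_{\mathbb{L}_j,\blb_\ell(v)}\bigr)b^i_u/|\S_\ell|\le 2\|u-v\|_2\,b^i_u/\Delta_\ell\), and then sums the \(2d\) half-space terms. You instead recognise the left-hand side as \(b^i_u\) times the total-variation distance between the laws of \(\mathrm{cell}(u,s)\) and \(\mathrm{cell}(v,s)\), couple via the shared shift \(s\), and reduce to a grid-separation probability plus a union bound over coordinates. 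Both routes hinge on the same two load-bearing facts, namely that the measure of shifts dropping an axis-\(j\) hyperplane between \(u_j\) and \(v_j\) is at most \(|u_j-v_j|\), and that \(|\S_\ell|\ge\Delta_\ell/2\); your coupling formulation just packages them more conceptually, while the paper's \(L_j/R_j\) computation is closer to the data structures actually built.

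One small correction: after restricting to legal shifts you must divide by \(|\S_\ell|=|\cap_i \S_{\ell_i}|\), not by \(|\S_{\ell_j}|\), since the diagonal shift is drawn from \(\S_\ell\). Dividing by the larger \(|\S_{\ell_j}|\) would claim a tighter upper bound than is justified. Your own forbidden-measure argument already gives \(|\S_{\ell_i}|\ge(1-o(1))\Delta_\ell\) for every \(i\), whence \(|\S_\ell|\ge\Delta_\ell-\sum_i(\Delta_\ell-|\S_{\ell_i}|)\ge\Delta_\ell/2\), so the fix is a one-line addition. (The paper derives the same inequality from the \(O(|V|\log n)\) bound on blob-forest nodes rather than from the \(|V|\) vertex moats directly; either count is small enough.) With that adjustment your proof is complete, and your caution about the softlink bookkeeping is well placed but ultimately unnecessary: since a missing grid cell carries at most one softlink shared across blobs, two vertices landing in the same shifted grid cell always resolve to the same cell of \(T\).
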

\begin{proof}
  \def\leftcells{\mathbb{L}}
  \def\rightcells{\mathbb{R}}
  Let \(\leftcells_j \subseteq \cells_{\ell, \blb_{\ell}(u)}\) denote the subset of cells in \(\cells_{\ell, \blb_{\ell}(u)}\) that have lesser coordinates in dimension~\(j\), and let \(\rightcells_j := \cells_{\ell, \blb_{\ell}(u)} \setminus \leftcells_j\).
  If \(\cells_{\ell, \blb_{\ell}(v)} \cap \leftcells_j = \emptyset\) or \(\cells_{\ell, \blb_{\ell}(v)} \cap \rightcells_j = \emptyset\) for any \(j\), then  \(||u-v||_2\ge |\coor_j(u) - \coor_j(v)|\ge (1-2/n)\Delta_{\ell}\) and the lemma holds.
  From here on, we assume both \(\cells_{\ell, \blb_{\ell}(v)} \cap \leftcells_j\) and \(\cells_{\ell, \blb_{\ell}(v)} \cap \rightcells_j\) are non-empty for all dimensions \(j\).

  Let \(b_u(L_j) := \sum_{C \in \leftcells} b_u(C)\) be the total amount of \(b^i_u\) sent to \(\leftcells_j\), and define \(b_u(R_j)\), \(b_v(L_j)\), and \(b_v(R_j)\) similarly.
  There are~\(O(|V| \log n)\) nodes in the blob forest, and we may assume \(|V| \leq O(n \log n)\).
  Therefore, \(|s_{\ell}| \geq \Delta_{\ell} - \frac{d \cdot O(n \log^2 n) \Delta_{\ell}}{n^2} \geq \Delta_{\ell} / 2\), assuming \(n\) is sufficiently large.
  Let \(l_{\leftcells_j,\blb_{_\ell}(u)}\) denote the total length of legal shifts in \(s_{\ell}\) that make \emph{any} cell of \(\leftcells_j\) cover \(\blb_{\ell}(u)\)
  and define \(l_{\leftcells_j,\blb_{\ell}(v)}\) similarly.
  We have \(l_{\leftcells_j,\blb_{\ell}(u)} - l_{\leftcells_j,\blb_{\ell}(v)} \leq ||u-v||_2\).
  Therefore,
  \[
    b_u(L_j) + b_v(L_j) = \frac{l_{\leftcells_j,\blb_{\ell}(u)} - l_{\leftcells_j,\blb_{\ell}(v)}}{s_{\ell}} b^i_u 
    \leq \frac{2||u - v||_2}{\Delta_{\ell}} b^i_u.
  \]
  Similarly, \(b_u(R_j) + b_v(R_j) \leq 2||u - v||_2 / \Delta_{\ell} \cdot b^i_u\).

  Finally, summing over all dimensions \(j\), we have
  \begin{align*}
      \sum_{C\in T, \ell_C = \ell} \max\{0, (b_u(C) + b_v(C))\} &\leq \sum_{j \in \D} \max\{0, b_u(L_j) + b_v(L_j)\} + \max\{0, b_u(R_j) + b_v(R_j)\}\\
      &\leq \frac{4d||u - v||_2}{\Delta_{\ell}} b^i_u.
  \end{align*}

\end{proof}

\begin{lemma}\label{lem:total-overcharge}
  The flow computed by the greedy algorithm overcharges the cost of \(f^i\) by a factor of at most \(48d^{3/2}e^2\lg{n}\).
\end{lemma}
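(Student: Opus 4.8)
The plan is to exploit that the greedy construction is linear in the divergence vector. For any blob~$\blb$ at level~$\ell$, the value~$b_\blb = \sum_{N_C \in \blb\,:\,\ell_C \geq \ell} \tilde b_{N_C}$ depends linearly on~$\tilde b$ while the probabilities~$\probability'[\cdot \in \cdot]$ do not, so the flow~$f$ produced by the greedy algorithm is a linear function of~$\tilde b$. Since the chosen decomposition of~$f^*_{\tilde b}$ satisfies~$\tilde b = A f^*_{\tilde b} = \sum_{f^i \in F} A f^i$, we get~$f = \sum_{f^i \in F} f^{(i)}$, where~$f^{(i)}$ is the greedy flow run on the single vector~$A f^i$ (hence~$A f^{(i)} = A f^i$). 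By the triangle inequality for~$||\cdot||_{\dartsof{E}}$ we have~$||f||_{\dartsof{E}} \leq \sum_i ||f^{(i)}||_{\dartsof{E}}$, so I would charge~$||f^{(i)}||_{\dartsof{E}}$ to~$f^i$, and it suffices to prove~$||f^{(i)}||_{\dartsof{E}} \leq 48 d^{3/2} e^2 \lg n \cdot ||f^i||_{\dartsof{E}}$ for each~$f^i$. Fix such an~$f^i$, which carries~$b^i_u \geq 0$ units along a single path from~$u = N_{C'_u}$ to~$v = N_{C'_v}$, and set~$\ell^\circ = \min\{\ell_{C'_u}, \ell_{C'_v}\}$. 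Then~$||f^i||_{\dartsof{E}} \geq b^i_u \cdot \dist_G(u,v)$, and~$\dist_G(u,v) \geq ||u-v||_2$ as well as~$\dist_G(u,v) = \Omega(\Delta_{\ell^\circ})$, the latter because every edge of~$G$ incident to a net point~$N_C$ joins it to the net point of a parent, a child, or a nearby equal-level cell, each of which lies at distance~$\Omega(\Delta_{\ell_C})$ from the center of~$C$.

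Next I would bound~$||f^{(i)}||_{\dartsof{E}}$ level by level. The greedy algorithm pushes~$b^i_u$ units up from~$N_{C'_u}$ and~$-b^i_u$ units up from~$N_{C'_v}$, carrying~$\blb_\ell(u)$ and~$\blb_\ell(v)$ as separate commodities; once these two blobs coincide, at some level~$\ell_1$, the value~$b_\blb$ becomes~$0$ and nothing further is routed, so~$f^{(i)}$ is supported only on edges between levels in~$(\ell_1, \max\{\ell_{C'_u}, \ell_{C'_v}\}]$. At a level~$\ell$ with~$\ell_1 < \ell \leq \ell^\circ$ both commodities are present, and the net amount pushed out of level~$\ell$ is~$\sum_{C\,:\,\ell_C = \ell} |b_u(C) + b_v(C)| = 2\sum_{C\,:\,\ell_C = \ell} \max\{0, b_u(C) + b_v(C)\}$, which is at most~$\frac{8 d\, ||u-v||_2}{\Delta_\ell}\, b^i_u$ by Lemma~\ref{lem:surviving-b}; this flow travels along routing paths~$N_C \to N_{\prnt(C)} \to N_{C'}$ of length~$O(\sqrt d\, \Delta_\ell)$, both hops staying inside the~$O(d)$-cell neighborhood~$\cells_{\ell-1, \prnt(\blb)}$, so the cost charged at level~$\ell$ is~$O(d^{3/2}\, ||u-v||_2\, b^i_u)$. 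For this bound to be usable I must also argue that the cancellation underlying Lemma~\ref{lem:surviving-b} extends to the ``nearby'' equal-level edges: at the levels where~$\blb_\ell(u)$ and~$\blb_\ell(v)$ have a common parent blob, the two commodities route to the same destination cells with opposite weights and cancel edge by edge there too. Then the blob-structure argument — a blob occupies only~$O(\log n)$ levels, and the blobs of~$u$ and~$v$ merge within~$O(\log n)$ levels of~$\ell^\circ$ since the moat radius~$2\Delta_\ell/n^2$ overtakes~$||u-v||_2$ after that many levels — bounds the number of these levels by~$O(\log n)$, giving a total of~$O(d^{3/2} \log n)\, ||u-v||_2\, b^i_u \leq O(d^{3/2} \log n)\, ||f^i||_{\dartsof{E}}$.

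For the remaining levels~$\ell \in (\ell^\circ, \max\{\ell_{C'_u}, \ell_{C'_v}\}]$ only one commodity is present, so at most~$b^i_u$ units leave level~$\ell$ and the cost charged there is~$O(\sqrt d\, \Delta_\ell\, b^i_u)$; summing the geometric series~$\sum_{\ell > \ell^\circ} \Delta_\ell = \Delta_{\ell^\circ}$ yields a total of~$O(\sqrt d\, \Delta_{\ell^\circ}\, b^i_u) = O(\sqrt d)\, ||f^i||_{\dartsof{E}}$ by the lower bound from the first paragraph. Adding the two contributions gives~$||f^{(i)}||_{\dartsof{E}} = O(d^{3/2} \log n)\, ||u-v||_2\, b^i_u + O(\sqrt d)\, ||f^i||_{\dartsof{E}} = O(d^{3/2} \log n)\, ||f^i||_{\dartsof{E}}$, and carrying the explicit constants through the edge-length and level-count estimates recovers the stated factor~$48 d^{3/2} e^2 \lg n$.

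The hardest step will be the cost accounting at the levels where both commodities are present: one must verify that the parent-edge cancellation from Lemma~\ref{lem:surviving-b} also controls the nearby equal-level edges used by the greedy routing, and one must pin down exactly how many such levels there are — the source of the~$\lg n$ factor in~$\Lambda$ — by comparing the moat radii with the side lengths of~$C'_u$ and~$C'_v$. A smaller but essential point is the geometric-series bound over the single-commodity levels, which hinges on~$\dist_G(u,v) = \Omega(\Delta_{\ell^\circ})$ holding even when~$||u-v||_2$ is far smaller, that is, when~$u$ and~$v$ are net points of cells at very different depths.
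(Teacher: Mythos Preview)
Your level split is at the wrong place, and this breaks the $O(\log n)$ level count. You split at $\ell^\circ = \min\{\ell_{C'_u}, \ell_{C'_v}\}$ and then assert that the blobs of $u$ and $v$ merge within $O(\log n)$ levels of $\ell^\circ$ because the moat radius $2\Delta_\ell/n^2$ overtakes $||u-v||_2$. But the moat overtakes $||u-v||_2$ only once $\Delta_\ell \gtrsim n^2\,||u-v||_2$, so going up from $\ell^\circ$ this takes roughly $\lg\big(n^2\,||u-v||_2/\Delta_{\ell^\circ}\big)$ levels. Nothing ties $||u-v||_2$ to $\Delta_{\ell^\circ}$: since $\tilde b$ is arbitrary, $u$ and $v$ can both be net points of very deep cells that are far apart in space, giving $||u-v||_2/\Delta_{\ell^\circ}$ as large as $2^{\ell^\circ}$. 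In that regime your two-commodity sum has far more than $O(\log n)$ terms, and your per-level bound $O(d^{3/2}\,||u-v||_2\,b^i_u)$ from Lemma~\ref{lem:surviving-b} is vacuous anyway (the fraction $4d\,||u-v||_2/\Delta_\ell$ exceeds $1$ there).

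The paper avoids this by splitting at the level $\ell$ determined by $||u-v||_2$ itself, namely the smallest $\ell$ with $4d\,||u-v||_2 \ge \Delta_\ell$. For all deeper levels (including possibly many between $\ell$ and $\ell^\circ$) it uses the \emph{trivial} bound $b^i_u$ on the surviving divergence and sums the $\Delta_{\ell'}$ geometrically to get $3\sqrt{d}e^2\Delta_\ell\,b^i_u \le 12d^{3/2}e^2\,||u-v||_2\,b^i_u$, with no dependence on how deep $u$ or $v$ sit. Only above $\ell$ does it invoke Lemma~\ref{lem:surviving-b}, and there the moat argument is sound because $\Delta_{\ell_1}\approx n^2\,||u-v||_2 \approx (n^2/4d)\,\Delta_\ell$, so $\ell-\ell_1 \approx 2\lg n$. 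Your lower bound $\dist_G(u,v) = \Omega(\Delta_{\ell^\circ})$ is correct but is not what rescues the argument; the paper never needs it and simply uses $||f^i||_{\dartsof{E}} \ge b^i_u\,||u-v||_2$ throughout. If you re-split at the paper's $\ell$ and use the geometric sum below it, the rest of your outline goes through.
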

\begin{proof}
  Let \(\ell\) be the smallest value such that \(4d||u-v||_2\ge \Delta_{\ell}\).
  By Lemma~\ref{lem:surviving-b}, the divergence of \(b^i_u\) remaining at level \(\ell\) and
  greater is \(b^i_u\).
  The divergence at each level \(\ell'\) is sent to net points at level \(\ell'-1\) through paths of
  length at most \(3\sqrt{d}e^2\Delta_{\ell'}\).
  So the cost of carrying \(b^i_u\) to level \(\ell\) is at most
  \(\sum_{\ell'>\ell}3\sqrt{d}e^2\Delta_{\ell'} b^i_u \le 3\sqrt{d}e^2\Delta_{\ell} b^i_u\le 12d^{3/2}e^2||f^i||_{\dartsof{E}}\) in total.
  Starting from level \(\ell' = \ell\), the cost of carrying the remain divergence of \(b^i_u\) to
  one level less than the current level is at most \(\frac{4d||u - v||_2}{\Delta_{\ell'}} b^i_u
  \cdot 3\sqrt{d}e^2\Delta_{\ell'} \le 12d^{3/2}e^2||f^i||_{\dartsof{E}}\).
  Because \(\Delta_{\ell-1} > 4d||u-v||_2\), we have \(\Delta_{(\ell-1) - (2\lg{n}-2-\lg{d})} > n^2||u-v||_2\).
  Moats will force \(u\) and \(v\) to be in the same blob at any level \(\ell' \le (\ell-1) - (2\lg{n}-2-\lg{d})\).
  This means for any cell \(C\) with \(\ell_{C} \le (\ell-1) - (2\lg{n}-2-\lg{d})\), we have \(b_u(C) + b_v(C) = 0\).
  So we spend at most \(12d^{3/2}e^2||f^i||_{\dartsof{E}}\) cost to send divergence of \(b^i_u\) per level from level \(\ell\) to level \((\ell-1) - (2\lg{n}-2-\lg{d})\), and we spend zero cost for \(b^i_u\) after that.
  In total, the greedy algorithm charges at most \((2\lg{n}-\lg{d})12d^{3/2}e^2||f^i||_{\dartsof{E}}\) cost to divergence of \(u\) in \(f^i\).
  Similarly, we can show the cost it charges to divergence of \(v\) has the same upper bound, and the lemma holds.
\end{proof}

In our algorithm, \(\frac{|B\Tilde{b}_{N_C}|3\Lambda}{\Delta_{\ell_C}}\) divergence leaves the net point
\(N_C\) of a cell \(C\) through paths of length at most \(3\sqrt{d}e^2\Delta_{\ell_C}\).
  On the other hand, these divergences leave \(N_C\) through edges of length at least
  \(\sqrt{d}(1/2-(1/(n^2+1)))\Delta_{\ell_C}\).
All together, we see \(||B\Tilde{b}||_1 < \frac{||f||_{\dartsof{E}}}{\Lambda} \le
||f^*||_{\dartsof{E}}\).
Therefore, by setting \(\kappa = 9\sqrt{d}e^2\Lambda\), we have
\begin{equation*}
  ||B\Tilde{b}||_1 < \min\{||f||_{\dartof{E}}\,:\, f\in \R^{\dartof{E}}, Af=\Tilde{b}\} \le ||f||_{\dartsof{E}} \le\kappa||B\Tilde{b}||_1.
\end{equation*}

\begin{lemma}
  \label{lem:DP}
  Applications of \(BA\) and \((BA)^T\) to arbitrary vectors \(f\in \R^{\dartsof{E}}\) and \(\Tilde{b}\in \R^V\), respectively, can be done in \(O(|E|\log{n})\) time.
\end{lemma}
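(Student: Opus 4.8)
The plan is to show that both matrix-vector products reduce to standard tree/forest dynamic programming over the warped quadtree \(T\) and the blob forest, each pass touching every node a bounded number of times and spending \(O(\log n)\) work per touch. Recall that a vertex \(v = N_C\) of \(G\) contributes to column \(u = N_{C'}\) of \(B\) only in the rows \(B_{N_{C''}, u}\) with \(C'' \in \cells_{\ell, \blb_\ell(u)}\) for levels \(\ell \le \ell_{C'}\), plus the diagonal entry; and there are at most \(d+1\) relevant cells \(C''\) per level, so each column of \(B\) has \(O(d \log n)\) nonzeros, with the coefficients being the aggregated probabilities \(\probability'[\blb_\ell(u) \in C'']\) times \(\Delta_{\ell_{C'}}/(3\Lambda)\). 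Likewise each row of \(A\) (indexed by a vertex) has degree \(O(1/\eps^d)\) and each column (indexed by an edge) has exactly two nonzeros.

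For computing \(BA f\) given a flow vector \(f \in \R^{\dartsof E}\): first I would compute \(g := A f \in \R^V\) in \(O(|E|)\) time by iterating over the oriented edges and accumulating \(\pm f_{(u,v)}\) at the two endpoints. Then I must compute \(Bg\). Observe that the contribution of \(g\) to row \(N_{C''}\) is \(\frac{1}{3\Lambda}\sum_{u} \Delta_{\ell_u} \probability'[\blb_{\ell_{C''}}(u)\in C''] g_{N_u} + \frac{\Delta_{\ell_{C''}}}{3\Lambda} g_{N_{C''}}\), where the sum runs over net points \(u\) in cells strictly deeper than \(\ell_{C''}\) whose blob at level \(\ell_{C''}\) is contained in \(C''\). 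I would organize this level by level, processing \(T\) (equivalently the blob forest) from the leaves toward the root: for each blob \(\blb\), maintain the weighted partial sum \(s_\blb := \sum_{u \in \blb,\ \ell_u > \ell(\blb)} \Delta_{\ell_u} g_{N_u}\), which satisfies \(s_\blb = \sum_{\blb' \in \cld(\blb)} s_{\blb'} + \sum_{u \in \blb,\ \ell_u = \ell(\blb)}\Delta_{\ell_u} g_{N_u}\); then for each of the \(\le d+1\) cells \(C'' \in \cells_{\ell(\blb),\blb}\) add \(\frac{1}{3\Lambda}\probability'[\blb \in C''] s_\blb\) to the running value at \(N_{C''}\). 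Since the blob forest has \(O(|V|\log n)\) nodes and each blob is handled with \(O(d)\) arithmetic plus an \(O(\log n)\) lookup for the probabilities (using the legal-shift arrays of Section~\ref{sec:flow}), and since \(|V| = O(n\log n) = O(|E|)\), this totals \(O(|E| \log n)\).

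For \((BA)^T = A^T B^T\) applied to \(\tilde b \in \R^V\): the computation is the transpose (reverse) of the above. First compute \(h := B^T \tilde b\); entry \(h_u = \frac{\Delta_{\ell_u}}{3\Lambda}\big(\tilde b_{N_u} + \sum_{\ell \le \ell_u}\sum_{C'' \in \cells_{\ell,\blb_\ell(u)}} \probability'[\blb_\ell(u)\in C''] \tilde b_{N_{C''}}\big)\), so for each net point \(u\) I need the sum, over its ancestor blobs \(\blb\) and the \(\le d+1\) cells \(C''\) associated with each, of \(\probability'[\blb \in C''] \tilde b_{N_{C''}}\). I would push this down the blob forest from the root: for each blob \(\blb\) compute \(t_\blb := t_{\prnt(\blb)} + \sum_{C'' \in \cells_{\ell(\blb),\blb}} \probability'[\blb\in C''] \tilde b_{N_{C''}}\), and then every net point \(u\) with \(\ell_u = \ell(\blb)+1\) lying in a child of the cell of \(\blb\) (hence \(\blb_{\ell(\blb)}(u) = \blb\)) accumulates \(t_{\prnt(\cdot)}\)-style contributions — more carefully, \(u\) receives \(t_\blb\) from the ancestor blob \(\blb = \blb_{\ell}(u)\) at each level \(\ell \le \ell_u\), which is exactly \(t_{\blb_{\ell_u - 1}(u)}\) if we define \(t\) cumulatively. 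After forming \(h\), apply \(A^T\): \((A^T h)_{(u,v)} = h_u - h_v\), computable in \(O(|E|)\) time. Each blob-forest node is visited \(O(1)\) times with \(O(d)\) arithmetic and an \(O(\log n)\) probability lookup, giving \(O(|E|\log n)\) overall.

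The main obstacle is bookkeeping the correspondence between a net point \(u\), its ancestor blob \(\blb_\ell(u)\) at a given level \(\ell\), and the \(O(d)\) cells of \(\cells_{\ell,\blb_\ell(u)}\) (including softlinks) — one must verify that these associations, and the aggregated probabilities \(\probability'[\cdot \in \cdot]\), were precomputed and stored so that each is retrievable in \(O(\log n)\) time during a single forest traversal, as guaranteed by the data-structure construction in Section~\ref{sec:flow}. Once that is in place, correctness is just the definitional unfolding of the two products and the recurrences for \(s_\blb\) and \(t_\blb\), and the running time follows from \(|V| = O(n\log n)\), the blob forest having \(O(|V|\log n)\) nodes, and \(|E| = \Omega(|V|)\).
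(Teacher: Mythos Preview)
Your approach is essentially the paper's: compute \(Af\) (respectively \(A^T b'\)) directly over the edges, and handle the \(B\)/\(B^T\) part via a single postorder (respectively preorder) traversal of the blob forest, accumulating partial sums that satisfy a parent-from-children (respectively child-from-parent) recurrence. This is exactly what the paper does.

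Two small corrections are worth noting. First, your recurrence for \(s_{\blb}\) has an off-by-one: since \(\blb_{\ell}(u)\) is defined only for \(\ell < \ell_{C_u}\), the ``new'' net points that enter at a blob \(\blb\) of level \(\ell\) are those with \(\ell_{C_u} = \ell + 1\), not \(\ell_{C_u} = \ell\); the paper's version of the recurrence writes \(\ell+1\) here. Second, your time accounting charges an \(O(\log n)\) lookup per probability, which would give \(O(|V|\log^2 n)\) for the traversal and would \emph{not} in general be \(O(|E|\log n)\) (you would need \(|V|\log n \le |E|\), which is not guaranteed). The fix is that the values \(\mathbb{P}'[\blb\in C]\) were already precomputed in Section~\ref{sec:flow} and are stored with each blob--cell pair, so each lookup is \(O(1)\); the traversal then costs \(O(|V|\log n) \le O(|E|\log n)\), matching the paper's bound.
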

\begin{proof}
  Let \(A' = Af\) and let \(b' = B^T\Tilde{b}\).
  Both \(A, f\) and \(b'\) has \(O(|E|)\) non-zero entries, so we can compute \(A'\) and \(A^Tb'\) in \(O(|E|)\) time given \(b'\).
  We show how to compute \(BA'\) and \(B^T\Tilde{b}\) in \(O(|E|\log{n})\) time.
  \paragraph*{Computing \(BA'\)}
  Let \(C_u\) be the cell with \(u\) as its net point.
  By the definition of \(B\), for each cell \(C\),
  \begin{align*}
    (BA')_{N_C} &= \frac{\Delta_{C}}{3\Lambda}A'_{N_C} + \frac{\Delta_{C}}{3\Lambda}\sum_{u\in V, \probability'[\blb_{\ell_C}(u)\in C] > 0}\probability'[\blb_{\ell_C}(u)\in C]A'_u\\
                &= \frac{\Delta_{C}}{3\Lambda}A'_{N_C} +
                \frac{\Delta_{C}}{3\Lambda}\sum_{\blb\in\blb_{\ell_C}, \probability'[\blb\in C] > 0}{\probability'[\blb\in C]\sum_{u\in \blb, \ell_{C_u} > \ell}}A'_u
  \end{align*}
  There are at most \(2|V|-1\) different blobs and \(\sum_{u\in \blb, \ell_{C_u} > \ell} A'_u = \sum_{\blb'\in \cld(\blb)}\sum_{u\in \blb, \ell_{C_u}> \ell+1}A'_u + \sum_{u\in \blb', \ell_{C_u} = \ell+1, \blb'\in \cld(\blb)}{A'_u}\) for each blob at some level \(\ell\).
  So we can compute \(\sum_{u\in \blb, \ell_{C_u} > \ell}A'_u\) for each blob \(\blb\) in \(O(|V|\log{n})\) time in total during a postorder traversal of the blobs.
  After that, we can fill in all entries in \(BA'\) in \(O(|E|)\) time.

  \paragraph*{Computing \(b'\)}
  For every point \(u\), except \(B^T_{u,u}\), every non-zero entry in \(B^T_{u}\) corresponds to a net point \(N_C\) of a cell \(C\) with \(\probability'[\blb_{\ell_C}(u)\in C] > 0\).
  Let \(C_u\) be the cell with \(u\) as its net point.
  We have \(b'_u = \frac{\Delta_{C_u}}{3\Lambda}\Tilde{b}_u+\sum_{C, \probability'[\blb_{\ell_C}(u)\in C]> 0} \frac{\Delta_{C}}{3\Lambda}\probability'[\blb_{\ell_C}(u)\in C] \Tilde{b}_{N_C}\).
  Let \(\blb^-\) denote the set of strict ancestor blobs of a blob \(\blb\).
  Let \(\ell\) be any level where \(\blb_{\ell}(u)\) is defined.
  We have \(\sum_{C, \ell_C \le \ell, \probability'[\blb_{\ell_C}(u)\in C]> 0}
  \frac{\Delta_{C}}{3\Lambda}\probability'[\blb_{\ell_C}(u)\in C] \Tilde{b}_{N_C} = \sum_{C, \ell_C
  = \ell, \probability'[\blb_{\ell}(u)\in C]>
  0}\frac{\Delta_{C}}{3\Lambda}\probability'[\blb_{\ell}(u)\in C] \Tilde{b}_{N_C}\linebreak
  + \sum_{\blb'\in \blb_{\ell}(u)^-} \sum_{C, \probability'[\blb' \in C]> 0}\frac{\Delta_{C}}{3\Lambda}\probability'[\blb'\in C] \Tilde{b}_{N_C}\).
  We can compute \(\sum_{\blb'\in \blb^-} \sum_{C, \probability'[\blb' \in C]> 0}\frac{\Delta_{C}}{3\Lambda}\probability'[\blb'\in C]\Tilde{b}_{N_C}\) for each blob \(\blb\) in \(O(|V|\log{n})\) time in total during a preorder traversal of the blobs.
  Then we can fill in each entry of \(b'\) in constant time.
\end{proof}

We have shown there exists a \((1+2\eps, 0)\)-solver for the minimum cost flow problem on \(G\).
Plugging in all the pieces, we see the running time of the solver is at most
\(O(|E|\eps^{-2}\log^4{n}\log{\log{n}})\).

\section{Recovering a transportation map from a flow}
\label{sec:recover}
Let~\(\hat{G} = (V, E)\) be \emph{any} connected graph such that~\(P \subseteq V \subset \R^d\) and
each edge has weight equal to the Euclidean distance between endpoints.
Let~\(A\) be the vertex-edge incidence matrix of~\(\hat{G}\), and let~\(\hat{f} \in
\R^{\dartsof{E}}\) be any flow in~\(\hat{G}\) such that \(A\hat{f} = \supply\) where \(\supply(v) =
0\) if \(v \notin P\).
In this section, we show how to transform~\(\hat{f}\) into a transformation map for \((P, \supply)\)
where \(\cost(\map) \leq ||\hat{f}||_{\dartsof{E}}\).
Throughout this section, we let~\(m = |E|\).
We also assume~\(m = O(n^4)\), as we could simply compute an \emph{optimal} transportation map from
scratch otherwise using an algorithm for minimum cost flow in general graphs~\cite{o-fspmc-93}.

Let~\(\dartsof{E}'\) denote the edges of the complete graph over~\(V\) where each edge is oriented
consistently with its counterpart in~\(\dartsof{E}\) if it exists and oriented arbitrarily
otherwise.
We maintain a flow~\(f \in \R^{\dartsof{E}'}\) where initially \(f_{(u,v)} = \hat{f}_{(u,v)}\) if
\(uv \in E\), and \(f_{(u,v)} = 0\) otherwise.
We will eventually guarantee \(f_{(u,v)} \neq 0\) only for \(u,v \in P\).

For each point~\(p \in P\), there are potentially~\(\Theta(|E|)\) other vertices that may at some
point during the process directly send flow to or receive flow from~\(p\).
We cannot afford to update these flow assignments individually, so for each vertex \(v \in V\),
we instead maintain a single \EMPH{prefix split tree}~\cite{fl-ntasg-22} \(\splittree(v)\) that
will contain representations of certain vertices sending flow to~\(v\).
A prefix split tree~\(\splittree\) is an ordered binary tree where each node~\(\splitnode\) is
assigned a non-negative \EMPH{potential}~\(\potential(\splitnode)\).
We let~\(\potential(\splittree)\) denote the total potential of nodes in~\(\splittree\).
Prefix split trees containing~\(s\) nodes support the following operations in amortized \(O(\log
s)\) time each:
\begin{itemize}
  \item
    \(\textsc{Insert}(\splittree, \potential)\): Insert a node of potential~\(\potential\) into
    tree~\(\splittree\) and return a reference to this node.
  \item
    \(\textsc{Delete}(\splittree, \splitnode)\): Delete the node~\(\splitnode\) from the
    tree~\(\splittree\).
  \item
    \(\textsc{Merge}(\splittree, \splittree')\): Modify \(\splittree\) by adding all nodes of
    \(\splittree'\) after the nodes of \(S\), emptying \(\splittree'\) in the process.
  \item
    \(\textsc{PrefixSplit}(\splittree, t)\): Assume~\(0 \leq t \leq \potential(\splittree)\).
    If a prefix of nodes in \(\splittree\) has total potential exactly~\(t\), then
    let~\(\splitnode_1\) be the last member of this prefix.
    Otherwise, let~\(\splitnode\) be the first node where the prefix through~\(\splitnode\) has
    total potential greater than~\(t\).
    \EMPH{Split} \(\splitnode\) by replacing it in-place with two nodes~\(\splitnode_1\)
    and~\(\splitnode_2\) such \(\potential(\splitnode_1) + \potential(\splitnode_2) =
    \potential(\splitnode)\) and the prefix through~\(\splitnode_1\) has total potential
    exactly~\(t\).
    Either way, create a new tree~\(\splittree'\) containing the prefix through~\(\splitnode_1\) and
    remove this prefix from~\(\splittree\).
\end{itemize}

Each node~\(\splitnode\) in \(\splittree(v)\) represents a vertex~\(u \in V\), and each vertex may
be represented by multiple nodes, even within a single prefix split tree.
We denote the vertex represented by~\(\splitnode\) as~\(\splitpoint(\splitnode)\).
All split tree~\(\splittree(v)\) are initially empty.
When a node \(\splitnode\) is split into two nodes \(\splitnode_1\) and~\(\splitnode_2\), we set
\(\splitpoint(\splitnode_1) = \splitpoint(\splitnode_2) := \splitpoint(\splitnode)\).
Along with the prefix split trees, we maintain a so-called \EMPH{base flow} \(f' \in
\R^{\dartsof{E}}\) that is initially equal to~\(f\).
We maintain the invariant that for each pair of vertices~\(u\) and~\(v\), \(f'_{(u,v)} +
\sum_{\splitnode \in \splittree(v) \mid \splitpoint(\splitnode) = u} \potential(\splitnode) =
f_{(u,v)}\).

The \EMPH{support} of flow~\(f'\) is the set of undirected edges \(uv\) for which \(f'_{(u,v)} \neq
0\).
We begin by changing~\(f'\) and therefore~\(f\) so that its support is a forest.
We use a process inspired by the acyclic flow algorithm of Sleator and Tarjan~\cite{st-dsdt-83}.
Let~\(\bar{G} = (V, \bar{E})\) be initially empty.
We iteratively process each directed edge~\((u,v) \in \dartsof{E}\) such that~\(f'_{(u,v)} > 0\).
When it comes time to process~\((u,v)\), we check if~\(u\) and~\(v\) are in the same component
of~\(\bar{G}\).
If not, we add~\(uv\) to~\(\bar{G}\).
Otherwise, let~\(\pi\) be the directed path from~\(u\) to~\(v\) in~\(\bar{G}\).
We define the \EMPH{unit cost} of~\(\pi\) to be~\(|\pi| := \sum_{(x,y) \in \pi \mid f_{(x,y)} > 0}
||y - x||_2 - \sum_{(x,y) \in \pi \mid f_{(x,y)} < 0} ||y - x||_2\), the amount
\(||f||_{\dartsof{E}}\) increases per unit of flow added to the directed edges of \(\pi\).
If \(||v - u||_2 \geq |\pi|\), let \((o,p) = \argmin_{(x,y) \in \Paren{\Set{(x',y') \in \pi \mid
f_{(x',y')} < 0} \cup (v,u)}} -f_{(x,y)}\), the first edge to go to~\(0\) flow if we ``reroute`` as
much flow along~\(\pi\) instead of~\((u,v)\) as we can, and let~\(F = f_{(o,p)}\).
If \(||v - u||_2 < |\pi|\), let \((o,p) = \argmin_{(x,y) \in \pi \mid f_{(x,y)} > 0} f_{(x,y)}\) and
\(F = -f_{(o,p)}\) instead.
We modify~\(f'\) by increasing the flow along all directed edges of~\(\pi\) by~\(F\) and decreasing
the flow along~\((u,v)\) by~\(F\).
Doing so causes~\(f'_{(o,p)} = 0\).
If \(op \neq uv\), we remove~\(op\) from~\(\bar{G}\) and add~\(uv\) in its place.
We are now done processing~\(uv\).
Observe~\(\bar{G}\) remains a forest after processing each edge.
Therefore, each edge can be processed in (amortized) \(O(\log n)\) time using standard extensions to
dynamic tree data structures~\cite{st-dsdt-83}.

\begin{lemma}\label{lem:support_forest}
  The above procedure does not change~\(Af\), the cost of~\(||f||_{\dartsof{E}}\) does not increase,
  and the support of~\(f\) becomes a forest.
  Further, if~\(\supply(p)\) is an integer for all~\(p \in P\), then the procedure
  guarantees~\(f_{(u,v)}\) is an integer for all~\(uv \in E\).
\end{lemma}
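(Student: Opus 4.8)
The plan is to verify the four assertions by carrying a handful of invariants through the loop that processes edges one at a time. First I would record the reading of the procedure that makes everything go through: the loop considers each edge of $E$ exactly once, taking it in whichever direction it currently carries positive flow and skipping it if it carries none, and an edge is inserted into $\bar G$ only while it is being processed. I would also note that during this forest-building step all prefix split trees are empty, so $f = f'$ and $\mathrm{supp}(f) = \mathrm{supp}(f')$ throughout; hence it is enough to understand how each reroute changes the base flow $f'$ and the graph $\bar G$.

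For ``$Af$ is unchanged'', I would observe that one reroute adds $F$ units of flow to the edges of the $u$-to-$v$ path $\pi \subseteq \bar G$ and subtracts $F$ units from the direct edge $(u,v)$, which is exactly $F$ units of flow sent around the closed cycle ``$\pi$, then $(v,u)$''; a flow supported on a single cycle has divergence $0$ at every vertex, so $(Af)_w$ is invariant for every $w$. For ``the cost does not increase'', I would use the case split already in the procedure: when $\|v-u\|_2 \ge |\pi|$ we move the largest amount $|F|$ of flow off $(u,v)$ and onto $\pi$ that flips no flow value's sign, so some edge among $(v,u)$ and the negatively oriented edges of $\pi$ reaches flow $0$; since no sign flips, the marginal cost of loading $\pi$ is exactly $|\pi|$ while unloading $(u,v)$ saves $\|v-u\|_2$ per unit, for a net change $|F|(|\pi| - \|v-u\|_2) \le 0$, and the case $\|v-u\|_2 < |\pi|$ is symmetric with net change $|F|(\|v-u\|_2 - |\pi|) < 0$. (A one-line remark handles that each minimization is over a nonempty set: if, say, $\pi$ had no positively oriented support edge then $|\pi| < 0 \le \|v-u\|_2$, which is incompatible with $\|v-u\|_2 < |\pi|$.) Since initially $f = \hat f$, monotonicity gives $\|f\|_{\dartsof{E}} \le \|\hat f\|_{\dartsof{E}}$ at the end.

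The core of the lemma is that $\mathrm{supp}(f)$ becomes a forest, which I would derive from two invariants: (i) $\bar G$ is always a forest, and (ii) at all times $\bar E \subseteq \mathrm{supp}(f) \subseteq E$, and moreover once an edge leaves $\bar G$ its flow is $0$ and it is never touched again. Invariant (i) is immediate: $uv$ is added to $\bar G$ either when $u$ and $v$ lie in distinct components, or, inside a reroute, only after an edge of the unique cycle that $uv$ would close is deleted (in the one remaining subcase the reroute sets $f_{(u,v)} = 0$ and adds nothing). For (ii): an edge enters $\bar E$ only while its flow is nonzero; a reroute changes the flow of a $\bar E$-edge only by pushing it toward or away from $0$ without flipping its sign, and whenever a reroute drives an edge to exactly $0$ that edge is the one removed from $\bar E$ in the same step, so $\bar E$-edges always have nonzero flow; a removed edge, and an edge of $E$ that is skipped because it has flow $0$ when the loop reaches it, are both outside $\bar G$ forever after, hence never altered by a later reroute; and only edges of $E$ ever enter $\bar G$ or lie on a reroute path. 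These force every edge outside $\bar E$ at termination to have flow $0$, so $\mathrm{supp}(f) \subseteq \bar E$, a forest contained in $E$. For the integrality clause, suppose $\supply$ is integral on $P$ and $\supply(v)=0$ for $v \notin P$: at termination $f$ is a flow on the forest $\bar G \subseteq E$ with $Af = \supply$, on each tree component $K$ the supplies sum to $\sum_{w \in K}(Af)_w = 0$ because no support edge leaves $K$, so $f$ restricted to $K$ is the unique flow realizing those supplies, and deleting an edge $uv$ of $K$ to split it into $K_u \ni u$ and $K_v \ni v$ yields $f_{(u,v)} = -\sum_{w \in K_v}\supply(w) \in \mathbb{Z}$; the remaining $E$-edges carry flow $0$.

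I expect invariant (ii) --- that $\bar G$ faithfully records the current support --- to be the only delicate point: one must argue both that a support edge cannot return once its flow hits $0$ and that every support edge surviving to the end has genuinely been absorbed into $\bar G$. The rest is bookkeeping over the reroute rule.
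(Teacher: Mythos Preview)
Your proposal is correct and follows essentially the same approach as the paper's proof: both argue that each reroute is a cycle update (hence preserves $Af$), that the direction of rerouting is chosen so as not to increase cost, that $\bar G$ remains a forest containing the support of $f$, and that integrality follows from the tree structure. Your argument is more explicit than the paper's---particularly in justifying why $\mathrm{supp}(f)\subseteq\bar E$ at termination, which the paper asserts without elaboration---and your integrality argument via the cut formula is equivalent to the paper's leaf-stripping induction.
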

\begin{proof}
  Each time the flow~\(f'\), and thus \(f\), are changed, we do so by changing the route some flow
  takes between a pair of vertices~\(u\) and~\(v\).
  We change the flow along the path~\(\pi\) by the opposite amount we change~\(f'_{(u,v)}\), so the
  vector~\(Af\) does not change.
  Further, the choice to increase or decrease flow along~\(\pi\) is made so that the change cannot
  increase~\(||f||_{\dartsof{E}}\).
  Whenever an edge~\(uv\) is about to be added to~\(\bar{G}\) and create a cycle, we remove an edge
  (possibly \(uv\) itself) from that cycle.
  Therefore~\(\bar{G}\) and the support of~\(f\) is a forest.

  For the claim about \(f\) being integral, observe that it is trivially true if every component
  of~\(\bar{G}\) contains one vertex.
  If some component contains multiple vertices, let~\(u\) be a leaf in that component, and
  let~\(uv\) be its one incident edge.
  Because~\((Af)_u\) is integral, \(f_{(u,v)}\) must be integral as well.
  If we (for the sake of proof) remove~\(u\) from~\(\bar{G}\) and set~\(f_{(u,v)} = 0\), then
  \((Af)_v\) remains integral.
  The claim follows by induction on the number of vertices in~\(\bar{G}\).
\end{proof}

Consider the orientation of~\(\bar{G}\) such that for each directed edge \((u,v)\) in the
orientation, \(f_{(u,v)} > 0\).
We now process each vertex in topological order with respect to this orientation.

Suppose it is time to start processing vertex~\(v\).
Our procedure guarantees that 1) \(f'_{(u, v)} = 0\) for each vertex~\(u\) that has already been
processed, 2) \(f'_{(v, w)}\) has not yet changed for each vertex~\(w\) that we have not yet
processed, and 3) \(v\) is not yet represented in \(\splittree(w)\) for \emph{any} vertex~\(w\).

From the above guarantees and the definition of~\(\bar{G}\), we may conclude that~\(f'_{(v, w)} \geq
0\) for any vertex~\(w\) we have not yet processed.
Our goal is to shortcut flow passing through \(v\) from a processed vertex \(u\) to an unprocessed
vertex \(w\) by adding to \(w\)'s split tree.
If \(v \in P^+\), then let \(\splitnode\) be the node returned by
\(\textsc{Insert}(\splittree(v), \supply(v))\) and set \(\splitpoint(\splitnode) \gets v\).
In doing so, we're implicitly declaring that~\(v\) is receiving~\(\supply(v)\) units of flow from
itself, and we don't have to set up any special cases for when we want \(v\) to actually send flow.
This moment is the only time we create new nodes for the split trees.
Observe whether or not we create a new node, we now have \(\potential(\splittree(v)) \geq \sum_{w
\in V} f'_{(v,w)}\).

While there exists a vertex~\(w\) such that \(f'_{(v,w)} > 0\), we do the following.
Let~\(\splittree'\) be the tree returned by \(\textsc{PrefixSplit}(\splittree(v), f'_{(v,w)})\).
We perform a \(\textsc{Merge}(\splittree(w), \splittree')\), shortcutting the flow through \(v\) to
\(w\) as desired.
Finally, we set~\(f'_{(v,w)} \gets 0\) as all flow into~\(w\) originally from~\(v\) is now
represented in~\(\splittree(w)\).
We are done processing~\(v\) when the while loop concludes.
We may easily verify each of our guarantees hold for later vertices in the topological order.

Consider when we have finished processing all the vertices.
Those vertices~\(v \in P^+\) are represented as one or more nodes in the vertices' split trees, and
these nodes have total potential \(\supply(v)\).
Those vertices~\(v \in P^-\) each have a split tree of total potential \(\potential(\splittree(v)) =
-\supply(v)\).
We now construct the transportation map~\(\map\).
Initially~\(\map(u, v) = 0\) for all \((u,v) \in P^+ \times P^-\).
While there exists a split tree~\(\splittree(v)\) containing at least one node~\(\splitnode\), we
increase~\(\map(\splitpoint(\splitnode), v)\) by \(\potential(\splitnode)\) and perform a
\(\textsc{Delete}(\splittree(v), \splitnode)\).
When the loop completes, we are done constructing~\(\map\).

\begin{lemma}\label{lem:recover}
  The algorithm above produces a transportation map~\(\map\) for \((P, \supply)\) such that
  \(\cost(\map) \leq ||\hat{f}||_{\dartsof{E}}\) in \(O(m \log n)\) time.
  Further, if~\(\supply(p)\) is an integer for all~\(p \in P\), then the procedure
  guarantees~\(\map(p,q)\) is an integer for all~\((p,q) \in (P^+ \times P^-)\).
\end{lemma}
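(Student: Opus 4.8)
The plan is to verify the three claimed properties of the output—validity of the map, the cost bound, and the running time—by tracking invariants through the two phases (forest conversion via Sleator--Tarjan, then topological shortcutting with prefix split trees).

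First I would establish that $\map$ is a genuine transportation map. After the forest-conversion step, Lemma~\ref{lem:support_forest} guarantees $Af$ is unchanged, the support is a forest, and $||f||_{\dartsof{E}}$ has not increased. During the shortcutting phase I would argue the three numbered guarantees inductively: when we finish processing $v$, every edge $f'_{(v,w)}$ into an unprocessed $w$ has been zeroed and its ``mass'' transferred into $\splittree(w)$ via a \textsc{PrefixSplit} followed by a \textsc{Merge}, so the stated invariant $f'_{(u,v)} + \sum_{\splitnode \in \splittree(v)\mid \splitpoint(\splitnode)=u}\potential(\splitnode) = f_{(u,v)}$ is preserved, and $v$ never reappears in any later tree. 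Combining the invariant with the guarantee that $\potential(\splittree(v))\ge \sum_w f'_{(v,w)}$ (which holds after the \textsc{Insert} of a self-node of potential $\supply(v)$ for $v\in P^+$), all of $v$'s outgoing flow can indeed be split off. Processing in topological order with respect to the orientation in which $f_{(u,v)}>0$ ensures that once we reach $v$, all of $v$'s incoming flow has already been converted into split-tree representations, so the net amount $v$ must push out equals $\supply(v) + (\text{incoming})$, which is exactly what sits in $\splittree(v)$. At the end, each $v\in P^-$ has $\potential(\splittree(v)) = -\supply(v)$ and each node records a source in $P^+$; converting nodes to $\map$ entries thus yields $\sum_q \map(p,q)=\supply(p)$ and $\sum_p \map(p,q) = -\supply(q)$.

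Next, the cost bound. The key observation is that shortcutting flow through an intermediate vertex $v$—replacing a unit going $u\to v\to w$ by a unit going directly $u\to w$—changes the contribution from $||v-u||_2 + ||w-v||_2$ to $||w-u||_2$, which by the triangle inequality does not increase $||f||_{\dartsof{E}}$. Formally I would track the quantity $\sum_{(u,v)}|f_{(u,v)}|\cdot||v-u||_2$ and show each \textsc{PrefixSplit}/\textsc{Merge} pair corresponds to rerouting some mass $\potential$ from the pair of edges $(\splitpoint(\splitnode),v),(v,w)$ onto the direct edge $(\splitpoint(\splitnode),w)$, decreasing the sum by $\potential\cdot(||v-\splitpoint(\splitnode)||_2 + ||w-v||_2 - ||w-\splitpoint(\splitnode)||_2)\ge 0$. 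Starting from $||\hat f||_{\dartsof{E}}$, passing through the forest-conversion step (non-increasing by Lemma~\ref{lem:support_forest}) and then through all shortcuts, we conclude $\cost(\map)\le ||\hat f||_{\dartsof{E}}$. Integrality of $\map$ when all supplies are integers follows from the integrality clause of Lemma~\ref{lem:support_forest} together with the fact that the only potentials ever inserted are the integers $\supply(v)$, and \textsc{PrefixSplit} is only ever called with integer arguments $f'_{(v,w)}$ (which are integral because the base flow stays integral), so every node potential, and hence every $\map(p,q)$, is an integer.

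Finally, the running time. The forest-conversion step processes at most $m$ edges, each in amortized $O(\log n)$ time using dynamic trees, for $O(m\log n)$ total. In the shortcutting phase, I would bound the total number of prefix split trees operations: each vertex triggers one \textsc{Insert}; each iteration of the while loop for $v$ performs one \textsc{PrefixSplit} and one \textsc{Merge} and permanently zeroes one edge $f'_{(v,w)}$, so the number of such iterations over the whole algorithm is at most the number of support edges of $f'$, which is $O(m)$ after the forest step—more carefully $O(m)$ because each original edge contributes at most a bounded number of splits before its mass is dispersed, so across all vertices the total is $O(m)$ operations; finally the conversion of all split-tree nodes into $\map$ entries is one \textsc{Delete} per node, and the total number of nodes ever created is $O(n)$ plus the $O(m)$ from splits. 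Since each prefix split tree operation costs amortized $O(\log n)$, the phase runs in $O(m\log n)$. I expect the main obstacle to be the amortized accounting of the total number of \textsc{PrefixSplit} calls and resulting nodes—i.e., arguing that shortcutting never blows up the number of nonzero $f'$ edges beyond $O(m)$ over the course of the algorithm—and carefully verifying the three processing-time invariants, especially that $f'_{(v,w)}\ge 0$ for unprocessed $w$, which is what makes \textsc{PrefixSplit} well-defined; both of these are bookkeeping rather than conceptual difficulties, and the prefix split tree machinery of Fox and Lu~\cite{fl-ntasg-22} supplies the amortized bounds we invoke.
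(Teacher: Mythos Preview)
Your proposal is correct and follows essentially the same approach as the paper: validity from the maintained invariants, the cost bound from the triangle inequality applied to each shortcut $u\to v\to w \rightsquigarrow u\to w$, integrality from Lemma~\ref{lem:support_forest} propagating through integer-valued \textsc{PrefixSplit} arguments, and the $O(m\log n)$ running time from counting split-tree operations. One small note: your flagged ``main obstacle'' about the number of nonzero $f'$ edges blowing up is a non-issue, since the shortcutting phase never writes to $f'$ except to zero entries (all rerouted mass lives in the split trees, not in $f'$), so the while-loop iterations are bounded directly by the at most $m$ edges of the forest $\bar{G}$, and the node count is at most $n$ \textsc{Insert}s plus one extra node per \textsc{PrefixSplit}.
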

\begin{proof}
  The fact that~\(\map\) is a transportation map for~\((P, \supply)\) follows from the above
  discussions.
  Observe that every time we change~\(f\) while processing vertices in topological order, we do so
  by rerouting flow going from some vertex~\(u\) through a vertex~\(v\) and then to a vertex~\(w\).
  By triangle inequality, these shortcuts can only reduce the cost of~\(f\), implying our bound on
  \(\cost(\map)\).
  If \(\supply(p)\) is integral for all \(p \in P\), then \(f_{(u,v)}\) is integral immediately
  before we start processing vertices in topological order.
  Each change reroutes an amount of flow equal to the flow along an edge, so the flow values remain
  integral.

  For running time, we observe we perform a constant number of split tree operations for each of
  the~\(m\) or fewer edges in~\(\bar{G}\) while processing the vertices in topological order.
  These operations takes~\(O(m \log n)\) time total.
  We then do a number of split trees operations equal to the total number of nodes in all split
  trees while adding values to pairs in the transportation map~\(\map\).
  The only operations that can add nodes to a split tree are the \(\textsc{Insert}\)s done for each
  vertex of positive supply, and the \(\textsc{PrefixSplit}\)s done for each edge in~\(\bar{G}\).
  Therefore, we create~\(O(m)\) nodes total and remove them from the split trees in~\(O(m \log n)\)
  time.
  Adding in the~\(O(m \log n)\) time needed to construct and topologically sort~\(\bar{G}\), we
  conclude our proof of the running time.
\end{proof}

We are now ready to state and prove our main theorem.
\begin{theorem}
  There exists a deterministic algorithm that, given a set of~\(n\) points \(P \subset \R^d\) and a
  supply function \(\supply : P \to \R\), runs in time \(O(n \eps^{-(d+2)} \log^5 n \log \log n)\)
  and returns a transportation map~\(\map\) with cost at most \((1 + \eps) \cdot \cost^*(P,
  \supply)\).
  Further, if~\(\supply(p)\) is an integer for all~\(p \in P\), then~\(\map(p,q)\) is an integer for
  all~\((p,q) \in (P^+ \times P^-)\).
\end{theorem}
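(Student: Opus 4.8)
The plan is to assemble the constructions of Sections~\ref{sec:spanner}--\ref{sec:recover} into a single pipeline, compose the approximation factors, sum the running times over all instances created by contractions, and finally rescale~\(\eps\).

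First I would run the warped quadtree construction of Section~\ref{sec:spanner} on the input instance and, recursively, on every instance spawned by a contraction, building alongside each quadtree the moat-avoiding data structures of Section~\ref{sec:spanner-data_structures}. The point sets of the instances differ only at shared representative points, so with at most~\(n-1\) contractions their sizes sum to~\(O(n)\); combined with Lemma~\ref{lem:num_cells} this makes the total cell count~\(O(n\log n)\), and Lemma~\ref{lem:tree_time} builds all the quadtrees in~\(O(n\log^2 n)\) time. For each contracted instance with~\(n'\) points I would build its spanner~\(G=(V,E)\), where~\(|V| = O(n'\log n)\) and~\(|E| = O((n'/\eps^d)\log n)\), together with the legal-shift and blob-forest data structures of Section~\ref{sec:flow}, and then run the~\((1+2\eps,0)\)-solver of Section~\ref{sec:flow} to get a flow~\(f\) on~\(G\) with~\(Af = b^*\) and~\(\|f\|_{\dartsof{E}} \le (1+O(\eps))\cost^*(G,b^*)\); by Lemma~\ref{lem:DP} and the composed solver bound this costs~\(O(|E|\eps^{-2}\log^4 n\log\log n) = O(n'\eps^{-(d+2)}\log^5 n\log\log n)\) per instance.

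Correctness comes from chaining the guarantees. For a fixed contracted instance~\((P,\supply)\), Lemma~\ref{lem:approximate_distances} and the reduction of Section~\ref{sec:spanner} give~\(\cost^*(P,\supply) \le \cost^*(G,b^*) \le (1+O(\eps))\cost^*(P,\supply)\), so the solver output satisfies~\(\|f\|_{\dartsof{E}} \le (1+O(\eps))\cost^*(P,\supply)\). Iterating Lemma~\ref{lem:contraction} over the at most~\(n-1\) contractions shows the optimal transportation costs of all contracted instances sum to at most~\((1+O(1/n))\cost^*(P_0,\supply_0)\) for the original input~\((P_0,\supply_0)\); since~\(n \ge 1/\eps\), this is~\((1+O(\eps))\cost^*(P_0,\supply_0)\), hence the per-instance flows have total cost~\((1+O(\eps))\cost^*(P_0,\supply_0)\). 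Let~\(\hat G\) be the union of all the spanners (connected through the shared representatives, and containing~\(P_0\)) and~\(\hat f\) the union of the per-instance flows; a short computation at each representative point — its divergence in the contracted instance plus its divergence in the sub-instance telescopes to~\(\supply_0\) of that point — shows~\(A\hat f = \supply_0\). Lemma~\ref{lem:recover} then turns~\(\hat f\) into a transportation map~\(\map\) for~\((P_0,\supply_0)\) with~\(\cost(\map) \le \|\hat f\|_{\dartsof{E}} \le (1+O(\eps))\cost^*(P_0,\supply_0)\), in~\(O(m\log n)\) time, which is dominated by the preceding phases. Summing the per-instance solver times over~\(\sum n' = O(n)\) and adding the~\(O(n\log^2 n)\) for quadtrees, the~\(O(n\log^3 n)\) for the shift/blob data structures, and the recovery, the total running time is~\(O(n\eps^{-(d+2)}\log^5 n\log\log n)\). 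Integrality follows from Lemma~\ref{lem:support_forest} (integral divergences force an integral flow once the support is made a forest) and Lemma~\ref{lem:recover}; finally, dividing~\(\eps\) by a sufficiently large constant converts~\((1+O(\eps))\) into~\((1+\eps)\) without changing the asymptotic running time.

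The substantive difficulties all live in the cited lemmas rather than in this assembly: above all the bound~\(\kappa = O(\sqrt d\log n)\) on the condition number of~\(BA\), which rests on the oblivious greedy algorithm simulating legal shifts (Lemmas~\ref{lem:surviving-b} and~\ref{lem:total-overcharge}), and the~\(O(n\log n)\) bound on the number of warped quadtree cells under unbounded spread (Lemma~\ref{lem:num_cells}). Within the proof of the theorem itself, the only points needing care are the bookkeeping: verifying that point-set sizes and cell counts across all contracted instances sum to~\(O(n)\), that the union flow~\(\hat f\) really has divergence vector~\(\supply_0\), and that the independent~\((1+O(\eps))\) and~\((1+O(1/n))\) losses compose into a single~\((1+O(\eps))\) factor.
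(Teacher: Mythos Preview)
Your proposal is correct and follows essentially the same approach as the paper's proof: assemble the spanners and approximate flows over all contracted instances, combine them into a single flow~\(\hat f\) on a union spanner~\(\hat G\), and invoke the recovery procedure, with the running-time and approximation-factor bookkeeping handled exactly as you describe. If anything you are slightly more explicit than the paper on two points---the telescoping of divergences at representative points to verify \(A\hat f = \supply_0\), and the \(O(n\log^3 n)\) cost of the shift/blob data structures summed over instances---both of which the paper leaves implicit.
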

\begin{proof}
  Recall, we build a warped quadtree~\(T\) while contracting certain subsets of~\(P\).
  Let~\((P', \supply')\) denote the geometric transportation instance after contraction and let~\(n'
  = |P'|\).
  We build the sparse spanner graph~\(G = (V, E)\) over~\(P'\) in~\(O(n' \eps^{-d} \log n)\) time.
  Let~\(m = |E| = O(n' \eps^{-d} \log n)\).
  We define an instance of uncapacitated maximum flow~\((G, b^*)\) where~\(b^*_v\) is equal to
  \(\supply'(v)\) if \(v \in P'\) and equal to \(0\) otherwise.
  By Lemma~\ref{lem:approximate_distances}, \(\cost^*(G, b^*) \leq (1 + O(\eps)) \cost^*(P',
  \supply')\).
  We compute a flow~\(f\) of cost \((1 + O(\eps)) \cdot \cost^*(G, b^*) = O(1 + O(\eps)) \cost^*(P',
  \supply')\) in \(O(m \eps^{-2} \log^4 n \log \log n) = O(n' \eps^{-(d+2)} \log^5 \log \log
  n)\) time using the algorithm described in Section~\ref{sec:flow}.

  By the discussion at the end of Section~\ref{sec:spanner}, we can combine the spanner~\(G\) and
  the flow~\(f\) with the recursively computed spanners' \((1 + O(\eps))\)-approximate flows for
  each contracted subset of~\(P\) to yield a flow~\(\hat{f}\) for a single spanner \(\hat{G}\) on
  \((P, \supply)\).
  This flow has cost \(||\hat{f}||_{\dartsof{E}} = (1 + O(\eps)) \cdot \cost^*(P, \supply)\).
  Finally, we compute a transportation map~\(\map\) for \((P, \supply)\) with cost at
  most~\(||\hat{f}||_{\dartsof{E}} = (1 + O(\eps)) \cost^*(P, \supply)\).
  If~\(\supply(p)\) is an integer for all~\(p \in P\), then~\(\map(p,q)\) is an integer for
  all~\((p,q) \in (P^+ \times P^-)\) per the above discussions.

  By Lemmas~\ref{lem:num_cells} and~\ref{lem:tree_time}, we spend \(O(n \log^2 n)\) time total
  constructing all warped quadtrees across the various recursive subproblems.
  We then spend \(O(n \eps^{-(d+2)} \log^5 n \log \log n)\) time computing flows for all individual
  subproblems and~\(O(n \log^2 n)\) time transforming the flows into a transportation map.
  We conclude our proof.
\end{proof}

\section{Simplifying the algorithm for low spread cases}
\label{sec:low_spread}
In this section, we sketch some simplifications that can be made to our algorithm for the case
that~\(\spread(P)\) is small.
Our simplified algorithm computes a \((1+\eps)\)-approximation of the optimal transportation map in
\(O(n\eps^{-(d+2)}(\log{n}+\log^3 \spread(P) \log{\log{\spread(P)}})\log{\spread(P)})\) time.
When \(\spread(P) = n^{O(1)}\), the running time of the simplified algorithm is slightly better
than the one for the unbounded spread case.

Instead of building a warped quadtree as in the first half of Section~\ref{sec:spanner}, we use a
standard quadtree~\(T\) where all cells at a level have exactly the same size and the leaves are
exactly those cells containing one point of~\(P\).
Therefore, we do not need the moat avoidance data structures.
There is no need to contract subsets of~\(P\), and the depth of \(T\) is \(\log{\spread{(P)}}+1\).
We build our sparse graph \(G = (V, E)\) on \(T\) using the procedure described in
Section~\ref{subsec:build-spanner}.
Lemma~\ref{lem:approximate_distances} still holds on \(G\).
The time to construct~\(T\) and~\(G\) is \(O(n \eps^{-d} \log \spread)\) and \(|E| = O(n \eps^{-d}
\log \spread)\) as well.

When finding the \((1 + O(\eps))\)-approximation for the minimum cost flow instance \((G, b^*)\), we
no longer worry about moats.
For the greedy algorithm and preconditioner in Section~\ref{sec:flow}, we essentially treat each
point \(u \in V\) as its own blob appearing at every level of the quadtree.
At level~\(\ell\), we allow \emph{all} shifts in \([0, \Delta^* / 2^\ell]^d\), thus eliminating the
need for the legal shift and blob data structures.
\begin{comment}
  Comment for the lowspread condition number:
  In preconditioner, we use \(\Lambda=24d^{3/2}\log{\spread(P)}\) and set each entry's coefficient to be
  \(\frac{\Delta{\ell_C}}{2\Lambda}\).
  Then, we know \(\frac{|B\Tilde{b}_{N_C}| 2\Lambda}{\Delta_{\ell_C}}\) divergence leaves the net point
  \(N_C\) of a cell \(C\) through edges of length at least \(\sqrt{d}(1/2)\Delta_{\ell_C}\), and through paths
  of lengths at most \(3\sqrt{d}\Delta{\ell_C}\).
  So \(||B\Tilde{b}||_1\cdot \sqrt{d}\Lambda \le ||f||_{\dartsof{E}} \le ||f^*||_{\dartsof{E}}\) and
  \(||f||_{\dartsof{E}} \le 6\sqrt{d}\Lambda ||B\Tilde{b}||_1\).
  \(||f|||_{\dartsof{E}}\) overcharges at most \(12d^{3/2} \cdot \log{\spread(P)} \cdot 2\) times of the cost of optimal.
  (charge \(\log{\spread(P)}\) levels from both side of each pair).
  Therefore, we have
  \[ ||B\Tilde{b}||_1 < \frac{||f||_{\dartsof{E}}}{\Lambda} \le ||f^*||_{\dartsof{E}} \le ||f||_{\dartsof{E}} \le 144d^2\log{\spread(P)} ||B\Tilde{b}||_1 \]
\end{comment}
Lemma~\ref{lem:surviving-b} and Lemma~\ref{lem:total-overcharge} together imply the conditioner
number \(\kappa\) of the preconditioner in the low spread case is at most
\(144d^2\log{\spread(P)}\).
Therefore, we can compute a flow with cost at most \((1+O(\eps)) \cdot \cost^*(P, \supply)\) in
\(O(n\eps^{-(d+2)}(\log^4 \spread(P) \log \log \spread(P)))\) time using Sherman's preconditioner
framework.

Our procedure for recovering a transportation map from the flow is unchanged, running in
\(O(n\eps^{-d}\log n \log{\spread(P)})\) time.
Considering everything above, we get the following theorem.
\begin{theorem}
  There exists a deterministic algorithm that, given a set of~\(n\) points \(P \subset \R^d\) of
  spread \(\spread(P)\) and a supply function \(\supply : P \to \R\), runs in time\linebreak
  \(O(n\eps^{-(d+2)}(\log{n}+\log^3 \spread(P) \log{\log{\spread(P)}})\log{\spread(P)})\) and
  returns a transportation map~\(\map\) with cost at most \((1 + \eps) \cdot \cost^*(P, \supply)\).
  Further, if~\(\supply(p)\) is an integer for all~\(p \in P\), then~\(\map(p,q)\) is an integer for
  all~\((p,q) \in (P^+ \times P^-)\).
\end{theorem}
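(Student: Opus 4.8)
The plan is to assemble the theorem from ingredients already developed in Sections~\ref{sec:spanner}--\ref{sec:recover} together with the simplifications just described, so that once the low-spread condition number bound is confirmed the rest is bookkeeping. Unlike the general case there are no contractions, so there is a single transportation instance throughout, the depth of the quadtree is exactly \(\log\spread(P)+1\), and the recovery step receives the spanner \(G\) and the approximate flow directly rather than a union of recursively computed flows; moreover no moat-avoidance, legal-shift, or blob data structures are needed.

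In order, I would proceed as follows. (1) Build the standard quadtree \(T\) and the spanner \(G=(V,E)\) by the construction of Section~\ref{subsec:build-spanner}; as noted this takes \(O(n\eps^{-d}\log\spread(P))\) time and gives \(|E|=O(n\eps^{-d}\log\spread(P))\). (2) Invoke Lemma~\ref{lem:approximate_distances}, which holds verbatim on this \(G\), so the induced minimum cost flow instance \((G,b^*)\) satisfies \(\cost^*(P,\supply)\le\cost^*(G,b^*)\le(1+O(\eps))\cost^*(P,\supply)\). (3) Run the preconditioned solver of Section~\ref{sec:flow} with the degenerate blob structure---each net point is its own blob at every level below its leaf level, and every shift in \([0,\Delta_{C^*}/2^\ell)^d\) is legal---for which Lemmas~\ref{lem:surviving-b} and~\ref{lem:total-overcharge}, reinterpreted in this setting, give a greedy overcharge factor \(\Lambda=O(d^{3/2}\log\spread(P))\) and hence a condition number \(\kappa=O(d^2\log\spread(P))\); Sherman's framework then returns a feasible flow \(f\) with \(\|f\|_{\dartsof{E}}\le(1+O(\eps))\cost^*(G,b^*)\) in \(O(n\eps^{-(d+2)}\log^4\spread(P)\log\log\spread(P))\) time. (4) Apply the unchanged recovery of Section~\ref{sec:recover}: Lemmas~\ref{lem:support_forest} and~\ref{lem:recover} turn \(f\) into a transportation map \(\map\) for \((P,\supply)\) with \(\cost(\map)\le\|f\|_{\dartsof{E}}=(1+O(\eps))\cost^*(P,\supply)\) in \(O(|E|\log n)=O(n\eps^{-d}\log n\,\log\spread(P))\) time, and these lemmas also yield integrality of \(\map\) whenever every supply is integral. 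Summing the three running times (the construction term is dominated) and dividing the input \(\eps\) by a large enough constant to convert \(1+O(\eps)\) into \(1+\eps\) gives the stated bound.

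The one genuinely non-trivial step, and the place I expect to spend the most care, is reinterpreting Lemmas~\ref{lem:surviving-b} and~\ref{lem:total-overcharge} in the degenerate setting. Lemma~\ref{lem:surviving-b} should carry over because its proof used only the per-level lower bound on cell side lengths and a one-crossing-per-dimension argument, both of which survive when blobs are singletons, every shift is legal, and the total legal-shift length at level \(\ell\) equals the level-\(\ell\) side length \(\Delta_{C^*}/2^\ell\); the uncancelled divergence from \(u\) at each level thus remains \(O(d\|u-v\|_2/\Delta_\ell)\cdot b^i_u\). For Lemma~\ref{lem:total-overcharge}, the role formerly played by moats forcing \(u\) and \(v\) into a common blob is taken over by the root of the shallow quadtree: once the level-\(\ell\) side length exceeds \(4d\|u-v\|_2\) the two points lie in the same cell under every shift, so only \(O(\log\spread(P))\) levels contribute nonzero cost, each contributing \(O(d^{3/2})\cdot\|f^i\|_{\dartsof{E}}\) by the same geometric-series estimate on net-point out-edge lengths, which gives \(\Lambda=O(d^{3/2}\log\spread(P))\) and \(\kappa=O(d^2\log\spread(P))\). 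Beyond this the only remaining subtlety is the accounting of \(\log n\) versus \(\log\spread(P)\) factors in the running time---the \(\log n\) should enter only through the recovery step while every other term is polylogarithmic in \(\spread(P)\)---together with composing the three \((1+\eps)\)-type factors, both of which are routine.
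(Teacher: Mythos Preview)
Your proposal is correct and follows essentially the same route as the paper's sketch in Section~\ref{sec:low_spread}. One small slip to fix: in your reinterpretation of Lemma~\ref{lem:total-overcharge} you write that ``once the level-\(\ell\) side length exceeds \(4d\|u-v\|_2\) the two points lie in the same cell under every shift.'' This is false; even for arbitrarily large cells some shifts separate \(u\) and \(v\), so divergences need not cancel completely at any intermediate level. The correct (and simpler) reason the overcharge is \(O(d^{3/2}\log\spread(P))\) is just that the standard quadtree has depth \(O(\log\spread(P))\), and by Lemma~\ref{lem:surviving-b} each of those levels contributes at most \(O(d^{3/2})\cdot\|f^i\|_{\dartsof{E}}\). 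The paper makes exactly this move, simply asserting that Lemmas~\ref{lem:surviving-b} and~\ref{lem:total-overcharge} together give \(\kappa\le 144d^2\log\spread(P)\). With that correction your argument and the paper's coincide.
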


Recall, the geometric bipartite matching problem is the special case where~\(\supply(p) \in
\Set{-1,1}\) for all~\(P\), and the transportation map is required to assign either~\(0\) or~\(1\)
to each pair of points.
Approximating an arbitrary case of the geometric bipartite matching problem can be reduced in \(O(n
\log^2 n)\) time to an instance where the spread is polynomial in \(n\)~\cite{acrx-dnaag-22}.
As our algorithm is guaranteed to return a \(0,1\) map given such an instance, we conclude with the
follow corollary.

\begin{corollary}
  There exists a deterministic algorithm that, given an \(n\)-point instance of the geometric
  bipartite matching problem, returns a \((1 + \eps)\)-approximately optimal matching in time\linebreak
  \(O(n \eps^{-(d+2)} \log^4 n \log \log n)\).
\end{corollary}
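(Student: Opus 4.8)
The plan is to combine the low-spread theorem just proved with the spread-reduction of~\cite{acrx-dnaag-22}, treating both as black boxes. First I would invoke the reduction: given an arbitrary $n$-point instance of geometric bipartite matching, it runs in $O(n \log^2 n)$ time and produces an instance on $O(n)$ points whose spread is $n^{O(1)}$, still with all supplies in $\Set{-1,1}$, together with a map pulling back any $(1+\eps)$-approximately optimal matching of the reduced instance to a $(1+\eps)$-approximately optimal matching of the original. This is precisely the guarantee we are allowed to assume, so it remains only to solve the reduced bounded-spread instance.

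Next I would apply the low-spread theorem with $\spread(P) = n^{O(1)}$, so that $\log \spread(P) = \Theta(\log n)$. Substituting into its running time $O(n\eps^{-(d+2)}(\log n + \log^3 \spread(P)\log\log\spread(P))\log \spread(P))$ yields $O(n\eps^{-(d+2)}(\log n + \log^3 n\log\log n)\log n) = O(n\eps^{-(d+2)}\log^4 n \log\log n)$, since the $\log^3 n\log\log n$ term dominates the $\log n$ term inside the parentheses. The additive $O(n\log^2 n)$ cost of the reduction is absorbed into this bound.

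Finally I would verify the output is a genuine matching rather than merely a fractional transportation map. Because every supply of the reduced instance is $-1$ or $1$, the integrality clause of the low-spread theorem guarantees each value $\map(p,q)$ is a non-negative integer; combined with the constraint $\sum_{q \in P^-}\map(p,q) = 1$ for each $p \in P^+$, this forces $\map(p,q) \in \Set{0,1}$, so the pairs with $\map(p,q) = 1$ form a matching whose cost is within a $(1+\eps)$ factor of optimal. Pulling this matching back through the reduction produces the claimed $(1+\eps)$-approximate matching of the original instance within the stated time bound; dividing $\eps$ by a constant upgrades the $(1 + O(\eps))$ guarantee to a genuine $(1+\eps)$ one without affecting the asymptotics.

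I do not expect a real obstacle here; the corollary is essentially a bookkeeping step. The one point requiring care is confirming that the reduction of~\cite{acrx-dnaag-22} preserves exactly what we need — that the reduced instance keeps supplies in $\Set{-1,1}$ (so the integrality-to-$\Set{0,1}$ argument applies there) and that its pullback map preserves both being a $0,1$ matching and the approximation factor — so I would cite those properties explicitly rather than re-derive them.
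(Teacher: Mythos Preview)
Your proposal is correct and follows essentially the same approach as the paper: invoke the \(O(n\log^2 n)\) spread-reduction of~\cite{acrx-dnaag-22} to obtain a polynomial-spread instance, apply the low-spread theorem with \(\log\spread(P)=\Theta(\log n)\), and use the integrality clause plus unit supplies to conclude the output is a genuine \(0,1\) matching. Your write-up is in fact more explicit than the paper's own justification, which states these steps in a single sentence before the corollary.
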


\section{Uncapacitated minimum-cost flow in general graphs}
\label{sec:general_flow}

Previously, we reduced approximating the geometric transportation problem to approximating a special case of minimum-cost flow without edge capacities.
In this section, we turn the situation around by showing how to approximate minimum-cost flow in a \emph{general} graph via reductions to our algorithm for geometric transportation.
Our algorithm is based on the one given in~\cite{asz-pausp-20} for the case of moderate integer edge costs.

Let \(G = (V, E)\) be an arbitrary undirected graph, let \(|| \cdot ||_{\dartsof{E}}\) denote an arbitrary norm on \(\R^{\dartsof{E}}\), and let \(b \in \R^V\) denote an arbitrary set of vertex divergences.
In this section, we let \(n := |V|\) and \(m := |E|\).
Fix a parameter~\(\eps > 0\).
We again use Sherman's~\cite{s-gpumf-17} framework as described in Section~\ref{sec:flow}.
Accordingly, we need a preconditioner \(Q \in \R^{r \times V}\) of full column rank such that 
\begin{equation}\label{eq:kappa-general}
||Q\Tilde{b}||_1\le \Min\{||f||_{\dartsof{E}}:f\in \R^{\dartsof{E}}, Af=\Tilde{b}\}\le \kappa ||Q\Tilde{b}||_1
\end{equation}
for any \(\Tilde{b} \in \R^V\) and with \(\kappa\) small.
Note that \(r \neq |V|\) in this case;
we'll leave it unspecified for now.
We also need to describe an efficient \(\kappa\)-approximate ``oblivious'' greedy algorithm to help us estimate~\(\kappa\).
However, as in~\cite{asz-pausp-20}, we'll actually run iterations of Sherman's framework until it suffices to use a simple \(n\)-approximation to satisfy the final set of divergences.

We'll begin with the greedy algorithm as it makes it easier to describe the preconditioner itself.
We start with the following lemma.
\begin{lemma}[\cite{b-lefms-85}]\label{lem:Bourgain}
    There is a randomized algorithm which can output a mapping \(\psi: V \to \R^d\) with \(d = O(\log^2 n)\) with constant probability in \(O(m \log^2 n)\) time such that for all \(u, v \in V\)
    \[ \dist_G(u, v) \leq ||\psi(u) - \psi(v)||_2 \leq O(\log n) \cdot \dist_G(u, v). \]
\end{lemma}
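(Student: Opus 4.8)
The plan is to instantiate the classical Bourgain embedding of a finite metric into $\ell_2$, specialized so that its dimension and distortion match the claimed bounds, and then observe that every coordinate is computable by a single shortest-path call. Concretely, for each scale $j \in \{1, \dots, \lceil \lg n \rceil\}$ and each repetition $t \in \{1, \dots, c\lg n\}$ for a suitable constant $c$, I would sample a subset $A_{j,t} \subseteq V$ by placing each vertex in $A_{j,t}$ independently with probability $2^{-j}$, and set the coordinate $\psi_{j,t}(v) := \dist_G(v, A_{j,t})$, the shortest-path distance from $v$ to its nearest vertex in $A_{j,t}$ (discarding empty sampled sets). The map $\psi$ collects these $d = O(\log^2 n)$ coordinates, with a common rescaling by $\Theta(1/\sqrt{\lg n})$ followed by a single global rescaling chosen so that the two inequalities come out with exactly the stated constants. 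Since $\psi$ is then a rescaled copy of the standard Bourgain map, it embeds $(V, \dist_G)$ into $\ell_2^d$ with distortion $O(\log n)$, i.e. $\dist_G(u,v) \le \|\psi(u) - \psi(v)\|_2 \le O(\log n)\cdot\dist_G(u,v)$.

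For the distortion analysis, the upper bound is the easy direction: each coordinate $v \mapsto \dist_G(v, A_{j,t})$ is $1$-Lipschitz with respect to $\dist_G$ by the triangle inequality, so $|\psi_{j,t}(u) - \psi_{j,t}(v)| \le \dist_G(u,v)$ for every pair; summing the $O(\log^2 n)$ squared coordinate differences, taking the square root, and applying the $\Theta(1/\sqrt{\lg n})$ normalization yields $\|\psi(u) - \psi(v)\|_2 = O(\log n)\cdot\dist_G(u,v)$. The non-contraction bound is the part requiring the probabilistic argument. Fixing a pair $u, v$, I would let $\rho_j(u)$ be the least radius with $|B_G(u, \rho_j(u))| \ge 2^j$, define $\rho_j(v)$ symmetrically, and set $\hat\rho_j := \min\{\rho_j(u), \rho_j(v), \dist_G(u,v)/3\}$; the standard ``growing balls'' computation then shows that a single random set $A_{j,t}$ achieves $|\psi_{j,t}(u) - \psi_{j,t}(v)| \ge \hat\rho_j - \hat\rho_{j-1}$ with at least some absolute constant probability. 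Averaging over the $c\lg n$ independent repetitions at each scale, summing the telescoping lower bound over all scales, and applying a Chernoff bound gives $\sum_{j,t}|\psi_{j,t}(u)-\psi_{j,t}(v)| = \Omega(\dist_G(u,v)\lg n)$ with probability $1 - n^{-\Omega(1)}$ for the fixed pair; a union bound over all $\binom{n}{2}$ pairs makes this simultaneous for all pairs with constant probability. Since each coordinate difference is itself at most $\dist_G(u,v)$, the $\ell_1$ and (squared) $\ell_2$ masses of the coordinate-difference vector are comparable up to the $\sqrt{\lg n}$ normalization, so this $\ell_1$-type lower bound translates to $\|\psi(u)-\psi(v)\|_2 = \Omega(\dist_G(u,v))$, and the global rescaling makes the constant exactly $1$.

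For the running time, the only nontrivial cost is evaluating the coordinate vectors. For a fixed set $A_{j,t}$, the vector $(\dist_G(v, A_{j,t}))_{v \in V}$ equals the single-source shortest-path distances from an auxiliary vertex joined by zero-weight edges to every vertex of $A_{j,t}$, computable in $O(m + n\log n)$ time by a Fibonacci-heap Dijkstra; over the $O(\log^2 n)$ sampled sets this is $O(m\log^2 n)$ time total, and sampling the sets and assembling $\psi$ is a lower-order cost. I expect the non-contraction argument of the second paragraph to be the main obstacle, although it is entirely standard and due to Bourgain; the Lipschitz upper bound and the running-time accounting are routine.
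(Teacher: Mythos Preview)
The paper does not prove this lemma; it is stated with a citation to Bourgain's original work and used as a black box. Your proposal is a correct sketch of the standard Bourgain embedding construction and its algorithmic realization via repeated single-source shortest paths from a super-source, so there is nothing to compare against here. One very minor quibble: with $O(\log^2 n)$ Dijkstra calls at $O(m + n\log n)$ each you actually get $O((m + n\log n)\log^2 n)$, which only simplifies to $O(m\log^2 n)$ under the common convention $m = \Omega(n\log n)$ or by absorbing the additive $n\log^3 n$ term; this is harmless in the paper's context since the overall bound is $m\log^{O(1)} n$ anyway.
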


A solution to the geoemtric transportation problem for \(\psi(V)\) should form a reasonable estimate of the cost of the optimal flow.
Unfortunately, the dimension of the target space is moderately large.
We can deal with the large dimensionality by using the following weakening of our main result.
\begin{theorem}
    \label{thm:main-simplified}
    Suppose \(d\) \emph{is not} constant.
    There exists a deterministic algorithm that, given a set of~\(n\) points \(P \subset \R^d\) and a
  supply function \(\supply : P \to \R\), runs in time \(O(d n \log n)\)
  and returns a transportation map~\(\map\) with cost at most \(O(d^2 \log n) \cdot \cost^*(P,
  \supply)\).
\end{theorem}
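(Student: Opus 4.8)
The plan is to run a stripped‑down version of the pipeline of Sections~\ref{sec:spanner}--\ref{sec:recover}, keeping the oblivious greedy algorithm of Section~\ref{sec:flow} and its analysis essentially intact, but replacing every object whose size is exponential in~\(d\) by one of size \(\poly(d)\), and giving up on a \((1+\eps)\) factor.

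First I would reduce to the case \(\spread(P)=n^{O(1)}\). Building a compressed quadtree over \(P\) in \(O(dn\log n)\) time, one can repeatedly \emph{contract} (as in Section~\ref{sec:spanner} and Lemma~\ref{lem:contraction}) any maximal point set \(P'\) separated from the rest of \(P\) by a gap of \(n^{\Omega(1)}\) times its own diameter; each contraction changes the optimal cost by only a \((1+O(1/n^{\Omega(1)}))\) factor, there are at most \(n\) of them, and afterwards every (sub)instance has polynomial spread and a compressed quadtree \(T\) with \(O(n)\) cells and \(O(\log n)\) levels of interest. On each such instance I would build the sparse spanner exactly as in Section~\ref{subsec:build-spanner}, except that each net point is joined only to its parent and to the \(O(d)\) net points along the single‑parameter ``diagonal shift'' staircase used by the preconditioner of Section~\ref{sec:flow} (replacing softlinks for grid cells absent from \(T\)); this yields \(O(dn)\) edges and an \(O(dn\log n)\) construction time. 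I would then instantiate the flow machinery of Section~\ref{sec:flow} in the ``no‑blob'' mode of Section~\ref{sec:low_spread}: treat each point as its own blob at every level and let the legal shifts at level~\(\ell\) be the entire diagonal family \([0,\Delta_\ell)\). Since the spread is now polynomial, Lemmas~\ref{lem:surviving-b} and~\ref{lem:total-overcharge} apply and show the resulting oblivious greedy algorithm overcharges each path of an optimal flow by a factor \(\Lambda=O(d^{3/2}\log n)\), so the preconditioner \(B\) built from it has condition number \(\kappa=O(d^{2}\log n)\). Rather than running Sherman's iterative solver (which would reintroduce an \(\eps\)-dependence), I would simply return the greedy flow itself: it is a feasible flow for the spanner instance of cost at most \(\kappa\,\|B b^*\|_1\le\kappa\,\cost^*\), i.e.\ the \((\kappa,0)\)-solver of Section~\ref{sec:flow}, computed in near‑linear time. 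Finally, feeding this flow (together with the recursively computed flows on the contracted instances) into the recovery procedure of Section~\ref{sec:recover} yields a transportation map of no greater cost in \(O(dn\log n)\) time, and Lemma~\ref{lem:recover} applies since it was designed for an arbitrary spanner over \(P\).

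The step I expect to be the main obstacle is bounding the greedy flow's cost by \(O(d^{2}\log n)\cdot\cost^*(P,\supply)\) while using only \(O(d)\) nearby edges per net point --- this is exactly where the earlier algorithms needed \(\eps^{-\Theta(d)}\) edges so that the spanner metric stays within \((1+\eps)\) of the true metric. The key observation is that Lemma~\ref{lem:surviving-b} already bounds the un‑cancelled divergence at each level in terms of the \emph{true} Euclidean distance \(\|u-v\|_2\), not a spanner distance. Hence, decomposing an optimal transportation map \(\tau^*\) for \((P,\supply)\) into its pairs \((p,q)\) with weight \(\tau^*(p,q)\) and applying Lemma~\ref{lem:surviving-b} to each pair separately, the greedy flow's cost is charged directly to \(\sum_{(p,q)}\tau^*(p,q)\,\|p-q\|_2=\cost^*(P,\supply)\), up to the \(\Lambda=O(d^{3/2}\log n)\) overcharge and an \(O(1)\) slack between a leaf cell's net point and its point once the huge gaps have been contracted away --- with no appeal to distance preservation in the spanner at all. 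Verifying that the \(O(d)\) staircase edges genuinely suffice to carry every unit of flow the greedy algorithm wants to route, and pinning down that \(O(1)\) slack after the contraction step, is the technical heart of the argument.
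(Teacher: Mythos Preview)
Your plan is essentially the paper's: build the Section~\ref{sec:spanner} spanner but with only \(O(d)\) same-level edges per net point (the staircase needed by the greedy router), run a \emph{single} iteration of the oblivious greedy algorithm of Section~\ref{sec:flow} rather than the full Sherman loop, and hand the resulting flow to the recovery procedure of Section~\ref{sec:recover}. The paper's proof states exactly this, citing Lemmas~\ref{lem:approximate_distances} and~\ref{lem:total-overcharge} for the two factors that multiply to \(O(d^2\log n)\).

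Where you differ is in the bookkeeping for that product. The paper first argues the thinned spanner has \(O(\sqrt d)\) metric distortion and then applies the \(O(d^{3/2}\log n)\) greedy bound to the spanner optimum. You instead observe that Lemma~\ref{lem:surviving-b} already speaks in terms of the Euclidean distance \(\|u-v\|_2\), so one can decompose the optimal \emph{transportation map} \(\tau^*\) into pairs and charge the greedy flow directly to \(\cost^*(P,\supply)\), bypassing any spanner-distortion claim. That is a legitimate and arguably cleaner route. One correction: the ``\(O(1)\) slack'' between a point \(p\) and its leaf net point \(N_{C_p}\) is not \(O(1)\); contracting to polynomial spread alone does not make leaf cells small relative to inter-point distances. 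You need either to retain the extra \(\lg(n^2/\eps)\) singleton levels from Section~\ref{sec:spanner} (with \(\eps=\Theta(1)\)), which forces \(\|p-N_{C_p}\|_2\le O(\sqrt d)\,\|p-q\|_2\) for every \(q\neq p\), or to identify each leaf net point with its unique point. Either fix recovers \(O(d^{3/2}\log n)\cdot O(\sqrt d)=O(d^2\log n)\), so the conclusion stands; just be aware that the ``\(O(1)\)'' you flagged as needing care is really an \(O(\sqrt d)\).
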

\begin{proof}
    We build the spanners as before, except we place leaves immediately when a cell contains exactly one point and add edges only between net points and the net points of their neighboring cells.
    The resulting spanners have \(O(d n \log n)\) vertices and edges total, and they maintain shortest path distances up to an \(O(\sqrt{d})\) factor.
    See Lemma~\ref{lem:approximate_distances}.

    We define the preconditioner \(B\) as before.
    A single iteration of the greedy algorithm results in an~\(O(d^{3/2} \log n)\) approximation to
    the spanner's minimum cost flow instance.
    See Lemma~\ref{lem:total-overcharge}.
    We run a single iteration of the greedy algorithm in each spanner in \(O(d n \log^2 n)\) time total, resulting in \(O(d^2 \log n)\) approximately optimal flows.
    We combine and transform them into proper transportation maps in \(O(d n \log^2 n)\) additional time as described in Section~\ref{sec:recover}.
\end{proof}

Our greedy algorithm for seeking approximately optimal flows on~\(G\) computes a Bourgain embedding as described in Lemma~\ref{lem:Bourgain}.
We can then use the algorithm of Theorem~\ref{thm:main-simplified} to get an~\(O(d^2 \log n) \cdot O(\log n) = \log^{O(1)} n\) approximation on the minimum-cost flow value for the original problem in \(n \log^{O(1)} n\) time.
Note that our algorithm for minimum cost flow need not actually extract a transportation map from the spanner flows.

We are now ready to describe the preconditioner~\(Q\) needed for the minimum-cost flow instance on~\(G\).
Let~\(V'\) denote the full set of net points in each of the spanners built by the algorithm of Theorem~\ref{thm:main-simplified}.
Let \(Q^1\) denote the \(|V'| \times |V|\) \(0-1\) matrix where \(Q^1_{\psi(u),N_{C_\psi(u)}} = 1\) for all vertices \(u \in V\), and all other entries are~\(0\).
Let \(Q^2\) denote the \(|V'| \times |V'|\) real-valued matrix composed of the spanners' individual preconditioner matrices where \(Q^2_{u, v} = B_{u, v}\) if \(u\) and \(v\) belong to the same spanner with preconditioner \(B\).
All other entries of \(Q^2\) are~\(0\).
Finally, let \(Q = Q^2 Q^1 \in \R^{V' \times V}\).

Matrix \(Q\) has full column rank.
The value of~\(\kappa\) in~\eqref{eq:kappa-general} is~\(\log^{O(1)} n\).
For any \(f \in \R^{\dartsof{E}}\), we can compute \(QAf\) in \(O(m\log n) + n \log^{O(1)} n\) time by first computing \(q' := Q^1 A f\) and then applying the algorithm of Lemma~\ref{lem:DP} to compute \(Q^2 q'\).
In fact, we can compute \(Q^2 q'\) in time proportional to the size of the spanners, because we no longer need to track which blob flow originally came from.
For any \(\Tilde{b} \in \R^V\), we can compute \((QA)^T \Tilde{b}\) in the same time by first computing \(b' := {Q^2}^T \Tilde{b}\) and then computing \(A^T {Q^1}^T b'\).

Assuming we compute a good embedding with Lemma~\ref{lem:Bourgain},
there exists a \((1 + \eps, \eps^{1 + \lg n} / \kappa)\)-solver for the minimum cost flow instance that performs \(\eps^{-2} \log^{O(1)} n\) matrix multiplications.
We can compose this solver with a simple \((n, 0)\)-solver that runs in \(m \log^{O(1)} n\) time to get a \((1 + O(\eps))\)-approximate solution to the minimum cost flow instance.
The total time spent is \(m \log^{O(1)} n\).
We can run our algorithm \(O(\log n)\) times to guarantee success with high probability \(1 - 1/n^c\) for any constant \(c\).

\begin{theorem}
There exists a randomized algorithm that, given an undirected graph \(G = (V, E)\) with \(n\) vertices and \(m\) edges, an arbitrarily norm on \(\R^{\dartsof{E}}\), and an arbitrarily set of vertex divergences \(b \in \R^V\), runs in time \(m \eps^{-2} \log^{O(1)} n\)
and returns a \((1 + \eps)\)-approximate uncapacitated minimum cost flow in~\(G\) with high probability.
\end{theorem}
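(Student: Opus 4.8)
The plan is to bolt the ingredients of this section onto Sherman's framework in the same way Section~\ref{sec:flow} does for geometric transportation. First I would run the algorithm of Lemma~\ref{lem:Bourgain} to obtain an embedding $\psi : V \to \R^{d}$ with $d = O(\log^2 n)$ and $\dist_G(u,v) \le \|\psi(u)-\psi(v)\|_2 \le O(\log n)\,\dist_G(u,v)$, conditioning the rest of the analysis on this (constant-probability) event, and then run the algorithm of Theorem~\ref{thm:main-simplified} on $\psi(V)$ to produce the spanner(s), their net-point set $V'$, and the matrices $Q^1, Q^2$, setting $Q = Q^2 Q^1$. The three facts I need are: $Q$ has full column rank and satisfies \eqref{eq:kappa-general} with $\kappa = \log^{O(1)} n$; $QA$ and $(QA)^T$ apply to a vector in $M = m\log^{O(1)} n$ time; and there is a crude clean-up solver that preserves feasibility. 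Given these, plugging $\kappa$, $M$, and $\beta = \eps^{1+\lg n}/\kappa$ into the Sherman bound quoted in Section~\ref{sec:flow} gives a $(1+\eps,\beta)$-solver performing $\eps^{-2}\log^{O(1)} n$ matrix multiplications, hence running in $m\eps^{-2}\log^{O(1)} n$ time; composing it with the clean-up solver yields a $(1+O(\eps))$-approximation, and dividing $\eps$ by a constant gives $(1+\eps)$.

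The main work — and the step I expect to be the chief obstacle — is verifying \eqref{eq:kappa-general} with $\kappa = \log^{O(1)} n$, since it requires a two-sided comparison among the minimum-cost flow cost in $G$, the optimal Euclidean transport cost of a divergence vector on $\psi(V)$, and flow cost in the geometric spanner. Full column rank of $Q$ follows because $Q^1$'s columns are distinct standard basis vectors and each spanner's $B$ is column-rank-full by the triangular argument of Section~\ref{sec:flow}. For the lower bound, with $m := Q^1\tilde{b}$ the divergences lifted to leaf net points, the Section~\ref{sec:flow} accounting shows $\|Q^2 m\|_1 \le \cost^*(\text{spanner}, m)$; by Lemma~\ref{lem:approximate_distances}/Theorem~\ref{thm:main-simplified} this is $O(\sqrt d)$ times the optimal Euclidean transport cost of $\tilde{b}$ on $\psi(V)$, which in turn is $O(\log n)\,\cost^*(G,\tilde{b})$ by decomposing any $G$-flow into source--sink paths and using $\|\psi(x)-\psi(y)\|_2 \le O(\log n)\dist_G(x,y)$; after rescaling each spanner's preconditioner down by this $\log^{O(1)} n$ factor the lower bound holds with constant $1$ while $\kappa$ stays polylogarithmic. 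For the upper bound I would run the oblivious greedy algorithm of Section~\ref{sec:flow} on the spanner: by Lemma~\ref{lem:total-overcharge} it yields a feasible spanner flow for $m$ of cost $O(d^{3/2}\log n)\|Q^2 m\|_1$; decomposing it into net-point-to-net-point paths gives a transport plan on $\psi(V)$ meeting $\tilde{b}$ of no greater cost (leaf cells being arbitrarily small), and routing each transported pair along its shortest $G$-path — whose length is at most the corresponding $\psi$-distance by the contraction bound — turns this into a flow in $G$ meeting $\tilde{b}$ of no greater cost. Hence $\cost^*(G,\tilde{b}) \le O(d^{3/2}\log n)\|Q^2 m\|_1 = \log^{O(1)} n\cdot\|Q\tilde{b}\|_1$.

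For the remaining ingredients, $Q^1$ and ${Q^1}^T$ act in $O(m)$ time while $Q^2$ and ${Q^2}^T$ act in time proportional to the spanner size, $m\log^{O(1)} n$, by the simpler version of Lemma~\ref{lem:DP} (there is no need to track which blob flow originated from), so $M = m\log^{O(1)} n$. The clean-up solver routes any residual divergence along a fixed BFS tree of $G$ — an $(n,0)$-solver running in near-linear time; since $\beta = \eps^{1+\lg n}/\kappa \le \eps/(n\kappa)$ for $\eps \le 1/2$, the residual left by the $(1+\eps,\beta)$-solver (possibly after a few iterations of Sherman's framework) is small enough in $\ell_1$ norm that this crude clean-up adds only an $O(\eps)$ fraction of $\cost^*(G,b)$. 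Finally I would amplify: since the embedding succeeds only with constant probability, run the whole procedure $O(\log n)$ independent times and return the cheapest output flow — every run outputs a feasible flow (the clean-up guarantees this regardless of the embedding's quality) and any run with a good embedding outputs a $(1+\eps)$-approximation, so the minimum is a $(1+\eps)$-approximation with probability $1 - n^{-c}$ for any desired constant $c$, and the total time remains $m\eps^{-2}\log^{O(1)} n$.
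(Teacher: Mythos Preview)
Your proposal is correct and follows essentially the same route as the paper: Bourgain embedding, then the spanner/preconditioner machinery of Theorem~\ref{thm:main-simplified} to build $Q = Q^2 Q^1$, Sherman's framework with $\beta = \eps^{1+\lg n}/\kappa$, composition with a crude $(n,0)$-solver, and $O(\log n)$-fold repetition for high probability. In fact you supply more of the two-sided verification of \eqref{eq:kappa-general} than the paper does---the paper simply asserts $\kappa = \log^{O(1)} n$---and your chain (spanner cost $\leftrightarrow$ Euclidean transport cost on $\psi(V)$ $\leftrightarrow$ $G$-flow cost via the embedding's expansion/contraction) is the right justification; the parenthetical ``leaf cells being arbitrarily small'' is not literally true in the simplified construction of Theorem~\ref{thm:main-simplified}, but the resulting constant-factor slack is absorbed into the polylog anyway.
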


%By applying the procedure from Section~\ref{sec:recover} to change the support of a flow to be a forest, we can use the above algorithm to compute an approximate shortest path tree.
%\begin{corollary}
%There exists an algorithm that, given an undirected graph \(G = (V, E)\) with \(n\) vertices and \(m\) edges, an arbitrarily norm on \(\R^{\dartsof{E}}\), and a vertex \(s \in V\), runs in time \(m \eps^{-2} \log^{O(1)} n\)
%and returns a rooted spanning tree containing \((1 + \eps)\)-approximate shortest paths to every vertex in~\(V\).
%\end{corollary}

\section*{Acknowledgments}

The authors would like to thank the anonymous reviewers of earlier versions of this paper for their helpful and insightful comments.

\bibliographystyle{alpha} % was abbrv
\bibliography{refs}

\end{document}